\documentclass[sigconf,screen,nonacm]{acmart}
%%% macros 

\usepackage[utf8]{inputenc}
\usepackage{mathpartir}
\usepackage{bussproofs}
\usepackage{arydshln}

\usepackage[framemethod=tikz]{mdframed}
\newcommand{\jspace}{\vspace{-2mm}}

%%%% LONG VERSION OR NOT %%%%%
\newif\ifarxiv
\arxivfalse
%\arxivtrue

%%%%%%%%%%%%%%%%%%%%%%%%%%%%%%

%\usepackage{cite} 
%\usepackage{amsrefs} 
%\RequirePackage[numbers,sort&compress]{natbib} 
%=====================
% 	PACKAGES
%=====================

% Coloured text
\usepackage{color}

% Usa "latin1" oppure "utf8"
\usepackage[utf8]{inputenc}
\usepackage{subfigure}
\usepackage[english]{babel}
\sloppy

% Math
\usepackage{amsmath,amsfonts}
\usepackage{proof}
\usepackage{graphicx}
\usepackage{multicol}
\usepackage{listings}
\usepackage{url}
\usepackage{cmll}
\usepackage{algpseudocode}
\usepackage{algorithmicx}
\usepackage{wrapfig}
\usepackage{enumitem}
\usepackage{textcomp}
\usepackage{latexsym}
\usepackage{epsfig} 
\usepackage{xcolor}
\usepackage{bbm}
\usepackage{etoolbox}
%\usepackage{newtxtext,newtxmath}

% Added by Jorge
\usepackage{stmaryrd}
\usepackage{xspace}
\usepackage{amsthm}
\usepackage{cleveref}
%\usepackage{thmtools}
%\usepackage{MnSymbol}
%\usepackage{stmaryrd}
%\usepackage{graphicx,float,wrapfig,stfloats}
%\usepackage{pdfsync}
%\usepackage[textsize=small]{todonotes}
%\usepackage{fancyhdr}

% % Other packages in /packs
% \usepackage{packs/mathpartir}
% \usepackage{packs/proof-dashed}

% % My packs
% \input{packs/my_macros}
% \input{packs/prooftree}
%%%%%
% !TEX root = ../journal17dp.tex
%=====================
%   ORNELA DARDHA'S
%          MACROS
%=====================

%% TO COMMENT
%

\newcommand{\vasco}{\ensuremath{\pi_\mathsf{S}}\xspace}

\newcommand{\contcomp}{\circ} % context composition
\newcommand{\emptycont}{\ensuremath{\emptyset}}
\newcommand{\contun}[1]{\m{un} ( #1 ) }
\newcommand{\contlin}[1]{\m{lin} ( #1 )}
\newcommand{\Rec}[2]{\mu #1. #2}%%\newcommand{\Rec}[2][X]{\mu #1. #2}
\newcommand{\recur}[1]{ \ast\,#1  }
\newcommand{\dill}{ \vdash_{\textcolor{darkgray}{\ell}}}

\newcommand{\outsubj}[1]{ \m{os} ( #1 )  }
\newcommand{\nilT}{\ensuremath{{\bf end}}}
\newcommand{\st}{ \vdash_{\textcolor{darkgray}{\mathsf{s}}}}
\newcommand{\lt}{ \vdash_{\textcolor{darkgray}{\mathsf{w}}}}

\newcommand{\rlevel}[2]{ \ast \#^{#1} #2 }
\newcommand{\lts}[1]{ \xrightarrow[]{#1} }
\newcommand{\bn}[1]{\mathsf{bn}(#1)}   % bound names
\newcommand{\fn}[1]{\mathsf{fn}(#1)}   % free names
\newcommand{\names}[1]{\mathsf{n}({#1})}		   %names
\newcommand{\myl}{\ensuremath{l}}
\newcommand{\levelof}[1]{\myl({#1})}		   %level(_)
\newcommand{\lvlass}{\ensuremath{\diamond}}
\newcommand{\weight}[1]{\mathsf{wt}({#1})}		   %wt(_)
\newcommand{\wtless}{ \prec }            % wt(_) < wt(_)
            % wt(_) < wt(_)
\newcommand{\zerovec}{ \ensuremath{\mathtt{0}} }            % wt(_) < wt(_)

\newcommand{\dillcontrel}[2]{\ensuremath{\asymp^{\,#1}_{\,#2}}}
\newcommand{\vaslinunsplit}{\ensuremath{\circledast}}
\newcommand{\pidill}{\ensuremath{\pi_{\sf DILL}}\xspace}
\newcommand{\pilvl}{\ensuremath{\pi_{\sf W}}\xspace}
%%%%%%%
%           Vasco process's and types
%%%%%%%
\newcommand{\vasout}[3]{\ensuremath{  \ov #1 \out #2 . #3  }}
\newcommand{\vasin}[4]{\ensuremath{   #1 \  #2\inp #3.#4  }}
\newcommand{\vaspara}[2]{\ensuremath{ #1 \pp #2 }}
\newcommand{\vasres}[2]{\ensuremath{  \res {#1} #2  }}

%           types
\newcommand{\vassend}[2]{\ensuremath{  \oc #1 . #2  }}
\newcommand{\vasreci}[2]{\ensuremath{  \wn #1 . #2  }}
\newcommand{\vasred}{\longrightarrow}
\newcommand{\vaslang}{\ensuremath{\mathcal{S}}\xspace}

\newcommand{\fv}[1]{\mathsf{fv}(#1)}   % free var
\newcommand{\bv}[1]{\mathsf{bv}(#1)}   % free var

\newcommand{\sep}{\ | \ } 

\newcommand{\vastype}[1]{\ensuremath{\mathtt{[S{:}#1]} }}

\makeatletter
\def\@tempa#1{\@xp\@tempb\meaning#1\@nil#1}
\def\@tempb#1>#2#3 #4\@nil#5{%
  \@xp\ifx\csname#3\endcsname\mathaccent
    \@tempc#4?"7777\@nil#5%
  \else
    \PackageWarningNoLine{amsmath}{%
      Unable to redefine math accent \string#5}%
  \fi
}
\def\@tempc#1"#2#3#4#5#6\@nil#7{%
  \chardef\@tempd="#3\relax\set@mathaccent\@tempd{#7}{#2}{#4#5}}

\@tempa\widehat

%%%%%%%%%
%           levels
%%%%%%%%%

\newcommand{\lvlint}[3]{\ensuremath{ \#^{#1} ( #2 , #3) }}
\newcommand{\lvloutt}[3]{\ensuremath{ \#^{#1} \langle #2 , #3 \rangle  }}
\newcommand{\lvlunst}[2]{\ensuremath{ \ast \#^{#1} (#2)  }}

\newcommand{\lvlunct}[2]{\ensuremath{ \ast \#^{ #1  } \langle #2 \rangle  }}
\newcommand{\lvlinpoly}[4]{\ensuremath{  #1 \inp {\tilde{#2}}. #4  }}
\newcommand{\lvloutpoly}[4]{\ensuremath{  \ov #1 \out {\tilde{#2} }.#4  }}
\newcommand{\lvlservpoly}[4]{\ensuremath{  \bang #1 \inp {\tilde{#2} }. #4  }}
\newcommand{\lvlin}[4]{\ensuremath{  #1 \inp {{#2}, #3}. #4  }}
\newcommand{\lvlout}[4]{\ensuremath{  \ov #1 \out {{#2}, #3 }.#4  }}
\newcommand{\lvlserv}[4]{\ensuremath{  \bang #1 \inp {{#2}, #3 }. #4  }}
\newcommand{\lvlsplit}[2]{\ensuremath{ #1 :: #2   }}
\newcommand{\dualjoin}[2]{\ensuremath{ \langle #1 , #2 \rangle    }}

\newcommand{\lvlnilT}{\ensuremath{{\bf unit}  }}
\newcommand{\lvllang}{\ensuremath{\mathcal{W}}\xspace}
\newcommand{\lvltype}[1]{\ensuremath{\mathtt{[W{:}#1]} }}
%\newcommand{\}{\ensuremath{    }}
%\newcommand{\}{\ensuremath{    }}
%\newcommand{\}{\ensuremath{    }}
%\newcommand{\}{\ensuremath{    }}
%\newcommand{\}{\ensuremath{    }}
%\newcommand{\}{\ensuremath{    }}
%\newcommand{\}{\ensuremath{    }}

%%%%%%%%%
%           DILL
%%%%%%%%%

\newcommand{\dillint}[2]{\ensuremath{ #1  \lolli #2 }}
\newcommand{\dilloutt}[2]{\ensuremath{ #1 \tensor #2  }}
\newcommand{\dillunt}[1]{\ensuremath{  \bang #1 }}
\newcommand{\dillchoicet}[2]{\ensuremath{ #1 \oplus #2  }}
\newcommand{\dillselet}[2]{\ensuremath{ #1  \&   #2  }}
\newcommand{\dillnilT}{\ensuremath{ \one }}
\newcommand{\dillred}{\longrightarrow}
\newcommand{\dillin}[3]{\ensuremath{  #1 \inp {#2 }. #3  }}
\newcommand{\dillbout}[3]{\ensuremath{  \ov #1 (#2).#3 }}
\newcommand{\dillout}[3]{\ensuremath{  #1 \out { #2 }.#3 }}
\newcommand{\dillserv}[3]{\ensuremath{  \bang #1 \inp {#2  }. #3  }}
\newcommand{\dillchoice}[3]{\ensuremath{ #1. \mathbf{case} (#2 , #3)  }}
\newcommand{\dillselel}[2]{\ensuremath{ #1. \mathbf{inl} ;#2   }}
\newcommand{\dillseler}[2]{\ensuremath{ #1. \mathbf{inr} ;#2  }}
\newcommand{\dillforward}[2]{\ensuremath{ [ #1 \leftrightarrow #2 ] }}
\newcommand{\dillfwdbang}[2]{\ensuremath{ \bang [ #1 \leftrightarrow #2 ] }}
\newcommand{\dilllang}{\ensuremath{\mathcal{L}}\xspace}
\newcommand{\dillnotun}[1]{\ensuremath{{#1}^{\textcolor{red}{\dagger}} }}
\newcommand{\dillnotunsingular}[1]{\ensuremath{\dagger(#1)}}
\newcommand{\server}[1]{\ensuremath{\mathsf{svr} ( #1 )  }}
\newcommand{\client}[1]{\ensuremath{\mathsf{cli} ( #1 )  }}
\newcommand{\notserver}[1]{\ensuremath{\neg \server{#1}  }}
\newcommand{\notclient}[1]{\ensuremath{ \neg \client{#1}  }}
\newcommand{\dilltype}[1]{\ensuremath{\mathtt{[L{:}#1]} }}
\newcommand{\ttrue}{\ensuremath{\mathtt{true}}\xspace}
\newcommand{\ffalse}{\ensuremath{\mathtt{false}}\xspace}
%\newcommand{\}{\ensuremath{    }}
%\newcommand{\}{\ensuremath{    }}
%\newcommand{\}{\ensuremath{    }}

%%% Honda's syntax %%%
%\newcommand{\hondaout}[3]{}

%%%%%%%%%%%%%%%%%%%%%%%%%%%%%%%%%
%%%%%%%%%%%%%%%%%%%%%%%%%%%%%%%%%
%%%%%                       %%%%%
%%%%%       Encodings       %%%%%
%%%%%                       %%%%%
%%%%%%%%%%%%%%%%%%%%%%%%%%%%%%%%%
%%%%%%%%%%%%%%%%%%%%%%%%%%%%%%%%% 
\newcommand{\sttoltt}[3]{ \textcolor{blue}{\llparenthesis} #1 \textcolor{blue}{\rrparenthesis}_{#2}^{#3}}
\newcommand{\sttoltp}[2]{\ensuremath{\textcolor{blue}{\langle\!|} #1 \textcolor{blue}{|\!\rangle}^{#2}}}
\newcommand{\sttoltj}[3]{\textcolor{blue}{\left\llbracket \textcolor{black}{#1} \right\rrbracket}_{#2}^{#3}}

\newcommand{\sttodillt}[1]{ \textcolor{red}{\llparenthesis} #1 \textcolor{red}{\rrparenthesis}}
\newcommand{\sttodillp}[1]{\ensuremath{\textcolor{red}{\langle\!|} #1 \textcolor{red}{|\!\rangle}}}
\newcommand{\sttodillj}[1]{\textcolor{red}{\left\llbracket \textcolor{black}{#1} \right\rrbracket}}

\newcommand{\abconditionO}[6]{\ensuremath{\dagger}}
\newcommand{\abconditionT}[7]{\ensuremath{\ddagger}}
\newcommand{\abconditionTstar}[9]{\ensuremath{\ddagger^\star}}

%%%%%%%%%%%%%%%%%%%%%%%%%%%%%%%%%
%%%%%%%%%%%%%%%%%%%%%%%%%%%%%%%%%
%%%%%                       %%%%%
%%%%%     proof stuff       %%%%%
%%%%%                       %%%%%
%%%%%%%%%%%%%%%%%%%%%%%%%%%%%%%%%
%%%%%%%%%%%%%%%%%%%%%%%%%%%%%%%%% 

%\newcommand{\}{\ensuremath{    }}
%\newcommand{\}{\ensuremath{    }}
%\newcommand{\}{\ensuremath{    }}
%\newcommand{\}{\ensuremath{    }}
%\newcommand{\}{\ensuremath{    }}
%\newcommand{\}{\ensuremath{    }}
%\newcommand{\}{\ensuremath{    }}
%\newcommand{\}{\ensuremath{    }}
%\newcommand{\}{\ensuremath{    }}

%% GENERAL COMMANDS 
%

\newcommand{\m}[1]{\mathsf{#1}}

%\newcommand{\defeq}{\stackrel{{\mathtt {def}}}{=}}

%\newcommand{\st}{\mathtt {\small SType}}

%\newcommand{\compsi}{\, \maltese\,}

%%  PROCESSES and VALUES
%

\newcommand{\bang}{\boldsymbol{!}}

\newcommand{\nil}{{\mathbf{0}}}

\newcommand{\out}[1]{\langle #1\rangle}
\newcommand{\inp}[1]{(#1)}
\newcommand{\res}[1]{({ \nu} #1)}
\newcommand{\pp}{\ {|}\ }
%

%

%

%\newcommand{\vv}[2]{\mathnormal{l}_{#1}\_ #2}

%%  TYPES
%
%\newcommand{\variant}[2]{\langle {#1}_i:#2_i \rangle_{i\in I}}

%\newcommand{\nilT}{{\lsttxt{end}}}
%\newcommand{\nilT}{\ensuremath{{\bf end}}}

\newcommand{\un}{\m{un}}
\newcommand{\lin}{\m{lin}}

 %base type
%\newcommand{\lt}{\mathsf{lcoco}}

%\newcommand{\st}{\mathsf{scoco}}

%\newcommand{\ob}{\omicron}

%%  RULES
%
\newcommand{\Didtext}[1]{{\scriptsize{\textsc{#1}}}}
\newcommand{\Did}[1]{({\Didtext{#1}})}
%\newcommand{\lf}{\vdash^{n}_{\mathtt{D}}}
%\newcommand{\lf}{\vdash^{n}_{\prec}}

 %% mlf with a parameter

 %% lf with a parameter

%\newcommand{\cp}{\vdash}

%%  FUNCTIONS and RELATIONS
%
   % dio = dual input/output
   % closure of a process
%\newcommand{\fn}[1]{\mathsf{fn}(#1)}   % free names

\newcommand{\dom}[1]{\m{dom}(#1)}

%%  ENCODING
%

%\newcommand{\chr}[1]{\ensuremath{\{\!\!\!\{ #1 \}\!\!\!\}}}

%\newcommand{\ency}[2]{\llbracket #1 \rrbracket_{\mathnormal{f},\{y\mapsto {#2}\}}}

%%% New rewriting

%% TAKEN FROM SEMANTIC SUBTYPING
%

%%%

%%%% COMMANDS BY JORGE A. PEREZ (July 26, 2014)
%%%

\newcommand{\lolli}{\mathord{\multimap}}
\newcommand{\tensor}{\otimes}

\newcommand{\one}{\ensuremath{{\bf{ 1}}}}

\newcommand{\dual}[1]{\overline{#1}}

\newcommand{\ov}[1]{\overline{#1}}

\def\substj#1#2{[\raisebox{.5ex}{\small$#1$}\! / \mbox{\small$#2$}]}

\newcommand{\tra}[1]{\xrightarrow{#1}}
\newcommand{\redd}{\tra{~~~}}

% added and need to be organised
%\newcommand{\un}{\Delta}
%\newcommand{\lin}{\Delta}

\crefname{enumi}{Part}{Parts}
\crefname{section}{\S\!\!\,}{\S\!\!\,}%
\crefname{subsection}{\S\!\!\,}{\S\!\!\,}%
\crefname{subsubsection}{\S\!\!\,}{\S\!\!\,}%
\crefname{appendix}{\S\!\!\,}{\S\!\!\,}

\newtheorem{theorem}{Theorem}[section]

\newtheorem{definition}{Definition}[section]
\newtheorem{notation}{Notation}[section]
\newtheorem{remark}{Remark}
\newtheorem{example}{Example}[section]
\newtheorem{proposition}{Proposition}[section]
\newtheorem{corollary}{Corollary}[section]

%%
%% \BibTeX command to typeset BibTeX logo in the docs
%\AtBeginDocument{%
%  \providecommand\BibTeX{{%
%    Bib\TeX}}}

%% Rights management information.  This information is sent to you
%% when you complete the rights form.  These commands have SAMPLE
%% values in them; it is your responsibility as an author to replace
%% the commands and values with those provided to you when you
%% complete the rights form.

\copyrightyear{2023} 
\acmYear{2023} 
\setcopyright{acmlicensed}\acmConference[PPDP 2023]{International Symposium on Principles and Practice of Declarative Programming}{October 22--23, 2023}{Lisboa, Portugal}
\acmBooktitle{International Symposium on Principles and Practice of Declarative Programming (PPDP 2023), October 22--23, 2023, Lisboa, Portugal}
\acmPrice{15.00}
\acmDOI{10.1145/3610612.3610615}
\acmISBN{979-8-4007-0812-1/23/10}

\begin{document}

%%
%% The "title" command has an optional parameter,
%% allowing the author to define a "short title" to be used in page headers.
\title{Termination in Concurrency, Revisited}

%%
%% The "author" command and its associated commands are used to define
%% the authors and their affiliations.
%% Of note is the shared affiliation of the first two authors, and the
%% "authornote" and "authornotemark" commands
%% used to denote shared contribution to the research.

\author{Joseph W. N. Paulus}
\orcid{0000-0002-1711-9361}
\affiliation{%
  \institution{University of Groningen}
  \streetaddress{1 Th{\o}rv{\"a}ld Circle}
  \city{Groningen}
  \country{The Netherlands}}
%\email{email@affiliation.org}

\author{Jorge A. Pérez}
\orcid{0000-0002-1452-6180}
\affiliation{%
  \institution{University of Groningen}
  \streetaddress{1 Th{\o}rv{\"a}ld Circle}
  \city{Groningen}
  \country{The Netherlands}}
%\email{email@affiliation.org}

\author{Daniele Nantes-Sobrinho}
\orcid{0000-0002-1959-8730}
\affiliation{%
  \institution{Imperial College London}
  \streetaddress{1 Th{\o}rv{\"a}ld Circle}
  \city{London}
  \country{UK}
  }
 \additionalaffiliation{% 
  \institution{University of Bras\'{i}lia, Brazil}
  \city{Bras\'{i}lia}
  \country{Brazil}
  }
%\email{dnantess@imperial.ac.uk}

%%
%% By default, the full list of authors will be used in the page
%% headers. Often, this list is too long, and will overlap
%% other information printed in the page headers. This command allows
%% the author to define a more concise list
%% of authors' names for this purpose.
\renewcommand{\shortauthors}{Paulus et al.}

%%
%% The abstract is a short summary of the work to be presented in the
%% article.
\begin{abstract}
Termination is a central property in sequential programming models: a term is terminating if all its reduction sequences are finite. 
Termination is also important in concurrency in general, and for message-passing programs in particular.
A variety of type systems that enforce termination by typing have been developed. %in the last 20 years. 
%Unfortunately, %the precise relation between those type systems remains poorly understood; in particular, 
%we still understand little about the relative  strengths of  different static mechanisms for enforcing termination. 
In this paper, we rigorously compare several type systems for $\pi$-calculus processes from the unifying perspective of termination.
Adopting \emph{session types} as reference framework, we consider two different type systems: one follows Deng and Sangiorgi's weight-based approach; the other is Caires and Pfenning's Curry-Howard correspondence between linear logic and session types. 
Our technical results precisely connect these very different type systems, and shed light on the classes of client/server interactions they admit as correct. 
\end{abstract}

%%
%% The code below is generated by the tool at http://dl.acm.org/ccs.cfm.
%% Please copy and paste the code instead of the example below.
%%
\begin{CCSXML}
<ccs2012>
<concept>
<concept_id>10003752.10003753.10003761.10003764</concept_id>
<concept_desc>Theory of computation~Process calculi</concept_desc>
<concept_significance>500</concept_significance>
</concept>
<concept>
<concept_id>10003752.10010124.10010125.10010130</concept_id>
<concept_desc>Theory of computation~Type structures</concept_desc>
<concept_significance>500</concept_significance>
</concept>
</ccs2012>
\end{CCSXML}

\ccsdesc[500]{Theory of computation~Process calculi}
\ccsdesc[500]{Theory of computation~Type structures}

%%
%% Keywords. The author(s) should pick words that accurately describe
%% the work being presented. Separate the keywords with commas.
\keywords{Concurrency, Process Calculi, Session Types, Expressiveness}
%% A "teaser" image appears between the author and affiliation
%% information and the body of the document, and typically spans the
%% page.
%\begin{teaserfigure}
%  \includegraphics[width=\textwidth]{sampleteaser}
%  \caption{Seattle Mariners at Spring Training, 2010.}
%  \Description{Enjoying the baseball game from the third-base
%  seats. Ichiro Suzuki preparing to bat.}
%  \label{fig:teaser}
%\end{teaserfigure}

%\received{20 February 2007}
%\received[revised]{12 March 2009}
%\received[accepted]{5 June 2009}

%%
%% This command processes the author and affiliation and title
%% information and builds the first part of the formatted document.
\maketitle

\section{Introduction}
The purpose of this paper is to present the first comparative study of type systems that enforce \emph{termination} for message-passing processes in the $\pi$-calculus, the paradigmatic model of concurrency.
%Our study involves a number of technical challenges, as we explain next. 

Termination is a cornerstone of sequential programming models: a term is terminating if all its reduction sequences are finite. 
Termination is also an important property in concurrency in general, and in message-passing programs in particular. 
In such a setting, infinite sequences of internal steps are rather undesirable, as they could jeopardize the reliable interaction between a process and its environment. 
That is, we would like processes that exhibit \emph{infinite} sequences of observable actions, possibly intertwined with \emph{finite} sequences of internal/unobservable steps (i.e., reductions). 

In the (un)typed  $\pi$-calculus, infinite behavior can be  expressed via operators for recursion (or recursive definitions) or replication. 
We are interested in replication, and in particular in \emph{input-guarded} replication, denoted $!x(y).P$.
Input-guarded replication neatly captures the essence of \emph{servers} that are persistently available to spawn interactive behavior upon invocations by concurrent \emph{clients}.
This way, it   precisely expresses the controlled invocation of (shared) resources. 
To understand its operation, let us write $x\langle z\rangle$ to denote an output prefix, intended as an invocation to a server such as $!x(y).P$.
The corresponding reduction rule is then roughly as follows:
$$!x(y).P \pp x\langle z\rangle.Q ~\longrightarrow~ !x(y).P \pp P\substj{z}{y} \pp Q $$
Thus, after a synchronization on $x$, the server $!x(y).P$ continues to be available, and a copy of $P$ is spawned (where $\substj{z}{y}$ denotes the substitution of $y$ with $z$, as usual), enabling interaction with $Q$.

In this setting, an obvious source of non-terminating behaviors is when clients and servers invoke each other indefinitely. 
This situation arises, in particular,  when client invocations occur in the body of a server, which can easily trigger infinite ``ping-pong'' reductions, as in the following process (where $\nil$ denotes inaction):
\begin{equation}
!x(y).x\langle y\rangle.\nil \pp x\langle w\rangle.\nil ~\longrightarrow~ !x(y).x\langle y\rangle.\nil \pp x\langle w\rangle.\nil \pp \nil ~\longrightarrow~ \cdots 
\label{eq:ping}	
\end{equation}
The challenge of statically ruling out processes such as \eqref{eq:ping} while enabling expressive client/server interactions has been addressed by multiple authors via various type systems, see, e.g.,~\cite{DBLP:conf/lics/YoshidaBH01,DBLP:conf/ifipTCS/DengS04,DBLP:journals/mscs/Sangiorgi06,DBLP:conf/concur/DemangeonHS10,DBLP:journals/toplas/KobayashiS10,DBLP:journals/fuin/Piccolo12,DBLP:conf/tgc/ToninhoCP14,DBLP:journals/pacmpl/LagoVMY19}. 
Their underlying approaches are vastly diverse.
For instance, 
  Yoshida et al.~\cite{DBLP:conf/lics/YoshidaBH01} adopt a type-theoretical approach  based on logical relations and linear action types.
 Deng and Sangiorgi~\cite{DBLP:conf/ifipTCS/DengS04} transport ideas from rewriting systems (well-founded measures) into a $\pi$-calculus with simple types.
Caires and Pfenning's Curry-Howard correspondence between linear logic and session types~\cite{CairesP10} represents yet another approach: their type system enforces termination based purely on proof-theoretical principles, by  interpreting the exponential `$!A$' as the type of a server and by connecting cut elimination with process synchronization.
Several natural questions arise. How do these type disciplines compare? What are their relative strengths? More concretely, are there terminating processes detected as such by one type system but not by some other? If so, where is the difference?

As inviting and intriguing these questions are, a technical approach to a formal comparison is far from obvious. 
An immediate obstacle concerns the underlying formal models: all the type systems mentioned above operate on \emph{different dialects} of the $\pi$-calculus, involving, e.g., synchronous/asynchronous communication, and monadic/polyadic message passing. These differences quickly escalate at the level of the respective type systems, with the presence/absence of \emph{linearity} unsurprisingly playing a key distinguishing role. How do we even start formulating the intended comparison?

We frame our formal comparison as follows.
As baseline for comparison we take the   $\pi$-calculus processes typable with Vasconcelos's session type system~\cite{V12}.  This is a quite liberal type system, which induces a  broad class of session processes (including non-terminating ones), which is convenient for our purposes. 
In the following, this baseline class of processes is denoted $\vaslang$. 

We then consider two representative classes of processes, both terminating by typing. One is based on Deng and Sangiorgi's \emph{weight-based} type system; the other is Caires and Pfenning's linear-logic type system. Because these type systems are so different from Vasconcelos's, to connect them with \vaslang we require  typed translations. This leads to two classes of terminating processes:
%%\begin{wrapfigure}{t}{0.3\textwidth}
%\begin{figure}[!t]
%\includegraphics[width= .3\textwidth]{PPDP-2.jpg}
%%\end{wrapfigure}
%\end{figure}
\begin{itemize}
	\item  \lvllang contains all processes in \vaslang (i.e., typable under 
	Vasconcelos's type system) which are also typable (up to a translation) under the weight-based type system.
	\item  \dilllang contains all processes in \vaslang  which are also typable (up to another translation) by the Curry-Howard correspondence.
\end{itemize}

 This way, because Vasconcelos's  system can type non-terminating processes, both $\lvllang \subset \vaslang$ and $\dilllang \subset \vaslang$ hold by definition. 
Our technical contributions are two-fold. 
\begin{enumerate} 
	\item 
Because the type systems by Vasconcelos and by Deng and Sangiorgi are so different, to define \lvllang we develop a \emph{new weight-based type system} that combines elements from both: it ensures termination by enforcing well-founded measures (as Deng and Sangiorgi's) while accounting for linearity and sessions (as Vasconcelos's). The translation involved in bridging \vaslang and this new type system determines a technique for ensuring termination of session-typed processes, which is new and of independent interest.
\item We prove that $\dilllang \subset \lvllang$
but 
$\lvllang \not \subset \dilllang$, thus determining the exact relationship between  these classes of typed processes.
Our discovery is that there are terminating session-typed processes that are typable with the weight-based approach but not under the Curry-Howard correspondence. 
In other words, techniques based on well-founded measures turn out to be more powerful for enforcing termination than proof-theoretical foundations.
\end{enumerate}
Next, we introduce the class \vaslang.
\Cref{s:weight} develops the new weight-based type system and \Cref{s:lvllang} studies its corresponding class \lvllang.
The Curry-Howard correspondence for concurrency is recalled in \Cref{s:pas}, and its corresponding class \dilllang is presented in \Cref{s:dilllang}.
Finally, \Cref{s:close} collects concluding remarks.
%\ifarxiv
%The appendix contains omitted technical material.
%\else
%The full version of the paper~\cite{X} contains omitted technical material.
%\fi 
\section{The Class $\vaslang$ of Session Processes}

We present the process language that we shall consider as reference in our comparisons, and its corresponding session type system.
We distinguish between (i)~the processes induced by this process model and (ii)~the class of well-typed processes (\Cref{d:vaslang}); in the following, these classes are denoted by \vasco and $\vaslang$, respectively.
We consider the type system by Vasconcelos~\cite{V12}, which ensures communication safety and session fidelity, but not progress/deadlock-freedom nor termination. 
Our presentation closely follows~\cite{V12}, pointing out differences where appropriate. 

\subsection{The Process Model \vasco}
\label{ss:pm}

\begin{definition}[Processes]
	\label{def:vascosyntax}
	Let  $x, y, \ldots$ range over \emph{variables}, denoting \emph{channel names} (or \emph{session endpoints}), and $v, v', \ldots$ over \emph{values}; 
for simplicity, the sets of values and variables coincide.
%\jpedit{In examples, we use $\unit$ to denote a terminated channel that cannot be further used.}
Also, let  $P, Q, \ldots$ range over \emph{processes}, defined by the grammar of \Cref{f:vascosyntax}, which induces the class \vasco.
\end{definition}

\begin{figure}[t!]
	\begin{mdframed}
	\[
	\begin{aligned}
		P,Q ::= &								&& \mbox{(Processes)}\\
				&  \vasout{x}{v}{P}				&& \mbox{(output)} 
				&&  \vasin{q}{x}{y}{P}			&& \mbox{(input)} \\
				&  \vaspara{P}{Q}  		  		&& \mbox{(composition)} 
				&&  \vasres{xy}{P}				&& \mbox{(restriction)} \\
				&  \nil	     					&& \mbox{(inaction)} \\
		q   ::= &								&& \mbox{(Qualifiers)}\\
				& \lin 							&& \mbox{(linear)}
				&& \un 							&& \mbox{(unrestricted)}\\
		v	::= &								&& \mbox{(Values)}\\
				& x								&& \mbox{(variables)}\\
	\end{aligned}
	\]
	\end{mdframed}
	\jspace
	\caption{Syntax of the session $\pi$-calculus \vasco}
	\label{f:vascosyntax}
	\jspace
\end{figure}

The output process $\vasout{x}{v}{P}$ sends value $v$ across channel $x$ and then continues as $P$. 
In the input process $ \vasin{q}{x}{y}{P}$, the qualifier $q$ can be either $\lin$ (denoting a linear input) or $\un$ (denoting an unrestricted input, i.e., a replicated server). 
In either case, $x$ expects to receive a value that will replace free occurrences of $y$ in $P$. 
Parallel composition $\vaspara{P}{Q}$ denotes the concurrent execution of processes $P$ and $Q$. 
The process $\vasres{xy}{P}$ denotes the restriction of the \emph{co-variables} $x$ and $y$ with scope $P$. This declares them as dual endpoints, which are expected to behave complementarily to each other. 
We write $\res {zv:S} P $ when either $z$ or $v$ have session type $S$ in $P$.
As we will see, a synchronization  always occurs across a pairs of co-variables.
Finally, the inactive process is denoted as $\nil$.

As usual, the set of free variables in a process $P$ is denoted $\fv{P}$, and similarly $\bv{P}$ for bound variables. 
The capture-free substitution of the variable $z$ by the value $v$ is denoted as $\substj{v}{z}$. We adopt Barendregt's variable convention.

With respect to \cite{V12}, the above the process syntax leaves out boolean values, conditional expressions, and labeled choices, which are all inessential for our comparative study of termination. 

\begin{definition}[Reduction Semantics]
The reduction relation $\vasred$ of \vasco is defined in \Cref{f:vascosymantics}.
\end{definition}

\begin{figure}[t!]
	\begin{mdframed}[leftmargin = -3mm, rightmargin = -3mm]
	\[
		\begin{aligned}
			\vaspara{P}{Q} &\equiv \vaspara{Q}{P}
			&
			\vaspara{P}{\nil} & \equiv P
			\\
			 \vaspara{(\vaspara{P}{Q})}{R} &\equiv \vaspara{P}{(\vaspara{Q}{R})} 
			&
			\vasres{xy}{\nil} & \equiv \nil
			\\
			\vasres{xy}\vasres{zw}{P} & \equiv \vasres{zw}\vasres{xy}{P}
			&
			\!\!\vasres{xy}{P} & \equiv \vasres{yx}{P}~ 
			\\
			\vaspara{(\vasres{xy}P)}{Q} & \equiv \vasres{xy}(\vaspara{P}{Q}) \,\text{[$x,y \not \in \fv{Q}$]}
		\end{aligned}
	\]
	\hrule
	\smallskip
	\[
		\begin{array}{rll}  
		\Did{R-LinCom} & 
			\vasres{xy}{(  
					\vaspara{\vasout{x}{v}{P}}
						{\vaspara{\vasin{\lin}{y}{z}{Q}}
							{R}} 
				   )}  
			\\
			& \quad \vasred \vasres{xy}{(
					\vaspara{P}
						{\vaspara{Q\substj{v}{z}}
							{R}}
	
				)}
			\\[1mm] 
		\Did{R-UnCom} & 
			\vasres{xy}{(
					\vaspara{\vasout{x}{v}{P}}
						{\vaspara{\vasin{\un}{y}{z}{Q}}
							{R}}
				)} 
				\\
			& \qquad \vasred
			\vasres{xy}{(
					\vaspara{P}
						{\vaspara{Q\substj{v}{z}}
							{\vaspara{\vasin{\un}{y}{z}{Q}}
								{R}}
						}
				)}
			\\[1mm]  
			\Did{R-Par}&		{P\vasred Q}\Longrightarrow
								{\vaspara{P}{R} \vasred \vaspara{Q}{R}}
		\\[1mm]
		\Did{R-Res}&	{P\vasred Q} \Longrightarrow
								{\vasres{xy}{P} \vasred \vasres{xy}{Q}}
		   \\[1mm]
		\Did{R-Str}& 	{P\equiv P',\ P\vasred Q,\ Q'\equiv Q}
									\Longrightarrow
									{P'\vasred Q'} & 
	%								\vspace{2mm}\\
		\end{array}
	\]
	\end{mdframed}
	\jspace
	\caption{Reduction semantics for \vasco}
	\label{f:vascosymantics}
	\jspace
\end{figure}

The reduction semantics for \vasco follows standard lines for (session) $\pi$-calculi; it is closed under a structural congruence, denoted $\equiv$, which captures expected principles for parallel composition and restriction.
The reduction rule $\Did{R-LinCom}$ captures the linear communication across co-variables $x$ and $y$, appropriately declared by restriction, in which value $v$ is exchanged.
Similarly, rule $\Did{R-UnCom}$ denotes unrestricted communication across co-variables; in this case, the input prefix is persistent, and remains ready for further synchronizations after reduction.
The contextual rules $\Did{R-Par}$ and $\Did{R-Res}$ express that concurrent processes can reduce within the scope of parallel composition and restriction. Finally, rule $\Did{R-Str}$  denotes that reductions are closed under structural congruence.

\subsection{Session Types}\label{ss:typesess}
%\checkthis{JP: We need to explain exactly how this system relates to that in ~\cite{V12}, and move all relevant statements to the main text.}
%
%\checkthis{The most important thing to discuss here is the fact that Vasco uses pretypes and lin/un, instead we only consider linear ones. The only un type is end.}

We endow \vasco with the session type system by Vasconcelos~\cite{V12}, which ensures that well-typed processes respect their protocols but does not ensure deadlock-freedom nor termination guarantees.
With respect to the syntax of types in~\cite{V12},  we only consider channel endpoint types (no ground types such as \textsf{bool}).

\begin{definition}[Session Types]
The syntax of {session types} ($T, S, \ldots$) is given in \Cref{f:vascosessiontypes}.
\end{definition}

\begin{figure}[t!]
	\begin{mdframed}
	\[
%	\begin{aligned}
%		q   ::= &								&& \mbox{(Qualifiers)}\\
%				& \lin 							&& \mbox{(linear)}\\
%				& \un 							&& \mbox{(unrestricted)}\\
%		p   ::= &								&& \mbox{(Pretypes)}\\
%				& \vasreci{T}{S}				&& \mbox{(receive)}\\
%				& \vassend{T}{S}				&& \mbox{(send)}\\
%		T,S ::= &								&& \mbox{(Types)}\\
%				& \nilT 						&& \mbox{(termination)}\\
%				& q \ p 						&& \mbox{(qualified pretypes)}\\
%				& a 							&& \mbox{(type variable)}\\
%				& \Rec{a}{T}					&& \mbox{(recursive types)}\\
%		\Gamma ::= &							&& \mbox{(Contexts)}\\
%				& \emptyset						&& \mbox{(empty context)}\\
%				& \Gamma , x:T					&& \mbox{(assumption)}\\
%	\end{aligned}
	\begin{aligned}
		q   ::= &	&							
		 \mbox{(Qualifiers)}
		 & & 
		 		T,S ::= &	&							 \mbox{(Types)} &&
		\\
				& \lin 							& \mbox{(linear)}
				& & &
				\nilT 						& \mbox{(termination)}
				\\
				& \un 							& \mbox{(unrestricted)}
				& & & q \ p  & \mbox{(pretypes)}
				\\
		p   ::= &								& \mbox{(Pretypes)}
		& & & 
		a & \mbox{(type variable)}
		\\
				& \vasreci{T}{S}				& \mbox{(receive)}
				& & & 
				\Rec{a}{T}	 & \mbox{(recursive types)}
				\\
				& \vassend{T}{S}				& \mbox{(send)}
				& & & 
				 & 
		\\
			\Gamma	 ::= &  						& \mbox{(Contexts)}
				& & & &
				\\
				& \emptyset						& \mbox{(empty)}
				& & & &
				\\
				& \Gamma , x:T
				& \mbox{(assumption)} & 
	\end{aligned}
	\]
	\end{mdframed}
	\jspace
	\caption{Session Types of \vasco}
	\label{f:vascosessiontypes}
	\jspace
\end{figure}

Session types $T, S$ describe protocols as \emph{sequences} of actions for an endpoint; they do not admit the parallel usage of an endpoint.
They have the following forms:
\begin{enumerate}
	\item Type $\nilT$ is given to an endpoint with a completed protocol.
	
\item Type $ q \ p $ denotes pre-type $p$ with qualifier $q$, which indicates either a linear or an unrestricted behavior ($\lin$ and $\un$, respectively). 
The pre-type $\wn T.S$ is given to an endpoint that first receives a value of type $T$ and then continues according to type~$S$.
	Dually, the pre-type $\oc T.S$ is intended for an endpoint that first outputs a value of type $T$ and then continues according to  $S$.
	
	\item Type $\Rec{a}{T}$ is a recursive type, with type variable $a$. A recursive type is required to be \emph{contractive}, {i.e., it contains no subexpression of type $\Rec{a_1}\ldots \Rec{a_n}a_1$}; and $a$ is bound with scope $T$. Notions of bound and free type variables, alpha-conversion and capture-avoiding substitutions (denoted $\substj{S}{a}$) is defined as usual. Type equality is based on regular infinite trees \cite{V12}.
\end{enumerate} 

Recursive types that are \emph{tail-recursive} are expressive enough to type servers and clients; we have a dedicated notation for them.

\begin{notation}[Server and Client Types]\label{note:rectype}
	We shall write $\recur{\wn T}$ to denote the server type $ \Rec{a}{\un\  \wn T.a}$, where variable $a$ does not occur in $T$. 
	Similarly, we write $\recur{\oc T}$ to denote the client type $ \Rec{a}{\un\  \oc T.a}$
\end{notation}

In the following, we shall work  with tail-recursive types only.
A central notion in session-based concurrency is  \emph{duality}, 
which relates session types offering opposite (i.e., complementary) behaviors; it stands at the basis of communication safety and session fidelity. 

\begin{definition}[Duality]
	Given a (tail-recursive) session type $T$, its dual type $\dual{T}$ is defined as follows:
	\begin{displaymath}
		\begin{array}{rclrclrcl}
			\dual{\nilT} &=&  \nilT &\quad  \dual{\oc T.S} &=& \wn T .\dual{S} & \quad \dual{\recur{\wn {T}}} &=&  \recur{\oc {T}}
			\\
			\dual{q \ p} &=& q \ \dual{p }&
			\dual{\wn T.S} &=& \oc T.\dual{S} &
			\dual{ \recur{\oc {T}}} &=& \recur{\wn {T}} 
		\end{array}
		\end{displaymath}
%	\begin{displaymath}
%	\begin{array}{rll}
%		\dualbh{\nilT, \sigma} &\defeq& \nilT\\
%		\dualbh{q \ p, \sigma} &\defeq& q \ \dualbh{p, \sigma }\\
%		\dualbh{\oc T.S, \sigma} &\defeq& \wn (T \ \sigma).\dualbh{S, \sigma}\\
%		\dualbh{\wn T.S, \sigma} &\defeq& \oc (T \ \sigma).\dualbh{S, \sigma}\\
%		\dualbh{a, \sigma} &\defeq& a\\
%		\dualbh{\Rec{a}{T}, \sigma} &\defeq& \Rec{a}{\dualbh{T, \substj{\Rec{a}{T}}{a} ; \sigma}} \
%	\end{array}
%	\end{displaymath}

%	duality of a session type $T$ is given by $\dual{T} = \dualbh{T} =  \dualbh{T, \epsilon}$
\end{definition}

We now collect definitions and results from~\cite{V12} that will lead to state the main properties of typable processes. 

\begin{definition}[Predicates on Types/Contexts]\label{def:pred_cont}
We consider two predicates on types, denoted $\contlin{T}$ and $\contun{T}$, defined as follows:
\begin{itemize}
	\item $\contun{T}$ if and only if $T = \nilT$ or $T = \un \ p$. 
	\item $\contlin{T}$ if and only if $\ttrue$.
\end{itemize}
The definition extends to contexts as follows: we write
$q(\Gamma)$ if and only if $x:T \in \Gamma$ implies $q(T)$.
\end{definition}

This way, to express that $T$ defines strictly linear behavior we write $\neg \contun{T}$ (and similarly for a context $\Gamma$).
The following notation is useful to separate the linear and unrestricted portions of a context:
\begin{notation}\label{note:sep}
We write $\Gamma \vaslinunsplit \Gamma'$ if $\contun{\Gamma} \land \neg \contun{\Gamma'}$.
\end{notation}

%unrestricted types can behave linearly and hence the linear predicate is always true however linear types only have the 

\begin{definition}[Context Split and Update]\label{d:splitsts}
The split and update operations on contexts, denoted $\contcomp$ and $+$, are defined as follows.
	\begin{mathpar}
		\inferrule{}
		{ \emptycont \contcomp \emptycont = \emptycont }
		\and
		\inferrule{ \Gamma_1 \contcomp \Gamma_2 = \Gamma \\ \contun{T}}
		{\Gamma , x:T = ( \Gamma_1 , x:T ) \contcomp ( \Gamma_2 , x: T ) }
		\and
		\inferrule{ \Gamma_1 \contcomp \Gamma_2 = \Gamma }
		{ \Gamma , x: \lin\  p = (\Gamma_1 , x: \lin\  p) \contcomp \Gamma_2 }
		\and
		\inferrule{  \Gamma_1 \contcomp \Gamma_2 = \Gamma }
		{ \Gamma , x: \lin \  p = \Gamma_1 \contcomp ( \Gamma_2 , x: \lin \ p ) }
		\and
		\inferrule{ x:U \not \in \Gamma }
		{ \Gamma + x:T = \Gamma , x:T }
		\and
		\inferrule{ \contun{T} }
		{ ( \Gamma , x:T ) + x:T = ( \Gamma , x:T  ) }
	\end{mathpar}
\end{definition}

The typing system considers two kinds of judgments, for processes and for variables, denoted  $\Gamma \st P$ and $\Gamma\st x:T$, respectively. 
We write $~ \st P$ when $\Gamma$ is empty.
The typing rules are given in \Cref{f:vasorigrules}.
% \jpedit{Need explanations for the rules.}
We will explain Rule \vastype{In}: it  is parametric on the qualifiers $q_1$ and $q_2$  and  covers three different behaviours depending on whether $q_i$ is $\lin$ or $\un$, for $i=1,2$. In the case $q_1=\lin$, to prove $\Gamma_1 \contcomp \Gamma_2 \st\lin\ x\inp y.P$, we need to prove $\Gamma_1 \st x: q_2 \wn T.S $ and $ (\Gamma_2 + x: S), y:T \st P$; note  that $\lin (\Gamma_1\contcomp \Gamma_2)$ is true, by \Cref{def:pred_cont}.   In the case $q_2=\lin$, both judgments hold if $\Gamma_1=\Gamma_1',  x: \lin \wn T.S$, the assignment $x: \lin \wn T.S$ does not occur in $\Gamma_2$, by \Cref{d:splitsts},  and $x:S$ is added to $\Gamma_2$ for the continuation.   
Differently, when $q_2=\un$, both judgments hold if $\Gamma_1=\Gamma_1',  x: \recur{\wn T}$, the assignment $x: \recur{\wn T}$ also occurs in $\Gamma_2$ which with the addition of $x:S$ in $\Gamma_2$ implies $S = \recur{\wn T}$. Notice that the case $q_1=\un$ and $q_2=\lin$ is not possible since $\un (\Gamma_1\contcomp \Gamma_2)$ implies that all assignments in $\Gamma_1\contcomp \Gamma_2$ have types $\nilT$ or with `$\un$'; thus, in that case we cannot prove $\Gamma_1 \st x: \lin \wn T.S $.

 Similarly, Rule \vastype{Out} is parametric on the qualifier $q$.

	\begin{figure}[!t]
	\begin{mdframed}
	\begin{mathpar}
		\inferrule* [ left = \vastype{Var} ]{ \contun{\Gamma} }
		{ \Gamma , x:T \st x:T }
		\and
		\inferrule* [ left = \vastype{Nil} ]{ \contun{\Gamma} }
		{ \Gamma \st \nil  }
		\and
		\inferrule*[ left =\vastype{Par} ]{ \Gamma_1 \st P \\ \Gamma_2 \st Q }
		{ \Gamma_1 \contcomp \Gamma_2 \st P \pp Q  }
		\and
		\inferrule*[ left =\vastype{Res} ]{ \Gamma , x:T ,y: \dual{T} \st P }
		{ \Gamma \st  \res {xy} P }
		\and
		\inferrule[\vastype{In} ]{ q_1(\Gamma_1 \contcomp \Gamma_2 ) \\ \Gamma_1 \st x: q_2 \wn T.S  \\ (\Gamma_2 + x: S), y:T \st P }
		{ \Gamma_1 \contcomp \Gamma_2 \st q_1\ x\inp y.P }
		\and
		\inferrule* [ left = \vastype{Out} ]{ \Gamma_1 \st x: q \oc T.S 	\\ \Gamma_2 \st v:T  \\ \Gamma_3 + x:S \st P  }
		{ \Gamma_1 \contcomp \Gamma_2 \contcomp \Gamma_3 \st \ov x\out v.P }
	\end{mathpar}
	\end{mdframed}
	\jspace
			\caption{Typing rules for \vasco (cf.~\cite{V12}).\label{f:vasorigrules}}	
			\jspace
	\end{figure}
	
The main property of the type system concerns \emph{well-formed} processes, which are defined next. 
\begin{definition}[Redexes and Well-formedness]
	A \emph{redex} is a process of the form $ \vasin{q}{x}{v}{P} \pp \vasout{y}{z}{Q}$.  
	Processes of the form $ \vasin{q}{x}{v}{P}$ and $\vasout{y}{z}{Q}$ have prefix $x$ and $y$, respectively.

	A process is \emph{well-formed} if, for each of its structurally congruent processes of the form $ \res {x_1y_1} \cdots \res {x_ny_n} (P \pp Q \pp R) $, the following conditions hold. (1)~If $P$ and $Q$ are processes prefixed at the same variable, then they are of
the same nature (input, output). (2)~If $P$ is prefixed at $x_1$ and $Q$ is prefixed at $y_1$ then $P \pp Q$ is a redex.
\end{definition}

\begin{theorem}[Properties of the Type System]
The type system satisfies the following properties (see \cite{V12} for details):
\begin{itemize}
	\item 
	If $\Gamma \st P$ and $P \equiv Q$, then $\Gamma \st Q$. 
	\item 
	If $\Gamma \st P$ and $P \vasred  Q$, then $\Gamma \st Q$.
	\item 
	If $~\st P$ then $P$ is well-formed.
\end{itemize}
%\begin{description}
%	\item[Weakening (Lemma 7.2 in~\cite{V12})]. \\
%	If $\Gamma \st P$ and $\contun{T}$ then $\Gamma, x:T \st P$.
%	\item[Strengthening (Lemma 7.3 in~\cite{V12})]. \\ 
%	Let $\Gamma \st P$ and $x \not \in \fn{P}$. If $\Gamma = \Gamma', x : T$ then $\Gamma' \st P$.
%	\item[Preservation under $\equiv$ (Lemma 7.4 in~\cite{V12})]. \\
%	If $\Gamma \st P$ and $P \equiv Q$, then $\Gamma \st Q$. 
%	\item[Preservation (Theorem 7.2 in~\cite{V12})]. \\
%	If $\Gamma \st P$ and $P \vasred  Q$, then $\Gamma \st Q$.
%	\item[Safety (Theorem 7.3 in~\cite{V12})].\\
%	If $~\st P$ then $P$ is well-formed.
%\end{description}
\end{theorem}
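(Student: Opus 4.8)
The three properties are standard for session type systems, so the plan is to follow the usual invariance/subject-reduction/well-formedness pattern, relying crucially on the algebraic behaviour of context split $\contcomp$ and on a substitution lemma. Before attacking the three items I would establish the auxiliary facts on which everything rests: (i)~that $\contcomp$ is commutative and associative, and that splitting is only possible along linear assignments while unrestricted assignments are copied to both components (immediate from \Cref{d:splitsts}); (ii)~standard inversion lemmas that read off, from a derivation of $\Gamma \st P$, the shape of the last rule and the constraints it imposes on $\Gamma$ according to the outermost operator of $P$; and (iii)~a substitution lemma stating that if $(\Gamma, y{:}T) \st P$ and $\Gamma' \st v{:}T$ with $\Gamma \contcomp \Gamma'$ defined, then $\Gamma \contcomp \Gamma' \st P\substj{v}{y}$, proved by induction on the derivation for $P$.

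For the first property (invariance under $\equiv$) I would proceed by induction on the derivation of $P \equiv Q$, checking each structural axiom in both directions. Commutativity and associativity of $\vaspara{\cdot}{\cdot}$ reduce directly to the corresponding properties of $\contcomp$; the unit law $\vaspara{P}{\nil} \equiv P$ uses that \vastype{Nil} forces an unrestricted context, which by~(i) can be absorbed into the split without affecting $P$; the restriction axioms use \vastype{Res} together with the symmetry of duality (so that swapping $x{:}T, y{:}\dual{T}$ justifies $\vasres{xy}{P} \equiv \vasres{yx}{P}$); and scope extrusion $\vaspara{(\vasres{xy}P)}{Q} \equiv \vasres{xy}(\vaspara{P}{Q})$ with $x,y \notin \fv{Q}$ is the only delicate case, since one must redistribute the bound assignments $x{:}T, y{:}\dual{T}$ across the split so that they land entirely on the $P$-side; this follows from the splitting rules precisely because $Q$ cannot mention $x,y$.

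For subject reduction I would again induct, now on the derivation of $P \vasred Q$. The contextual rules \Did{R-Par} and \Did{R-Res} are immediate from the induction hypothesis, and \Did{R-Str} follows from the first property. The two communication rules are the heart of the argument. For \Did{R-LinCom}, inversion on \vastype{Out} and \vastype{In} (with $q_1 = \lin$) exposes a send type $\oc T.S$ on $x$ and, by duality, the matching receive type $\wn T.\dual{S}$ on the co-variable $y$; the substitution lemma retypes $Q\substj{v}{z}$, the residual linear assignments are consumed exactly, and the continuations $x{:}S$ and $y{:}\dual{S}$ remain dual, so the contractum is retyped under the same $\Gamma$. The case \Did{R-UnCom} is the main obstacle: here the input is persistent, so after reduction the context must simultaneously support the spawned copy $Q\substj{v}{z}$ and the still-present server $\vasin{\un}{y}{z}{Q}$. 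This is precisely where the unrestricted discipline pays off: the server endpoint carries a type of the form $\recur{\wn T}$, which is unrestricted and hence, by~(i), duplicated rather than consumed by $\contcomp$; checking that the update $+$ in \vastype{In} returns the same server type $S = \recur{\wn T}$ to the continuation, so that the persistent copy stays typable, is the crux.

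For the third property I would show that each structurally congruent process of the form $\vasres{x_1y_1}\cdots\vasres{x_ny_n}(\vaspara{P}{\vaspara{Q}{R}})$ satisfies conditions~(1)--(2). By the first property such a process is again typable under the empty context, so repeated inversion on \vastype{Res} and \vastype{Par} assigns dual types $T$ and $\dual{T}$ to each pair of co-variables $x_i, y_i$. Condition~(2) then follows because if $P$ is prefixed at $x_1$ and $Q$ at $y_1$, the dual types force one prefix to be an input and the other an output, i.e.\ a redex. Condition~(1) follows because two subprocesses prefixed at the \emph{same} variable must share that variable's assignment; by~(i) the splitting rules only permit such sharing when the assignment is unrestricted, and an unrestricted pretype fixes the nature (receive vs.\ send) of both prefixes. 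Beyond bookkeeping, I expect the only genuine work to be the \Did{R-UnCom} case of subject reduction and the scope-extrusion case of the first property; the remaining cases are direct structural inductions.
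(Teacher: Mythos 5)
The paper itself does not prove this theorem: the results are imported wholesale from Vasconcelos's system, with all proofs deferred to \cite{V12}. Your proposal essentially reconstructs the argument given there --- algebraic properties of $\contcomp$, a substitution lemma, then inductions on the derivations of $P \equiv Q$ and $P \vasred Q$, with \Did{R-UnCom} handled by the fact that unrestricted assignments are duplicated rather than consumed --- so in approach you are aligned with the proof the paper relies on.

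One concrete imprecision is worth fixing. In the scope-extrusion case you claim the bound assignments $x{:}T, y{:}\dual{T}$ can be made to ``land entirely on the $P$-side'' of the split because $x,y \notin \fv{Q}$. That is only true when $T$ is linear: by your own fact~(i) (and \Cref{d:splitsts}), when $\contun{T}$ holds the split \emph{forces} $x{:}T$ and $y{:}\dual{T}$ into both components, so $Q$ is necessarily typed with these (unused) assignments in its context. What you need here --- and what is missing from your toolkit (i)--(iii) --- is an unrestricted weakening/strengthening lemma: assignments $x{:}T$ with $\contun{T}$ and $x \notin \fv{Q}$ can be added to or removed from the context of a typing of $Q$ without affecting derivability. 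The same lemma is also what cleanly discharges the unit law $\vaspara{P}{\nil} \equiv P$ (your appeal to ``absorbing'' the unrestricted context is this lemma in disguise), and it is exactly the auxiliary result that \cite{V12} establishes before the congruence and preservation theorems. With it added, your argument goes through as the standard proof.
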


For technical convenience, we rely on the \emph{refined} typing rules for input and output in \Cref{f:vasrules}, which are equivalent (but more fine-grained) than those in \Cref{f:vasorigrules}.

	\begin{figure}[!t]
\begin{mdframed}
	\begin{mathpar}
		\inferrule[\vastype{Lin-In_1} ]
 		{ \Gamma_1 , x : \lin \vasreci{T}{S} \st x: \lin \vasreci{T}{S}  \\ 
		\Gamma_2 , x: S , y : T \st P}
		{ \Gamma_1 , x: \lin \vasreci{T}{S} \contcomp \Gamma_2 \st  \vasin{\lin}{x}{y}{P}   }
		\and
		\inferrule* [ left = \vastype{Lin-In_2} ]
		{  \Gamma_1 , x: \recur{\wn T} \st x: \recur{\wn T}  \\
		\Gamma_2 , x: \recur{\wn T} , y : T \st P
		}
		{  (\Gamma_1 , x: \recur{\wn T}) \contcomp (\Gamma_2 , x: \recur{\wn T}) \st  \vasin{\lin}{x}{y}{P} }
		\and
		\inferrule* [ left = \vastype{Un-In} ]
		{  \Gamma \st x: \recur{\wn T}  \\
		\Gamma , y : T \st P
		}
		{  \Gamma \st   \vasin{\un}{x}{y}{P}  }
		\and
		\inferrule* [ left = \vastype{Un-Out} ]
		{ \Gamma_1 \st x: \recur{\oc T}  \\  \Gamma_2 \st v: T  \\
		\Gamma_3 \st P
		}
		{  \Gamma_1 \contcomp	\Gamma_2 \contcomp \Gamma_3 \st   \vasout{x}{v}{P}} 
		\and
		\inferrule[\vastype{Lin-Out} ]
		{ \Gamma_1 \st x: \lin\ \vassend{T}{S}  \\  \Gamma_2 \st v: T  \\
		\Gamma_3, x:S \st P
		}
		{  \Gamma_1 \contcomp	\Gamma_2 \contcomp \Gamma_3 \st   \vasout{x}{v}{P}}  
	\end{mathpar}
	\end{mdframed}
	\jspace
		\caption{Refined typing rules for input and output.\label{f:vasrules}}	
		\jspace
	\end{figure}

We close this section by defining the class of processes \vaslang.
\begin{definition}[\vaslang]
\label{d:vaslang}
	We define $ \vaslang = \{ P \in \vasco \mid  \exists \Gamma \ s.t. \ \Gamma \st P  \}$.
\end{definition}

\begin{example}[A Non-Terminating Process in \vaslang]
\label{ex:infinite}
		Consider the process $P_{\ref{ex:infinite}} =  \vasres{xy}{( \vaspara{\vasout{y}{w}{\nil}}{\vasin{\un}{x}{z}{\vasout{y}{w}{\nil}}} )}$,   which invokes itself ad infinitum. 
	Process $P_{\ref{ex:infinite}}$ is in $\vaslang$ because $w: \nilT \st P_{\ref{ex:infinite}}$ holds with the following derivation:
		\begin{mathpar}
	\inferrule*[left=\vastype{Res}]{
				\inferrule*[left=\vastype{Par}]{ 
					\Pi \\
					\inferrule*[left=\vastype{Un-In}]{ 
						\inferrule{
							\contun{\Gamma}
						}{
							\Gamma  \st x: \recur{\wn \nilT}  
						}
						\\
						\Pi
					}
					{  \Gamma \st   \vasin{\un}{x}{z}{\vasout{y}{w}{\nil}}  }
				}
				{ \Gamma \st \vasout{y}{w}{\nil} \pp \vasin{\un}{x}{z}{\vasout{y}{w}{\nil}} }
			}
			{ 
				w: \nilT \st  \vasres{xy}{( \vaspara{\vasout{y}{w}{\nil}}{\vasin{\un}{x}{z}{\vasout{y}{w}{\nil}}} )}
			}
		\end{mathpar}		
with 	 $  \Gamma = x:\recur{\wn \nilT} , y: \recur{\oc \nilT}, w: \nilT $ and $\Pi$ is the derivation		
		\begin{mathpar}
			\inferrule*[left=\vastype{Un-Out}]{ 
				\inferrule{
					\contun{\Gamma' }
				}
				{
					\Gamma' \st y: \recur{\oc \nilT }  
				}
				\\
				\inferrule{
					\contun{\Gamma' }
				}{
					\Gamma' \st w: \nilT  
				}\\
				\inferrule{ 
					\contun{\Gamma' } 
				}
				{ 
					\Gamma' \st \nil  
				}
			}
			{  \Gamma' \st \vasout{y}{w}{\nil} } 
		\end{mathpar}
	with $ \Gamma' = x:\recur{\wn \nilT} , y: \recur{\oc \nilT}, w: \nilT, z: \nilT $.
\end{example}

\section{A Weight-based Approach to Terminating Processes}
\label{s:weight}
We move on to consider a type system that ensures termination for a class of $\pi$-calculus processes. 
Following Deng and Sangiorgi~\cite{DS06}, the type system uses \emph{weights} (or \emph{levels}) to avoid infinite reduction sequences. 
This type system will induce a class of terminating  $\vasco$ processes, denoted \lvllang (\Cref{d:lvllang}), obtained via appropriate translations on processes and types.
To ease the definition of such translations, here we define a type system that mildly modifies the system of~\cite{DS06} to account for linearity and synchronous/polyadic (tuple-based) communication. 
Our main result is that the weight-based system ensures termination (\Cref{t:lvlsn}). %, closely following Deng and Sangiorgi's proof. 

\subsection{Processes} 
We introduce a process model for the weight-based type system, denoted \pilvl, formally defined next.
In the following, we write $\tilde{y}$ to stand for the finite tuple $y_1,\cdots, y_n$. 

\begin{definition}[Processes]
The syntax of \pilvl processes is given by the grammar in \Cref{f:lvlsyntandtype} (top).
\end{definition}

\pilvl is designed to stand in between \vasco and the process model in~\cite{DS06}.  
Communication in \pilvl is polyadic, i.e., exchanges involve a tuple of names, rather than a single name as in  \Cref{def:vascosyntax} and~\cite{DS06}. 
We shall often consider tuples of length two (i.e., dyadic communication), as this suffices for a continuation-passing encoding of sessions~\cite{DGS12}. 
Another difference with respect to~\cite{DS06} is that inputs can be linear or  unrestricted; this will facilitate the formal connection with \vasco and its type system. 
The role of linearity is more prominent at the level of types, defined later on.

We give the operational semantics of \pilvl in terms of the (early)
labeled transition system (LTS), with the following labels for input, output, bound output, and silent transitions (synchronizations):
$$\alpha ::= x(\tilde{v}) \,|\, \ov x\out { \tilde{y} } \,|\, \res {y,\widetilde{b}}\ov x\out { \tilde{v} } \,|\, \tau $$ 
%\daniele{DN: It seems that there are some labels missing, or something should be said about either $\tilde{b}$ or $\tilde{y}$ being `empty' (cf. \Cref{f:lvllts}). }

The rules, given in \Cref{f:lvllts}, are standard. Rules \lvltype{Par} and \lvltype{Tau} can be applied symmetrically across parallel composition.
	
\begin{figure*}[!h]
	\begin{mdframed}
	\begin{mathpar}
		\inferrule[\lvltype{In} ]{  }
		{ \lvlinpoly{x}{y}{z}{P} \lts{x \inp {\tilde{v}}} P\substj{v_1}{y_1}\substj{v_2}{y_2}  }
		\and
		\inferrule[\lvltype{Par} ]{ P \lts{\alpha} P' \\ \bn{\alpha} \cap \fn{Q} = \emptyset }
		{ P \pp Q \lts{\alpha} P'\pp Q }
		\and
		\inferrule[\lvltype{Res} ]{ P \lts{ \alpha } P' \\ x \not \in \names{\alpha} }
		{ \res {x}P \lts{ \alpha } \res {x}P' }
				\and
		\inferrule[\lvltype{Rep} ]{  }
		{ \bang x \inp { \tilde{y} }.P \lts{x \inp { \tilde{v} }} \bang x \inp { \tilde{y}}.P  \pp P\substj{v_1}{y_1} }

		\\
				\inferrule[\lvltype{Out} ]{   }
		{ \lvloutpoly{x}{y}{z}{P}  \lts{\ov x\out { \tilde{y} }}  P  }
		\and
		\inferrule[\lvltype{Tau} ]{ P  \lts{ \res {\widetilde{b}}\ov x\out { \tilde{v} } } P' \\ Q \lts{ x \inp {\tilde{v}} }  Q' \\ \widetilde{b} \cap \fn{Q} = \emptyset }
		{ P \pp Q \lts{ \tau } \res{\widetilde{b}} (P'\pp Q')}
		\and
		\inferrule[\lvltype{Open} ]{ P \lts{ \res {\widetilde{b}}\ov x\out { \tilde{v} } } P' \\ y \in (\fn{v_1} \cup \fn{v_2}) - \{ \widetilde{b} , x \} }
		{ \res {y}P \lts{ \res {y,\widetilde{b}}\ov x\out { \tilde{v} } } P' }
	\end{mathpar}
	\end{mdframed}
	\jspace
\caption{An LTS for \pilvl}
\label{f:lvllts}
\jspace
\end{figure*}

\subsection{Types}

\begin{definition}[Types for \pilvl]
The syntax of weight-based types for \pilvl is given by the grammar in \Cref{f:lvlsyntandtype} (bottom).
\end{definition}

As in~\cite{DS06}, our link types  for \pilvl are \emph{simple}, i.e., they do not admit the sequencing of actions enabled by session types. 
Our syntax of types extends that in~\cite{DS06} to account for (i)~dyadic communication and (ii)~explicit types for clients and servers.
Concerning~(ii), we purposefully adopt the tail-recursive types for clients and servers defined for \vasco, rather than more general recursive types.

\begin{figure}[t!]
	\begin{mdframed}
	\[
	\begin{aligned}
		P,Q ::= &								&& \mbox{(Processes)}\\
				&  \lvlout{x}{y_1}{y_2}{P}		&& \mbox{(output)}
				%&  \ov x^{\un} \out { y^q,z^p }.P	&& \mbox{(unrestricted output)}\\
				&&  \lvlin{x}{y_1}{y_2}{P}					&& \mbox{(linear input)}
				\\
								& P \pp Q  	    		  		&& \mbox{(parallel)}
				&&  \lvlserv{x}{y_1}{y_2}{P}			&& \mbox{(server)}\\
				%&  \ov x^{p} \out { y^q,z^p }.P		&& \mbox{(output)}\\
				%&  \ov x^{\un} \out { y^q,z^p }.P	&& \mbox{(unrestricted output)}\\
				%&  x^p\inp {y^q , z^p}.P				&& \mbox{(linear input)}\\
				%&  \bang x^p \inp {y^q , z^p }.P		&& \mbox{(unrestricted input)}\\
				& \res {x}P						&& \mbox{(restriction)}
				&& \nil	     					&& \mbox{(inaction)}\\
	\end{aligned}
	\]
	\hrule
	\smallskip
	\[
	\begin{aligned}
		S,T,V,L ::= &								&& \mbox{(Link Types)}\\
		%		& V								&& \mbox{(value)} \\
		%		& L								&& \mbox{(link)} \\
		%V	::= &								&& \mbox{(value types)}\\
		%		& L								&& \mbox{(link)}\\
				%& 								&& \mbox{(reserved for any other types such as bool, nat)}\\
		%L   ::= &								&& \mbox{(link types)}\\
				& \lvlint{n}{V_1}{V_2}			&& \mbox{(linear input type)}\\
				& \lvloutt{n}{V_1}{V_2}			&& \mbox{(linear output type)}\\
				& \lvlunst{n}{V}				&& \mbox{(unrestricted server type)}\\
				& \lvlunct{n}{V}				&& \mbox{(unrestricted client type)}\\
				%& q \#^n \widetilde{V} = 
				%	\begin{cases}
				%		q = \ast \quad \widetilde{V} = V \\
				%		q = \circ \quad \widetilde{V} = (V_1 , V_2)
				%	\end{cases}								&&
				%	\begin{aligned}
				%	& \mbox{(unrestricted link type)} \\ 
				%	& \mbox{(linear link type)}
				%	\end{aligned}
				%\\
				%& \level{n}{V_1}{V_2} 			&& \mbox{(linear comunication)}\\
				%& \rlevel{n}{V}	 				&& \mbox{(unrestricted comunication)}\\
				& \lvlnilT							&& \mbox{(termination)}
		%v	::= &								&& \mbox{(Values)}\\
		%		& x								&& \mbox{(variables)}\\
		\\
		n ::= & ~1, 2, \cdots && \mbox{(weights)}
	\end{aligned}
	\]
	\end{mdframed}
	\jspace
	\caption{Syntax of processes and types for \pilvl.}
	\label{f:lvlsyntandtype}
	\jspace
\end{figure}

We introduce some notions borrowed from the  type system from \Cref{ss:typesess}: duality, contexts, predicates on types, operations on contexts.

\begin{definition}[Duality]
	Duality on linked types is defined as:
		\[
		\begin{aligned}
			\dual{\lvlint{n}{V_1}{V_2}}	& = \lvloutt{n}{\dual{V_1}}{\dual{V_2}}		&
			\dual{\lvloutt{n}{V_1}{V_2}} & = \lvlint{n}{\dual{V_1}}{\dual{V_2}}	& \dual{\lvlnilT} & = \lvlnilT	
			\\
			\dual{\lvlunst{n}{V}} & = \lvlunct{n}{\dual{V}}				
			&
			\dual{\lvlunct{n}{V}} & = \lvlunst{n}{\dual{V}}				
			%\\
			%\dual{\lvlnilT} & = \lvlnilT\\
			%\tilde{\dual{V}} & = \dual{V_1} , \dual{V_2} \quad \text{where } \tilde{V} = V_1 , V_2\\
		\end{aligned}	
	\]
%Sometimes we use the abbreviations $\tilde{V} = V_1 , V_2$ and  $\tilde{\dual{V}}  = \dual{V_1} , \dual{V_2}$.  %\daniele{I would prefer not to have abbreviations in the type system or in the language. They don't save much space and having them explicit helps to read the rules.}
\end{definition}

\begin{definition}[Contexts]
	Contexts are given by the grammar:
	\[
		\Gamma , \Delta :: = \cdot \sep \Gamma , x:V \sep \Gamma , x:\dualjoin{V}{\dual{V}}
	\]
	where $\Gamma , x:L$ and $\Gamma , x:\dualjoin{L}{\dual{L}}$ imply $x \not \in \dom{\Gamma}$. 
	
	Following the sorts of~\cite{HVK98}, the assignment 
	$x:\dualjoin{L}{\dual{L}}$ denotes the pairing of $x$  with two complementary protocols, where $\dualjoin{L}{\dual{L}} = \dualjoin{\dual{L}}{L}$. 
	We use \lvlsplit{x}{L} to stand for $x:\dualjoin{L}{\dual{L}}$ when $L$ is the main object of interest.
	We write $x\lvlass T $ if either $x:T$ or $x::T$ holds (i.e., $\lvlass \in \{ : , :: \}$). 
	%\daniele{DN:I think the notation with $::$ is confusing. I would prefer to have the type written explicitly. }
\end{definition}

\begin{definition}[Unrestricted Types]
	Predicate $\contun{T}$ holds if $T = \lvlunst{n}{V}$, $T = \lvlunct{n}{V}$, $T = \lvlnilT$, or x:\dualjoin{L}{\dual{L}} with $\contun{L}$. We write $\contun{\Gamma}$ if $\contun{T}$ holds for every $x \lvlass T \in \Gamma$.
\end{definition}

%q \#^n \widetilde{V} = 
%				\begin{cases}
%					q = \ast \quad \widetilde{V} = V \\
%					q = \circ \quad \widetilde{V} = (V_1 , V_2)
%\circ \#^n (V_1 , V_2)
%\ast \#^n V 

Following \Cref{d:splitsts}, the following definitions 
gives a relation to split contexts into two parts.

\begin{definition}[Split Relation on Contexts]
The relation $\contcomp$ on contexts is defined in \Cref{f:lvlcontrelation}. \end{definition}

\begin{figure}
	\begin{mdframed}
	\begin{mathpar}
		\inferrule{}
		{ \emptycont \contcomp \emptycont = \emptycont }
		\and
		\inferrule{ \Gamma_1 \contcomp \Gamma_2 = \Gamma  \\ \contun{T}}
		{\Gamma , x: T = ( \Gamma_1 , x: T  ) \contcomp ( \Gamma_2 , x: T  ) }
		\and
		\inferrule{ \Gamma_1 \contcomp \Gamma_2 = \Gamma  \\ \contun{V}}
		{\Gamma , x:\dualjoin{V}{\dual{V}} = ( \Gamma_1 , x:\dualjoin{V}{\dual{V}}  ) \contcomp ( \Gamma_2 , x:\dualjoin{V}{\dual{V}}  ) }
		\and
		\inferrule{ \Gamma_1 \contcomp \Gamma_2 = \Gamma \\ \neg \contun{V}}
		{ \Gamma , x:\dualjoin{V}{\dual{V}} = (\Gamma_1 , x: V) \contcomp  (\Gamma_2 , x: \dual{V}) }
		\and
		\inferrule{ \Gamma_1 \contcomp \Gamma_2 = \Gamma \\ \neg \contun{V}}
		{ \Gamma , x:\dualjoin{V}{\dual{V}} = (\Gamma_1 , x:\dualjoin{V}{\dual{V}}) \contcomp  \Gamma_2  }
		\and
		\inferrule{ \Gamma_1 \contcomp \Gamma_2 = \Gamma \\ \neg \contun{V}}
		{ \Gamma , x:\dualjoin{V}{\dual{V}} = \Gamma_1  \contcomp  (\Gamma_2 , x:\dualjoin{V}{\dual{V}}) }
		\and
		\inferrule{ \Gamma_1 \contcomp \Gamma_2 = \Gamma \\ \neg \contun{T}}
		{ \Gamma , x: T = (\Gamma_1 , x: T) \contcomp \Gamma_2 }
		\and
		\inferrule{  \Gamma_1 \contcomp \Gamma_2 = \Gamma \\ \neg \contun{T}}
		{ \Gamma , x: T = \Gamma_1 \contcomp ( \Gamma_2 , x: T ) }
	\end{mathpar}
	\end{mdframed}
	\jspace
\caption{Splitting of Contexts for \pilvl}
\label{f:lvlcontrelation}
\jspace
\end{figure}

We now introduce notions on processes that are essential to Deng and Sangiorgi's approach to termination by typing.

\begin{definition}[Level Function, $\levelof{x}$]
Let $\mathcal{N}$ denote the set of all names. 
We define the function $\levelof{\cdot}: \mathcal{N} \to \mathbb{N}$ to map names of a process (free and bound) to naturals. 
We assume $\alpha$-conversion is silently used to avoid name capture and ensure uniqueness of bound names.
Given a (typed) process, we define this function as follows: 
%\daniele{DN: I think we should write this definition explicitly, without the $\diamond$; it is an abbreviation of an abbreviation.  }
$$
\levelof{x} = \begin{cases}
%	n & \text{if $x \diamond \lvlint{n}{V_1}{V_2},x\diamond\lvloutt{n}{V_1}{V_2} , x\diamond\lvlunst{n}{V} $ or $x\diamond\lvlunct{n}{V} $}
	n & \text{if $x : T$ or $x ::T$}
	\\
	& \text{with $T \in \{\lvlint{n}{V_1}{V_2},\lvloutt{n}{V_1}{V_2}, \lvlunst{n}{V}, \lvlunct{n}{V}\}$}
	\\
	m & \text{if $x: \lvlnilT$, for any $m \in \mathbb{N}$} 
\end{cases}
$$
\end{definition}

\begin{definition}[Active Outputs, $\outsubj{\cdot}$]
Given a process $P$, the set of names with active outputs $\outsubj{P}$ is  defined inductively:
\[
\begin{aligned}
	\outsubj{\lvloutpoly{x}{y}{z}{P}} & = \{x\} \cup \outsubj{P}
	&
	\outsubj{\lvlinpoly{x}{y}{z}{P}} & = \outsubj{P}\\
	\outsubj{P \pp Q } & = \outsubj{P} \cup \outsubj{Q}
	&
	\outsubj{\res {x}P	} & = \outsubj{P}\\
		\outsubj{\nil} & = \emptyset 
	&
	\outsubj{\lvlservpoly{x}{y}{z}{P}} & = \emptyset
\end{aligned}
\]
\end{definition}

%\begin{notation}
%	We use the notation $\tilde{y} \lvlass \tilde{V}$ to represent $ y_1 \lvlass V_1 , y_2 \lvlass V_2$ when $\tilde{y} = y_1, y_2$ and $ \tilde{V} = V_1,V_2$. \daniele{DN: same remark as above.}
%\end{notation}

\begin{figure*}[!t]
	\begin{mdframed}
		\begin{mathpar}
			\inferrule[\lvltype{Var_1} ]{ \contun{\Gamma} }
			{ \Gamma , x:V \lt x:V }
			\and
			\inferrule[\lvltype{Var_2} ]{ \contun{\Gamma} }
			{ \Gamma , x: \dualjoin{V}{\dual{V}} \lt x:V }
			\and
						\inferrule[\lvltype{Nil} ]{ \contun{\Gamma} }% \contun{\Gamma} }
			{ \Gamma \lt \nil }
			\and
			\inferrule[\lvltype{Par} ]{\Gamma_1 \lt P \\ \Gamma_2 \lt Q }
			{ \Gamma_1 \contcomp \Gamma_2 \lt P \pp Q }
			\and
			\inferrule[\lvltype{Res} ]{ \Gamma, x: \dualjoin{V}{\dual{V}}  \lt P  }
			{ \Gamma \lt \res {x}P }
			\and 
			\inferrule* [ left = \lvltype{Lin-In_1} ]{ \Gamma_1 , x: \lvlint{n}{V_1}{V_2} \lt x: \lvlint{n}{V_1}{V_2} \\ \Gamma_2 ,  y_1:V_1, y_2:V_2 \lt P \\ \levelof{x} = \levelof{y_2} }
			{ \Gamma_1 , x: \lvlint{n}{V_1}{V_2} \contcomp \Gamma_2  \lt \lvlin{x}{y_1}{y_2}{P} }
			\and
			\inferrule* [ left = \lvltype{Lin-In_2} ]{ 
				\Gamma_1 , x: \lvlunst{n}{V} \lt x: \lvlunst{n}{V} \\
				\Gamma_2 , x: \lvlunst{n}{V} , y_1:V , y_2: \lvlnilT \lt P }
			{ ( \Gamma_1 , x: \lvlunst{n}{V}) \contcomp (\Gamma_2 , x: \lvlunst{n}{V}) \lt \lvlin{x}{y_1}{y_2}{P}  }
			\and
			\inferrule* [ left = \lvltype{Lin-In_3} ]{
				\Gamma , \lvlsplit{x}{\lvlunst{n}{V}} \lt x:\lvlunst{n}{V} \\
				\Gamma , \lvlsplit{x}{ \lvlunst{n}{V}} ,   y_1:V  , y_2: \lvlnilT \lt P }
			{ \Gamma , \lvlsplit{x}{\lvlunst{n}{V}} \lt \lvlin{x}{y_1}{y_2}{P}  }
			\and
			\mprset{ flushleft }
			\inferrule* [ left = \lvltype{Lin-Out} ]{
				\Gamma_1 ,  x: \lvloutt{n}{V_1}{V_2} \lt  x: \lvloutt{n}{V_1}{V_2} \\
				\Gamma_2 , y_1:V_1 \lt y_1:V_1 \\
				\Gamma_3 , y_2:V_2 \lt P \\ \levelof{x} = \levelof{y_2} }
			{ (\Gamma_1 , x: \lvloutt{n}{V_1}{V_2}) \contcomp (\Gamma_2 , y_1:V_1) \contcomp (\Gamma_3, \lvlsplit{y_2}{ V_2} ) \lt \lvlout{x}{y_1}{y_2}{P} }
			\and
			\inferrule* [ left = \lvltype{Un-Out_1} ]{ 
				\Gamma_1 , x:  \lvlunct{n}{V} \lt x:  \lvlunct{n}{V} \\
				\Gamma_2, x:  \lvlunct{n}{V} , y: V \lt y: V \\
				\Gamma_3, x:  \lvlunct{n}{V} , y_2: \lvlnilT   \lt P  
			}
			{( \Gamma_1 , x:  \lvlunct{n}{V} )\contcomp	(\Gamma_2, x:  \lvlunct{n}{V} , y: V) \contcomp (\Gamma_3, x:  \lvlunct{n}{V})
			 \lt 
			\lvlout{ x}{y_1}{y_2}{P} 
			}
			\and
			\inferrule* [ left = \lvltype{Un-Out_2} ]{  
				\Gamma_1, \lvlsplit{x}{ \lvlunct{n}{V}} \lt x:  \lvlunct{n}{V} \\
				\Gamma_2, \lvlsplit{x}{ \lvlunct{n}{V}} , y: V \lt y: V \\
				\Gamma_3 , \lvlsplit{x}{ \lvlunct{n}{V}} , y_2: \lvlnilT   \lt P  
			}
			{ (\Gamma_1 , \lvlsplit{x}{ \lvlunct{n}{V}}) \contcomp	
			(\Gamma_2, \lvlsplit{x}{ \lvlunct{n}{V}} , y: V) \contcomp 
			(\Gamma_3, \lvlsplit{x}{ \lvlunct{n}{V}}) \lt 
			\lvlout{x}{y_1}{y_2}{P} 
			}
			\\
			\inferrule* [ left = \lvltype{Un-In_1} ]{ 
				\Gamma, x: \lvlunst{n}{V} \lt x: \lvlunst{n}{V} \\
				\Gamma, x: \lvlunst{n}{V}, y_1:V , y_2: \lvlnilT \lt P \\  \forall b \in \outsubj{P}, \ \levelof{b} < n }
			{ \Gamma, x: \lvlunst{n}{V} \lt  \lvlserv{x}{y_1}{y_2}{P} }
			\and
			\inferrule* [ left = \lvltype{Un-In_2} ]{
				\Gamma, \lvlsplit{x} {\lvlunst{n}{V}} \lt x: \lvlunst{n}{V} \\
				\Gamma, \lvlsplit{x} {\lvlunst{n}{V}}, y_1:V , y_2: \lvlnilT \lt P \\  \forall b \in \outsubj{P}, \ \levelof{b} < n }
			{ \Gamma, \lvlsplit{x}{\lvlunst{n}{V}} \lt  \lvlserv{x}{y_1}{y_2}{P} }
		\end{mathpar}
	\end{mdframed}
	\jspace
\caption{Typing rules for \pilvl}
\label{f:lvltyperules}
\jspace
\end{figure*}

Typing judgments are of the form $\Gamma \lt P$, with corresponding typing given in \Cref{f:lvltyperules}.  
%\daniele{DN: we need to say somewhere that ``, '' binds strongly than $\contcomp$, otherwise the rules that use $\contcomp$ will be ambiguous.}
Typability is contingent on a level function: we say a process $P$ is well-typed if there exists a level function $\levelof{\cdot}$ such that a typing derivation $\Gamma \lt P$ holds, for some $\Gamma$.

%\jpedit{Need explanations for the typing rules!}

%\joe{Explonation of typing rules}

%\lvltype{Var_1} mimics the rule \vastype{Var} however \lvltype{Var_2} is the equivalent rule for complimentary interaction, here if $x:\dualjoin{V}{\dual{V}}$ then we may type $x:V$. The intuition here is channel $x$ contains the type of its two endpoints, these being $v$ and $\dual{V}$, as long as $x$ expects on of these types then the channel is typed correctly.
%$\lvltype{Lin-In_1}$ is the linear counterpart of $\vastype{In}$, notice that notice there is no counterpart for $x$ being a linear complimentary interaction, instead the context split $\Gamma , x:\dualjoin{V}{\dual{V}} = (\Gamma_1 , x:{V}) \contcomp  (\Gamma_2 , x:{\dual{V}})$ allows us to apply $\lvltype{Lin-In_1}$. This structure acts silently within the rules where $V$ is linear via the splitting of contexts, disallowing linear channels to consume linear complimentary interactions. The rules $\lvltype{Lin-In_2}$ , $\lvltype{Lin-In_3}$ are the counterparts to $\vastype{In}$ when the type is unrestricted. Similarly, $\lvltype{Lin-Out}$ , $\lvltype{Un-Out_1}$ , $\lvltype{Un-Out_2}$ represent the rule $\vastype{Out}$. $\lvltype{Un-In_1}$ and $\lvltype{Un-In_2}$ are the unrestricted counterpart to $\vastype{In}$, here there is a restriction that the levels of the active outputs must be strictly less than the channel of the server providing them. Finally, $\lvltype{Res}$ illustrates how a restricted channel is typed via a complimentary interaction.

We comment on some of the rules in \Cref{f:lvltyperules} for \pilvl, contrasting them with those in \Cref{f:vasorigrules} for \vasco.
Rule \lvltype{Var_1} is similar to rule~\vastype{Var}.
Rule~\lvltype{Var_2} is the corresponding rule for complementary interaction:   if $x:\dualjoin{V}{\dual{V}}$, then we can assign the type $x:V$. Intuitively, name $x$ encapsulates the types of its two endpoints, denoted as $V$ and $\dual{V}$. As long as $x$ respects one of these types, the channel is considered correctly typed.

Rule \lvltype{Lin-In_1} acts as the linear counterpart to \vastype{In}. Importantly, there is no direct counterpart for $x$ as a linear complementary interaction. Instead, the context split $\Gamma , x:\dualjoin{V}{\dual{V}} = (\Gamma_1 , x:{V}) \contcomp (\Gamma_2 , x:{\dual{V}})$ allows for the application of rule~\lvltype{Lin-In_1}. This structural mechanism operates silently within the rules where $V$ is linear, achieved through context split. As a result, this disallows linear channels from consuming linear complementary interactions.

Rules \lvltype{Lin-In_2} and \lvltype{Lin-In_3}, the first with `:' and the second with `::',  are  counterparts to rule \vastype{In} for  unrestricted types with linear qualifier. 
Similarly, \lvltype{Lin-Out}, \lvltype{Un-Out_1}, and \lvltype{Un-Out_2} represent the rule \vastype{Out}. 
Furthermore, \lvltype{Un-In_1} and \lvltype{Un-In_2} are the unrestricted counterparts to rule \vastype{In} with unrestricted qualifier. 
These rules adopt the main condition from~\cite{DS06}, i.e., the weight of types of the active outputs must be strictly less than the weight of the type of the channel of the server providing them. 
Finally, rule \lvltype{Res} types a restricted channel through a complementary interaction.

We state the type preservation property:
\begin{theorem}[Type Preservation]
	\label{thm:lvltypepres}
	Suppose $\Gamma \lt P$ for a level function $l$. If $ P \lts{\tau} P' $ then $\Gamma \lt P'$ for the same level function $l$.
\end{theorem}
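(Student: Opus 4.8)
The plan is to prove type preservation by induction on the derivation of the transition $P \lts{\tau} P'$ using the rules of \Cref{f:lvllts}. Since the only transitions carrying a $\tau$ label are produced by rule \lvltype{Tau} (possibly nested under the contextual rules \lvltype{Par} and \lvltype{Res}), the argument reduces to one principal case — a synchronization between an output and an input — together with two straightforward congruence cases. Before attacking the principal case, I would establish the standard auxiliary machinery for a session type system: an \textbf{Inversion Lemma} (reading off, from $\Gamma \lt P$, the shape of $\Gamma$ and of the subderivations according to the top-level constructor of $P$, accounting for the several variants \lvltype{Lin-In_1}--\lvltype{Lin-In_3}, \lvltype{Un-Out_1}, \lvltype{Un-Out_2}, and so on) and a \textbf{Substitution Lemma} stating that if $\Gamma, y_1{:}V_1, y_2{:}V_2 \lt Q$ and the carried values can be typed with $V_1,V_2$, then $\Gamma' \lt Q\substj{v_1}{y_1}\substj{v_2}{y_2}$ for the context $\Gamma'$ obtained by the appropriate split/merge — crucially, under the \emph{same} level function, since $\levelof{v_i} = \levelof{y_i}$ whenever $v_i$ and $y_i$ share a type.

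For the principal case I would phrase two \textbf{subject-transition lemmas} that isolate how each visible action rewrites the typing context: (i)~if $\Gamma \lt P$ and $P \lts{\res{\tilde b}\ov x\out{\tilde v}} P'$, then $\Gamma$ decomposes so as to expose an output (or client) capability $\lvloutt{n}{V_1}{V_2}$ on $x$ together with the types of the carried names, and $P'$ is typed in the context with that output capability consumed and the extruded names $\tilde b$ appearing free with their endpoint types; (ii)~dually, if $\Gamma \lt Q$ and $Q \lts{x(\tilde v)} Q'$, then $\Gamma$ exposes a matching input/server capability on $x$ and $Q'$ is typed in $\Gamma$ extended with types for $\tilde v$. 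Both lemmas are proved by induction on the transition derivations (the interesting inputs being \lvltype{In} and \lvltype{Rep}), invoking Inversion and Substitution. Combining (i) and (ii) in the \lvltype{Tau} case, I recompose the type of $\res{\tilde b}(P' \pp Q')$ as follows: the split that typed $P \pp Q$ as $\Gamma_1 \contcomp \Gamma_2$ is reused, the matching capabilities on $x$ cancel by duality ($\dual{\lvloutt{n}{V_1}{V_2}} = \lvlint{n}{\dual{V_1}}{\dual{V_2}}$), and each extruded $b \in \tilde b$ is re-paired via rule \lvltype{Res}, which demands a type of the form $\dualjoin{V}{\dual{V}}$ — exactly what the fourth splitting rule of \Cref{f:lvlcontrelation} supplies by placing the two endpoints of $b$ in the two branches.

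For the contextual cases \lvltype{Par} and \lvltype{Res}, I apply the induction hypothesis to the reducing subterm and reassemble the typing, noting that context splitting and restriction commute with the transition as required. Throughout, the level function is never modified: every level is read off from a type via $\levelof{\cdot}$, types are preserved by the argument above, and the side conditions $\levelof{x} = \levelof{y_2}$ in \lvltype{Lin-In_1} and \lvltype{Lin-Out} guarantee that substitutions identify only names of equal level. The one place where more than bookkeeping is needed is server unfolding: when $\bang x\inp{\tilde y}.P_0 \lts{x(\tilde v)} \bang x\inp{\tilde y}.P_0 \pp P_0\substj{v_1}{y_1}$ fires (rule \lvltype{Rep}), I must show the spawned copy $P_0\substj{v_1}{y_1}$ is typable and, in particular, that the weight condition $\forall b \in \outsubj{P_0}.\ \levelof{b} < n$ premised in \lvltype{Un-In_1}/\lvltype{Un-In_2} is inherited; this holds because substitution is level-preserving, so $\outsubj{P_0\substj{v_1}{y_1}}$ carries the same levels as $\outsubj{P_0}$.

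I expect the main obstacle to be the scope-extrusion step of the \lvltype{Tau} case: correctly re-pairing each extruded endpoint with its dual so that rule \lvltype{Res} applies with a type $\dualjoin{V}{\dual{V}}$, while simultaneously threading the context split through both communicating partners. This is the genuinely session-typed part of the argument, where the several input/output rule variants and the splitting relation of \Cref{f:lvlcontrelation} must be reconciled. The weight and level side-conditions, by contrast, are preserved essentially for free once one observes that $\levelof{\cdot}$ factors through types and that communication only ever substitutes names of identical type.
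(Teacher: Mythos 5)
You cannot really be compared against the paper's own argument here, because the version of the paper at hand never spells one out: \Cref{thm:lvltypepres} is stated in the main text without proof (such arguments are deferred to an appendix that is not included, in the style of~\cite{DS06}). Judged on its own terms, your skeleton is the canonical one and surely the intended one: induction on the derivation of $P \lts{\tau} P'$, with \lvltype{Par} and \lvltype{Res} dispatched by the induction hypothesis, inversion and substitution lemmas, two subject-transition lemmas for the complementary labels, and a recomposition step for \lvltype{Tau}; you also correctly single out scope extrusion and the \lvltype{Rep}/weight-condition case as the places needing real work, and your observation that the weight side-conditions are inherited because substitution preserves levels is sound.

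Two of your claims would fail as phrased, however, and both sit exactly at the crux you identified. First, ``communication only ever substitutes names of identical type'' is not true in this system: duality dualizes \emph{both} components of an i/o type, $\dual{\lvloutt{n}{W_1}{W_2}} = \lvlint{n}{\dual{W_1}}{\dual{W_2}}$, so when the sender holds $x$ at $\lvloutt{n}{W_1}{W_2}$, the receiver --- typed against the other half of the pair $\dualjoin{\lvloutt{n}{W_1}{W_2}}{\lvlint{n}{\dual{W_1}}{\dual{W_2}}}$ --- binds its parameters at the \emph{dual} types. Your substitution lemma must therefore be stated with duals and with the pairing mechanism, not with shared types. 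Rule \lvltype{Lin-Out} anticipates this for the continuation name by recording $\lvlsplit{y_2}{V_2}$, i.e.\ $y_2:\dualjoin{V_2}{\dual{V_2}}$, which the fourth splitting rule of \Cref{f:lvlcontrelation} then hands out as $V_2$ to the sender's continuation and $\dual{V_2}$ to the receiver's; but the payload name has no such pairing in the conclusion of \lvltype{Lin-Out} (it occurs only as $y_1:V_1$), so your recomposition must argue separately why the receiver's use of it at $\dual{V_1}$ is compatible with the unchanged $\Gamma$. Second, ``the matching capabilities on $x$ cancel by duality'' is a metaphor rather than an argument: in this strictly linear discipline nothing cancels --- the theorem demands the \emph{same} $\Gamma$ after the step, the consumed linear assignment for the subject is still in $\Gamma$, the reduct no longer uses it, and the rules do not tolerate unused linear assignments (\lvltype{Nil} and the variable rules all require $\contun{\cdot}$). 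Explaining how the reduct is typed under the unchanged context, rather than under a context with the consumed capability removed or downgraded, is the one step that is more than bookkeeping, and your sketch currently passes over it. The level-function half of your argument, by contrast, is robust: duality preserves the annotation $n$, so $\levelof{\cdot}$ is insensitive to all of the above and the ``same $l$'' conclusion goes through as you describe.
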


%\begin{proof}
%	By induction on the structure of $P$.
%\end{proof}

\subsection{Termination by Typing}
A process terminates if all its reduction sequences are finite. 
We show that our formulation of the type system in \cite{DS06} also enforces termination by typing. The proof follows the same lines as in~\cite{DS06}: 
a weight is associated with a well-typed process; this weight is then shown to strictly reduce when the the process synchronizes. 
The weight is actually a \emph{vector} constructed from the observable active outputs  of a channel within a typed process. 
%We formally define .

\begin{definition}[Vectors]
	We define vectors and their operations: % used for measuring the weight of a process:
\begin{itemize}
	\item Given $k \geq 1$, we write $\zerovec_i$ to denote the vector $\langle n_k , n_{k-1} , \cdots ,  n_1  \rangle $ where $n_i = 1$ and $ n_j = 0$ for every other~$j$. Also, $\zerovec$ denotes the zero vector where  $n_i = 0$ for every $i$. 
	\item 
	Given vectors $v_1 =\langle n_k , n_{k-1} ,\cdots ,  n_1  \rangle$ and $v_2 = \langle m_l , m_{l-1}  ,\cdots ,  m_1  \rangle $,  with $k \geq l$, the sum  $v_1+v_2$ is defined in two steps. 
	Firstly, if $k > l$ then the shorter vector $v_2$ is extended into $v'_2$ by adding  zeroes to match the size of $v_1$, i.e., $v'_2 =\langle m_k , m_{k-1} ,\cdots , m_l  ,\cdots ,  m_1  \rangle $, with $ \langle m_k , m_{k-1} ,\cdots , m_{l+1} \rangle = \zerovec$. 
	Then, addition of $ v_1 $ and $ v'_2 $ is applied pointwise.
	
	\item Given vectors $v_1 = \langle n_k , n_{k-1} ,\cdots ,  n_1 \rangle$ and $v_2 = \langle m_k , m_{k-1} ,\cdots ,  m_1 \rangle$ of equal size $k$, the ordering $v_1 \wtless v_2$ is defined iff $\exists i \leq k, \ n_i < m_i $ and $\forall j > i, \ n_j = m_j$.
	\end{itemize}
\end{definition}

Using vectors, we define the weight of a well-typed process:

\begin{definition}[Weights]
	Given a well-typed process $P$ with level function $l$, the weight of  $P$ is the vector defined inductively as: 
	\[
	\begin{aligned}
	\weight{\nil} & = 0 &
	\weight{\bang x \inp { \tilde{y}}.P	}&  = 0 
	\\
	\weight{x\inp { \tilde{y}}.P} & = \weight{P} &
	\weight{\ov x\out { \tilde{y} }.P} & = \weight{P} + \zerovec_{\levelof{x}} 
	\\
	\weight{P \pp Q} & = \weight{P} + \weight{Q} &
	\weight{\res {x}P} & = \weight{P} &
	\end{aligned}
	\]	
\end{definition}		

We have the following  results, whose proof is as in~\cite{DS06}:

\begin{proposition}%[Weights Strictly Reduce in Computation]
	\label{prop:lvlweightreduce}
	If $ \Gamma \lt P $ and $ P \lts{\tau} P' $ then $ \weight{P'} \wtless \weight{P}$.
\end{proposition}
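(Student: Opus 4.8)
The plan is to proceed by induction on the derivation of $P \lts\tau P'$, reducing the whole transition to the single synchronisation at its core and then comparing $\weight{P'}$ and $\weight P$ coordinate by coordinate in the vector order $\wtless$. Throughout I would keep the hypothesis $\Gamma \lt P$ available, so that inversion of typing supplies typings for the subterms that actually react.

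First I would isolate two facts about the non-$\tau$ transitions feeding rule \lvltype{Tau}. For outputs, an induction on the derivation (over \lvltype{Out}, \lvltype{Open}, \lvltype{Par}, \lvltype{Res}) shows that whenever $R \lts{\res{\widetilde b}\ov x\out{\tilde v}} R'$ we have $\weight R = \weight{R'} + \zerovec_{\levelof x}$: firing an output simply discards one active output prefix on $x$, and restriction/opening leave weights untouched. For the receiving side, an analogous induction isolates the fired input. Crucially, a name-for-name substitution preserves weights, $\weight{R\substj{\tilde v}{\tilde y}} = \weight R$, because $\weight\cdot$ depends on an output subject only through its level, and in a typed synchronisation the transmitted values and the bound parameters carry the same types, so $\levelof{v_i} = \levelof{y_i}$. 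Finally I would record the monotonicity of the order: if $u \wtless v$ then $u + w \wtless v + w$, since adding a common vector shifts both sides equally and leaves the highest differing coordinate unchanged.

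With these in hand the inductive cases \lvltype{Par} and \lvltype{Res} are routine: restriction leaves the weight unchanged, while for parallel composition, inversion of $\Gamma \lt P$ yields a typing of the reducing component, the induction hypothesis applies to it, and monotonicity lifts the strict decrease to the whole process. The interesting case is \lvltype{Tau}, where $P = P_1 \pp Q_1$ with $P_1 \lts{\res{\widetilde b}\ov x\out{\tilde v}} P_1'$, $Q_1 \lts{x\inp{\tilde v}} Q_1'$ and $P' = \res{\widetilde b}(P_1' \pp Q_1')$. The output fact gives $\weight{P_1'} = \weight{P_1} - \zerovec_{\levelof x}$, so it only remains to measure the receiving side, and here I would split on whether the input fired via \lvltype{In} or via \lvltype{Rep}. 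In the linear case $Q_1'$ is merely a substitution instance of the continuation, so $\weight{Q_1'} = \weight{Q_1}$ and hence $\weight{P'} = \weight P - \zerovec_{\levelof x}$, which is strictly below $\weight P$ (strict decrease at coordinate $\levelof x$, all higher coordinates unchanged).

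The replicated case is where termination is genuinely earned, and it is the step I expect to be the crux. Here $Q_1'$ is the persistent server in parallel with a fresh copy of its body $R_0$, so $\weight{Q_1'} = \weight{Q_1} + \weight{R_0}$ and thus $\weight{P'} = \weight P - \zerovec_{\levelof x} + \weight{R_0}$. To conclude I would invoke typing: inverting $\Gamma \lt P$ down to the reacting server occurrence (propagating the typing through the parallel/restriction context, justified by Theorem~\ref{thm:lvltypepres}) exposes that $R_0$ was typed by \lvltype{Un-In_1} or \lvltype{Un-In_2}, whose side condition $\forall b \in \outsubj{R_0}, \levelof b < \levelof x$ forces every summand of $\weight{R_0}$ onto a coordinate strictly below $\levelof x$. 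Hence $\weight{R_0}$ vanishes at coordinate $\levelof x$ and above, so in $\weight{P'} = \weight P - \zerovec_{\levelof x} + \weight{R_0}$ the coordinates above $\levelof x$ are unchanged while coordinate $\levelof x$ drops by one, giving $\weight{P'} \wtless \weight P$ by definition. The main obstacle is precisely marrying this level side condition to the order: even though spawning a copy may \emph{increase} the weight at low coordinates, the lexicographic nature of $\wtless$ (comparing from the highest coordinate downward) guarantees that the decrement at $\levelof x$ dominates.
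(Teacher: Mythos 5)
Your proof is correct and is essentially the paper's own proof, which the paper states only by deferral to Deng and Sangiorgi~\cite{DS06}: the same induction on the transition derivation, the same auxiliary facts (output transitions subtract $\zerovec_{\levelof{x}}$, typed substitution preserves weight, $\wtless$ is compatible with addition), and the same crux, namely that the side condition of \lvltype{Un-In_1}/\lvltype{Un-In_2} confines the support of $\weight{R_0}$ strictly below coordinate $\levelof{x}$, so the top-down lexicographic order absorbs the spawned copy's weight. Only trivia differ: the transmitted values may carry types \emph{dual} to (not identical with) those of the bound parameters, but duality preserves the weight annotation, so $\levelof{v_i}=\levelof{y_i}$ holds anyway; and the typing of the reacting server is obtained by inversion of the typing derivation, not via the type-preservation theorem.
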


%We may now state the main result:

\begin{theorem}[Termination]
\label{t:lvlsn}
	If $\Gamma \lt P$ then $P$ terminates.
\end{theorem}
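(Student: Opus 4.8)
The plan is a standard termination-by-measure argument, treating the weight $\weight{\cdot}$ as a measure that strictly decreases along reductions into a well-founded order. Three ingredients are needed: (i) the weight strictly decreases on every $\tau$-step; (ii) well-typedness under a fixed level function is preserved by $\tau$-steps; and (iii) the order $\wtless$ is well-founded on the set of weight vectors actually occurring. Ingredients (i) and (ii) are exactly \Cref{prop:lvlweightreduce} and \Cref{thm:lvltypepres}, so the real work lies in establishing (iii) and assembling the contradiction.

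First I would show that all weights occurring along a reduction sequence from $P$ live in a single finite-dimensional space $\nat^N$. The level function $l$ is fixed once and for all, since \Cref{thm:lvltypepres} reuses the same $l$ after each step, and the dimension of $\weight{Q}$ is governed by the largest level $\levelof{x}$ among names $x$ carrying an active output in $Q$. The crucial point --- and the step I expect to be the main obstacle --- is that $\tau$-reductions never introduce an output whose level exceeds the maximum level already present in $P$; this must be checked even for steps that unfold a server $\lvlservpoly{x}{y}{z}{Q}$ and spawn a fresh copy of its body, which is the only feature capable of creating unboundedly many outputs. Boundedness here is guaranteed precisely by the side condition $\forall b \in \outsubj{Q},\ \levelof{b} < n$ of the server rules \lvltype{Un-In_1} and \lvltype{Un-In_2}, which forces every active output in a spawned body to carry a level strictly below that of the server channel. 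Hence there is a fixed $N$ such that every reachable process has weight in $\nat^N$ (padding shorter vectors with zeros, as in the definition of vector sum). Verifying that this bound survives the adaptation to linear/unrestricted qualifiers and dyadic communication is where Deng and Sangiorgi's discipline does its essential work.

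Next I would observe that $\wtless$ restricted to $\nat^N$ is well-founded: by definition $v_1 \wtless v_2$ compares the two vectors at the most significant coordinate on which they differ (reading from index $N$ down to $1$), which is exactly the lexicographic order on $\nat^N$ and hence a well-order admitting no infinite strictly descending chain. Finally I would assemble the contradiction: suppose $\Gamma \lt P$ yet $P$ admits an infinite sequence $P = P_0 \lts{\tau} P_1 \lts{\tau} P_2 \lts{\tau} \cdots$. Repeated use of \Cref{thm:lvltypepres} gives $\Gamma \lt P_i$ for every $i$ under the same $l$, so \Cref{prop:lvlweightreduce} applies at each step and yields $\weight{P_{i+1}} \wtless \weight{P_i}$. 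By the boundedness argument all these weights lie in $\nat^N$, so $(\weight{P_i})_{i \in \nat}$ is an infinite strictly $\wtless$-descending chain in $\nat^N$, contradicting well-foundedness. Therefore no such infinite sequence exists, and $P$ terminates.
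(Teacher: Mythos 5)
Your proposal is correct and follows essentially the same route as the paper, which does not spell out the argument but defers it to Deng and Sangiorgi~\cite{DS06}: the termination theorem is obtained exactly by combining \Cref{thm:lvltypepres} (preservation under the same level function) with \Cref{prop:lvlweightreduce} (strict weight decrease on every $\tau$-step) and the well-foundedness of the lexicographic order $\wtless$ on weight vectors. Your additional care about the dimension bound and the role of the server side condition $\forall b \in \outsubj{P},\ \levelof{b} < n$ is precisely the content of the argument in~\cite{DS06} that the paper implicitly reuses.
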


%\begin{proof}
%	Let us consider the following two cases:
%	\begin{enumerate}
%		\item When $ \weight{ P } = \zerovec $ then $P$ has no active outputs. Therefore, $P$ has no enabled synchronizations, and so no reductions can take place. Thus, $P$ terminates.
%		\item When $ \zerovec \wtless \weight{ P } $ then the set of active inputs is non-empty. If none of the channels in the set of active inputs enables a synchronization then the process must terminate. If syncronizations are enabled then, by \Cref{prop:lvlweightreduce}, any synchronization strictly reduces the weight of the reduced process. As processes are of finite length and hence weights are finite we may inductively reduce until we reach a process of weight $\zerovec$ or such that the set of active outputs do not allow for synchronizations.
%	\end{enumerate}
%\end{proof}

\section{\lvllang: A Class of Terminating Processes}
\label{s:lvllang}

Here we define and study \lvllang, a class of terminating \vasco processes induced by the weight-based type system given in \Cref{s:weight}, which leverages  
translations on processes and types/contexts, denoted $\sttoltp{\cdot}{}$ and $\sttoltt{\cdot}{l}{}$, respectively.
Concretely, \lvllang is defined as follows:

\begin{definition}[$\lvllang$]
\label{d:lvllang}
We define:
 $$ \lvllang = \{ P \in \vasco \mid  \exists \Gamma , l \ s.t. \ (\Gamma \st P) \land  \ %\sttoltj{\Gamma \st P}{l}{} = 
\sttoltt{\Gamma}{l}{} \lt \sttoltp{P}{}     \} $$
\end{definition}	

Hence, \lvllang contains those processes from $\vaslang$ (\Cref{d:vaslang}) whose translation gives typable $\pilvl$ processes.
By \Cref{t:lvlsn},   $\lvllang$ thus provides a characterization of terminating processes in $\vaslang$. 
In the following we formally define the translations $\sttoltp{\cdot}{}$ and $\sttoltt{\cdot}{l}{}$, and establish their main properties. 
Our main result is that $\lvllang \subset \vaslang$ (\Cref{t:wsubsets}): there are typable processes in $\vaslang$ which are not terminating under the weight-based approach.

\subsection{The Typed Translation}
Our translation is \emph{typed}, i.e., the translation of a \vasco process depends on its associated (session) types. 
We first present the translation on processes and types separately; then, we   combine them to define the translation of a typing judgment.
\begin{definition}[Translating Processes]
\label{d:encvs}
The translation   $\sttoltp{\cdot}{}: \vasco \to \pilvl $ is given in \Cref{f:lvltypeencod} (top), where we assume $z$ is fresh.
\end{definition}
 
We discuss some interesting cases in the translation of processes:
\begin{itemize} 
\item The shape of process $\sttoltp{  \vasin{\lin}{x}{y}{P}  }{} $ depends on whether  $x$ has a linear or an unrestricted type: this is due to rule $\vastype{In}$ (\Cref{f:vasorigrules}) which depends on a qualifier $q_2$ that can be linear or unrestricted. If $x: \lin \wn T.S$ then the translation is  $\lvlin{x}{y}{z}{\sttoltp{P\substj{z}{x}}{}}$, with the continuation along $z$; otherwise, in case $x: \recur{\wn T}$, the translation is $\lvlin{x}{y}{z}{\sttoltp{P}{}}$, since there is no continuation in $x$, as explained in the description of rule \vastype{In} in \Cref{f:vasorigrules}.
% In fact, by inspecting rule $\vastype{In}$ in \Cref{f:vasorigrules}, for the case $q_2=\un$, we would have $x: \lin \wn T.S$ in $\Gamma_2$ and have to add $x:S$ in $\Gamma_2$, as in $(\Gamma_2+ x:S)$, to type $P$, which is not possible. Therefore, $P$ would have no continuation on $x$ and the type of $x$ would be irrelevant in typing $P$.
 \item The process $\sttoltp{   \vasin{\un}{x}{y}{P}  }{} $ is simply an unrestricted input process $\lvlserv{x}{y}{z}{\sttoltp{P}{}} $.
\item The process $\sttoltp{ \vasout{x}{y}{P}  }{} $, the translation of a bound send, also depends on the type of $x$ and the justification for it is similar to the translation of linear inputs described above. 
\item The process $ \sttoltp{\res {xy}P}{}  $ is simply $ \res {z} \sttoltp{P\substj{z}{x}\substj{z}{y}}{}$: the co-variables $x,y$ are replaced by the restricted (fresh) name $z$. The duality between the types of $x$ and $y$, say $x:L$ and $y:\dual{L}$, must be preserved by the type of $z$ in $\pilvl$. This correspondence will become evident when discussing the translation of judgements (\Cref{def:firsttransl_judgments}).

\end{itemize}

\begin{figure}[!t]
	\begin{mdframed}
	\[
	\begin{aligned}
		\sttoltp{ \nil  }{} &=  \nil  \\
		\sttoltp{  P \pp Q  }{} 	&=  \sttoltp{P}{} \pp \sttoltp{Q}{}  \\
		\sttoltp{  \res {xy} P  }{} 	&=   \res {z} \sttoltp{P\substj{z}{x}\substj{z}{y}}{}  \\
		\sttoltp{  \vasin{\lin}{x}{y}{P}  }{} 	&=  
				\begin{cases}
						\lvlin{x}{y}{z}{\sttoltp{P\substj{z}{x}}{}} & \text{If } x: \lin \wn T.S \\
						\lvlin{x}{y}{z}{\sttoltp{P}{}} & \text{If } x: \recur{\wn T}\\
				\end{cases}  \\
				\sttoltp{   \vasin{\un}{x}{y}{P}  }{} 	&=  \lvlserv{x}{y}{z}{\sttoltp{P}{}}  \\
				\sttoltp{ \vasout{x}{y}{P}  }{} 	&= 
				\begin{cases}
						\res{z} \lvlout{x}{y}{z}{\sttoltp{P\substj{z}{x}}{}} & \text{If }x: \lin \oc T.S \\
						\lvlout{x}{y}{z}{\sttoltp{P}{}} & \text{If }x: \recur{\oc T} \\
				\end{cases}
	\end{aligned}
	\]
		\hrule
		\smallskip
		\[
	\begin{aligned}
		\sttoltt{ \Gamma }{l}{} 	&= \sttoltt{ x_1 : T_1}{l}{} , \cdots , \sttoltt{x_n : T_n }{l}{}\\
		\sttoltt{x : T}{l}{}		&= x : \sttoltt{ T }{l}{x}\\
		\sttoltt{\nilT}{l}{x} &=  \lvlnilT \\
		\sttoltt{ \lin\  \vasreci{T}{S} }{l}{x}	&= \lvlint{\levelof{x}}{\sttoltt{T}{l}{\alpha}}{\sttoltt{S}{l}{x}}\\
		\sttoltt{ \lin\  \vassend{T}{S} }{l}{x}	&= \lvloutt{\levelof{x}}{\sttoltt{T}{l}{\beta}}{\sttoltt{S}{l}{x}}\\
		\sttoltt{ \recur{\wn T} }{x}{l}	&= \lvlunst{\levelof{x}}{\sttoltt{T}{l}{\gamma}}\\
		\sttoltt{ \recur{\oc T} }{x}{l}	&=\lvlunct{\levelof{x}}{\sttoltt{T}{l}{\gamma}}\\
	\end{aligned}
	\]
	\end{mdframed}
	\jspace
\caption{From \vasco to \pilvl  (\Cref{d:encvs,d:encvst})}
\label{f:lvltypeencod}
\jspace
\end{figure}

\begin{definition}[Translating Types/Contexts]
\label{d:encvst}
	The translation  $\sttoltt{-}{l}{}$ of session types and contexts is given in \Cref{f:lvltypeencod} (bottom). 
	The translation of  contexts
is parametric on a level function~$l$.  In particular, the translation of a type assignment $\sttoltt{x:T}{l}{}$, relies on an auxiliary translation $x:\sttoltt{T}{l}{x}$, which is deemed to be assigned a level $l(x)$ in the translated type $\sttoltt{T}{l}{x}$, depending on the shape of $T$. Other names, denoted $\alpha, \beta,\gamma\ldots$, are necessary when translating within types.
\end{definition}

The translation $\sttoltt{-}{l}{}$ follows  the continuation-passing approach of~\cite{DGS12} to encode session types into link types. 
The translation of tail-recursive types is rather direct, and self-explanatory.

%As expected, linear receive/send (tail-recursive) session types are translated to linear input/output types (respectively) in $\pidill$ in which a level is assigned to the name carrying the input/output behaviour. Similarly, unrestricted receive/send (tail-recursive) session types are translated to 
%unrestricted server/client types in $\pidill$, respectively.

By combining the translations of types and processes in \Cref{f:lvltypeencod} we obtain a translation of type judgements / derivations in $\vasco$ into type judgements / derivations in $\pilvl$. 
We use an auxiliary notation:
%
%\begin{notation}
%	We write 
%	$ \lvljoinsub{P}{z}{x,y}$ to stand for $P\substj{z}{x}\substj{z}{y}$.
%\end{notation}

%\begin{theorem}[Substitution Lemma]\label{prop:sublvl}
%Suppose $\Gamma , u: T , v : S \lt P$, with $T = \dual{S}$	and $ \contun{T}$ for a level function $l$. Then  $\Gamma , z: \dualjoin{T}{S} \lt \lvljoinsub{P}{z}{u,v}$ for $l$.
%\end{theorem}

%

%\daniele{DN: We need to move the definition of $\lvllang$ somewhere before. }
\begin{definition}[Translating  Judgements/Derivations]\label{def:firsttransl_judgments}
The translation of a type judgment for $\vasco$ into a type judgment for $\pilvl$ is parametric on the level function  $l:\mathcal{N} \to \mathbb{N}$, and is defined as: 
\begin{align*}
\sttoltj{\Gamma \st P}{l}{}& =\sttoltt{\Gamma}{l}{} \lt \sttoltp{P}{}
\\
\sttoltj{\Gamma, x:T \st x:T}{l}{} &= \sttoltt{\Gamma}{l}{}, x:\sttoltt{T}{l}{x}  \lt x:\sttoltt{T}{l}{x} 
	\end{align*}
	This translation induces  an inductive construction of the translation  of {type  derivations} in $\pilvl$ from type derivations in $\vasco$, denoted as:
\begin{mathpar}
	\sttoltj{{\scriptsize \vastype{Rule}}~\inferrule{\Upsilon_i \\ \forall i \in I}{\Gamma \st P}}{l}{} ~=~
		{\scriptsize\lvltype{Rule}}~\inferrule{\sttoltj{\Upsilon_i}{l}{}  \\ \forall i \in I}{ \sttoltt{\Gamma}{l}{} \lt \sttoltp{P}{}}
\end{mathpar} 
where $\Upsilon_i$ denotes a set of derivations used to prove $\Gamma \st P$. 

The translation, of which  \Cref{fig:encodWtoSi} gives an excerpt, relies on analyzing the last rule~\vastype{Rule} applied in the derivation $\Gamma \st P$ and the unfolding of the translation of judgements, mapping to a derivation $ \sttoltt{\Gamma}{l}{} \lt \sttoltp{P}{}$ in  $\pilvl$,  in which the last rule applied is~\lvltype{rule}.
\end{definition}

%  \joe{I am not sure I am using the correct word here}.
%	\daniele{The part below is confusing $\sttoltt{\cdot}{}{}$ is not a translation of judgements. I think you mean 'translation of contexts' or 'type assignments'}
%	
%	the parameters $\alpha, \beta, \gamma$ of undefined names within the translation of judgements $\sttoltt{ \cdot }{}{}$ are defined in  \Cref{f:lvltypeencodjudge} by unfolding of the translation of judgments. We omit the label $\vastype{Var}$ in the encoding of judgments.

\begin{figure*}
\small
$$
\begin{array}{|c|}
\hline
\textbf{Case: Input}\\
\hline
\begin{array}{lcl}
&&\\
%\vastype{Lin-In_1} && \lvltype{Lin-In_1}\\
\mprset{flushleft} 
\sttoltj{
\inferrule[\vastype{Lin-In_1} ]
{\Gamma_1' \st x: \lin \vasreci{T}{S}   \\
\Gamma_2 , x: S , y : T \st P } 
{ \Gamma_1'  \contcomp \Gamma_2 \st  \vasin{\lin}{x}{y}{P}  }
}{l}{} & = &
\mprset{flushleft}
\inferrule*[left= \lvltype{Lin-In_1}] {
					\sttoltt{\Gamma_1'}{l}{}  \lt x: \lvlint{l(x)}{\sttoltt{T}{l}{y}}{\sttoltt{S}{l}{z}} \\\\
					\sttoltt{\Gamma_2}{l}{} ,   y:\sttoltt{T}{l}{y} , z:\sttoltt{S}{l}{z}  \lt \sttoltp{P\substj{z}{x}}{}  \\ l(x) = l(z) }
				{
					\sttoltt{\Gamma_1'}{l}{} 	\contcomp
					\sttoltt{\Gamma_2}{l}{}
					\lt  x \inp {y , z}.\sttoltp{P\substj{z}{x}}{}
				}\\[.3cm] 
\text{where }\Gamma_1'=\Gamma_1,  x : \lin \vasreci{T}{S}&& 	\text{where } \sttoltt{\Gamma_1'}{l}{}=\sttoltt{\Gamma_1}{l}{} , x: \lvlint{l(x)}{\sttoltt{T}{l}{y}}{\sttoltt{S}{l}{z}}	\\[.2cm]
\hdashline
\mprset{flushleft}
&&\\
%\vastype{Lin-In_2}&&\lvltype{Lin-In_2}\\
\sttoltj{
    \inferrule[\vastype{Lin-In_2}]
     {  \Gamma_1' \st x: \recur{\wn T}  \\
       \Gamma_2', y : T \st P
		}
	{  \Gamma_1' \contcomp \Gamma_2'  \st  \vasin{\lin}{x}{y}{P} }
				}{l}{}  &=&
				\mprset{flushleft}
\mprset{flushleft}
\inferrule*[left=\lvltype{Lin-In_2}]
{
\sttoltt{\Gamma_1'}{l}{}  \lt x: \lvlunst{l(x)}{\sttoltt{T}{l}{y}} \\
\sttoltt{\Gamma_2'}{l}{} ,  z:\lvlnilT  \lt \sttoltp{P}{}  }
{  \sttoltt{\Gamma_1'}{l}{} 
\contcomp
\sttoltt{\Gamma_2'}{l}{}  
					\lt x\inp {y , z}.\sttoltp{P}{} 
	}\\[8mm]
\text{ where }\Gamma_1'= \Gamma_1 , x: \recur{\wn T} \text{ and }	\Gamma_2'= \Gamma_2 , x: \recur{\wn T} &&  \text{where }\sttoltt{ \Gamma_1'}{l}{}= \sttoltt{\Gamma_1}{l}{} , x: \lvlunst{l(x)}{\sttoltt{T}{l}{y}} \text{ and } \sttoltt{ \Gamma_2'}{l}{}= \sttoltt{\Gamma_2}{l}{} , x: \lvlunst{l(x)}{\sttoltt{T}{l}{y}}\\[.2cm]
%\qquad \quad \Gamma_2'= \Gamma_2 , x: \recur{\wn T}, 	&&  \qquad \quad \sttoltt{ \Gamma_2'}{l}{}= \sttoltt{\Gamma_2}{l}{} , x: \lvlunst{l(x)}{\sttoltt{T}{l}{y}}\\[.2cm]
\hdashline
&&\\
%\vastype{Un-In}&& \lvltype{Un-In_1}\\
\sttoltj{
\mprset{flushleft}
\inferrule[\vastype{Un-In}]
{ 
\Gamma' \st x: \recur{\wn T} \\
\Gamma' , y : T \st P }
{  \Gamma, x: \recur{\wn T} \st   \vasin{\un}{x}{y}{P}  }}{l}{}
			& = &
\mprset{flushleft}
\inferrule*[left= \lvltype{Un-In_1}]
{
\sttoltt{\Gamma'}{l}{} \lt x:\lvlunst{l(x)}{\sttoltt{T}{l}{y}} \\\\
\sttoltt{\Gamma'}{l}{} , y:\sttoltt{T}{l}{y},  z : \lvlnilT   \lt \sttoltp{P}{}  \\
 \forall b \in \outsubj{\sttoltp{P}{}}, \ \levelof{b} < \levelof{x}
}
{
\sttoltt{\Gamma}{l}{} ,  x:\lvlunst{l(x)}{\sttoltt{T}{l}{y}}  \lt \bang x \inp {y , z}.\sttoltp{P}{} 
}\\[5mm]
\text{ where } \Gamma'=\Gamma, x: \recur{\wn T}&&\text{ where }\sttoltt{\Gamma'}{l}{} =\sttoltt{\Gamma}{l}{} , x: \rlevel{l(x)}{\sttoltt{T}{l}{y}} \\[2mm]
\end{array}\\
\hline
\textbf{Case: Output}\\  
\hline
\\
\begin{array}{lcl}
%\vastype{Lin-Out} &&\lvltype{Lin-Out}\\
\sttoltj{
\mprset{flushleft}
\inferrule[\vastype{Lin-Out} ]
{
\Gamma_1' \st x: \lin \vassend{T}{S} \\
\Gamma_2' \st y: T \\
\Gamma_3, x:S \st P
}
{ \Gamma_1' \contcomp	\Gamma_2'\contcomp \Gamma_3 \st   \vasout{x}{y}{P}} 
}{l}{}
&=&
\mprset{flushleft}
\inferrule*[left=\lvltype{Lin-Out}]
{\inferrule{
\sttoltt{\Gamma_1'}{l}{} \lt x: \lvloutt{l(x)}{\sttoltt{T}{l}{y}}{\sttoltt{S}{l}{z}}\\\\
\sttoltt{\Gamma_2'}{l}{} \lt  y:\sttoltt{T}{l}{y} \\ 
\sttoltt{\Gamma_3}{l}{} , z : \sttoltt{S}{l}{z} \lt \sttoltp{P\substj{z}{x}}{}    \\ \levelof{x} = \levelof{z}
}
{
\sttoltt{(\Gamma_1'}{l}{}\contcomp \sttoltt{\Gamma_2'}{l}{} \contcomp \sttoltt{\Gamma_3}{l}{}), 
\lvlsplit{z}{\sttoltt{S}{l}{z}}  \lt \ov x \out { y,z }.(\sttoltp{P\substj{z}{x}}{}) 
}
}
{\sttoltt{\Gamma_1'}{l}{}\contcomp \sttoltt{\Gamma_2'}{l}{} \contcomp \sttoltt{\Gamma_3}{l}{} 
					 \lt \res{z} \ov x \out { y,z }.(\sttoltp{P\substj{z}{x}}{})
				}\\[7mm]
\text{ where } \Gamma_1'=\Gamma_1, x: \lin \vassend{T}{S} \text{ and } \Gamma'_2=\Gamma_2, y: T&&\text{where }\sttoltt{\Gamma_1'}{l}{}=  \sttoltt{\Gamma_1}{l}{}, x: \lvloutt{l(x)}{\sttoltt{T}{l}{y}}{\sttoltt{S}{l}{z}} \text{ and } \sttoltt{\Gamma_2'}{l}{}= \sttoltt{\Gamma_2}{l}{}, y:\sttoltt{T}{l}{y}\\[2mm] 
%\qquad \quad \Gamma'_2=\Gamma_2, y: T &&\qquad \quad \sttoltt{\Gamma_2'}{l}{}= \sttoltt{\Gamma_2}{l}{}, y:\sttoltt{T}{l}{y}\\[2mm]
%%%%%%
\hdashline
&&\\
%%%%%%
%\vastype{Un-Out} && \lvltype{Un-Out_1}\\
\sttoltj{
\mprset{flushleft}
\inferrule[\vastype{Un-Out}]
{
 \Gamma_1' \st x: \recur{\oc T}	\\
 \Gamma_2' \st y: T\\
\Gamma_3' \st P
}
{  \Gamma_1'\contcomp	\Gamma_2' \contcomp \Gamma_3' \st   \vasout{x}{y}{P}} 
				}{l}{} 
			&=&
\mprset{flushleft}
\inferrule*[left=\lvltype{Un-Out_1}]
{
\sttoltt{\Gamma_1'}{l}{} \lt x: \rlevel{l(x)}{\sttoltt{T}{l}{y}}\\\\
\sttoltt{\Gamma_2'}{l}{} \lt y:\sttoltt{T}{l}{y}\\ \\
\sttoltt{\Gamma_3'}{l}{}, z: \lvlnilT\lt \sttoltp{P}{} }
{
 \sttoltt{\Gamma_1'}{l}{} \contcomp	\sttoltt{\Gamma_2'}{l}{} 
					\contcomp \sttoltt{\Gamma_3'}{l}{}
					\lt \ov x \out { y,z }.\sttoltp{P}{} 
				}\\[7mm]
\text{ where } \Gamma_1'= \Gamma_1 , x: \recur{\oc T}, \  \Gamma_2'= \Gamma_2 , x: \recur{\oc T}, y: T  && \text{where } \sttoltt{\Gamma_1'}{l}{} =\sttoltt{\Gamma_1}{l}{} , x: \rlevel{l(x)}{\sttoltt{T}{l}{y}}, \  \sttoltt{\Gamma_2'}{l}{}=\sttoltt{\Gamma_2}{l}{}, x: \rlevel{l(x)}{\sttoltt{T}{l}{y}} , y:\sttoltt{T}{l}{y}\\
%\qquad \quad \Gamma_2'= \Gamma_2 , x: \recur{\oc T}, y: T   && \qquad \quad \sttoltt{\Gamma_2'}{l}{}=\sttoltt{\Gamma_2}{l}{}, x: \rlevel{l(x)}{\sttoltt{T}{l}{y}} , y:\sttoltt{T}{l}{y} \\
\qquad \quad \Gamma_3'= \Gamma_3 , x: \recur{\oc T} &&  \qquad \quad \sttoltt{\Gamma_3'}{l}{} =\sttoltt{\Gamma_3}{l}{} , x: \rlevel{l(x)}{\sttoltt{T}{l}{y}}\\[2mm]
\end{array}\\
\hline
\end{array}
$$
\jspace
\caption{From derivations in \vasco to derivations in \pilvl  (excerpt, cf.~\Cref{def:firsttransl_judgments})}\label{fig:encodWtoSi}
\jspace
\end{figure*}

%\begin{figure}[!t]
%%\scalebox{0.8}{
%\small
%\[
%\begin{array}{|rcl|}
%\hline
%&&\\
%\sttoltj{
%\inferrule[\vastype{Var}]{ \contun{\Gamma} }
%				{ \Gamma , x:T \st x:T }
%				}{l}{} 
%			& =&
%				
%				\inferrule[\lvltype{Var_1}]{ \contun{\sttoltt{\Gamma}{l}{}} }
%				{ \sttoltt{\Gamma}{l}{} , x:\sttoltt{T}{l}{x} \lt x:\sttoltt{T}{l}{x} }
%\\[7mm]
%\hdashline
%&&\\
% \sttoltj{\inferrule[\vastype{Nil} ]{ \contun{\Gamma}}
%				{\Gamma \st \nil}}{l}{} &=& \inferrule*[left= \lvltype{Nil}]{ \contun{\sttoltt{\Gamma}{l}{}}}
%				{ \sttoltt{\Gamma}{l}{} \lt \nil  }\\[7mm]
%%
%\hdashline
%&&\\
%\sttoltj{\inferrule[\vastype{Par}]{ \Gamma_1 \st P \\ \Gamma_2 \st Q }
%				{ \Gamma_1 \contcomp \Gamma_2 \st P \pp Q  }}{l}{}
%			&=&
%				\inferrule*[left=\lvltype{Par}]{ \sttoltt{\Gamma_1}{l}{} \lt \sttoltp{P}{}   \\ \sttoltt{\Gamma_2}{l}{} \lt \sttoltp{Q}{}  }
%				{ \sttoltt{\Gamma_1}{l}{} \contcomp \sttoltt{\Gamma_2}{l}{} \lt \sttoltp{P}{} \pp \sttoltp{Q}{}  }\\[7mm]
%%
%\hdashline
%&&\\
% \sttoltj{\inferrule[ \vastype{Res} ]{ \Gamma , x:T ,y: \dual{T} \st P }
%				{ \Gamma \st  \res {xy} P }}{l}{} 
%			&=&
%				\inferrule*[left=\lvltype{Res}]{ 
%					\sttoltt{\Gamma}{l}{} , \lvlsplit{z}{\sttoltt{T}{l}{x}} \lt \lvljoinsub{\sttoltp{P}{}}{z}{x,y}
%				}
%				{ \sttoltt{\Gamma}{l}{} \lt \res {z} \sttoltp{P\substj{z}{x}\substj{z}{y}}{} }\\[7mm]
%%
%\hline
%\end{array}
%\]
%%}
%\caption{From derivations in \vasco to derivations in \pilvl (\Cref{def:firsttransl_judgments}, Part 2)\label{fig:encodWtoSii}}
%\end{figure}

\subsection{Results}
In general, the translation of a $P \in \vaslang$ is not necessarily typable in $\pilvl$; this occurs when, e.g., $P$ is non-terminating.  
We focus on processes in $\vaslang$ that are typable in $\pilvl$, and therefore, are terminating.

\begin{notation}
	We write $\sttoltt{\Gamma}{l}{} \lt \sttoltp{P}{}$ if $\sttoltj{\Gamma \st P}{l}{}$ holds, for some~$l$.
	
	% \daniele{DN: we don't need this notation. It was established above already}
\end{notation}

Our translations are correct, in the following sense:

\begin{theorem}[Operational Completeness]\label{thm:op_completeness}
	Let $ P \in \lvllang$ such that  $\sttoltt{\Gamma}{l}{} \lt \sttoltp{P}{}$,  for some level function $l$. Then there exists $R \in \lvllang$ such that $P \vasred Q \implies\sttoltp{P}{}{} \lts{ \tau } \sttoltp{R}{}{}$  and $R \equiv Q$.
\end{theorem}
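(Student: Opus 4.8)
The plan is to reason by induction on the derivation of the \vasco reduction $P \vasred Q$, with a case analysis on the last rule applied (\Cref{f:vascosymantics}). Two auxiliary lemmas about the translation will be needed up front. The first is a \emph{substitution lemma} stating that $\sttoltp{\cdot}{}$ commutes with the (type-preserving) substitutions used in the reduction rules, i.e.\ $\sttoltp{Q\substj{v}{z}}{} \equiv \sttoltp{Q}{}\substj{v}{z}$, with due care for the fresh continuation names the translation threads through endpoints. The second is a \emph{congruence lemma}: $P \equiv P'$ in \vasco implies $\sttoltp{P}{} \equiv \sttoltp{P'}{}$ in \pilvl, proved by checking each axiom of $\equiv$ and exploiting that $\sttoltp{\cdot}{}$ is homomorphic on $\vaspara{\cdot}{\cdot}$ and $\vasres{\cdot}{\cdot}$ (\Cref{f:lvltypeencod}). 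Membership $R \in \lvllang$ will not need a separate argument: once the simulating transition $\sttoltp{P}{} \lts{\tau} \sttoltp{R}{}$ is in hand, Type Preservation for \pilvl (\Cref{thm:lvltypepres}) gives $\sttoltt{\Gamma}{l}{} \lt \sttoltp{R}{}$ under the \emph{same} level function $l$, and $R$ is \vasco-typable because $R \equiv Q$ and \vasco typing is preserved by reduction and by $\equiv$.

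The decisive base cases are the communication rules \Did{R-LinCom} and \Did{R-UnCom}. For \Did{R-LinCom}, translating the redex $\vasres{xy}{(\vaspara{\vasout{x}{v}{P}}{\vaspara{\vasin{\lin}{y}{z}{Q}}{R}})}$ collapses the covariables $x,y$ into a single fresh restricted name $w$, turns the linear output into a bound output $\res{z'}\lvlout{w}{v}{z'}{\cdots}$ and the linear input into $\lvlin{w}{z}{z''}{\cdots}$ (\Cref{f:lvltypeencod}). I would then build the matching silent step by composing \lvltype{Out}/\lvltype{Open} (to extrude the fresh continuation $z'$) with \lvltype{In} via \lvltype{Tau}, and pushing the transition under the restriction with \lvltype{Res}. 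The residual is shown, using the substitution lemma, to be $\equiv \sttoltp{\vasres{xy}{(\vaspara{P}{\vaspara{Q\substj{v}{z}}{R}})}}{}$, so we take $R$ to be the \vasco reduct itself, giving $R \equiv Q$. The case \Did{R-UnCom} is analogous, but the replicated-input rule \lvltype{Rep} keeps the server $\lvlserv{w}{z}{z''}{\cdots}$ present in the residual, exactly mirroring the persistence of the unrestricted input in the \vasco reduct.

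The inductive cases are routine. For \Did{R-Par} and \Did{R-Res} the induction hypothesis supplies a $\tau$-transition for the active subprocess, and the homomorphic shape of the translation together with the contextual LTS rules \lvltype{Par} and \lvltype{Res} lifts it to the whole process; the side conditions on free/bound names are met by Barendregt's convention. For \Did{R-Str}, from $P \equiv P'$, $P' \vasred Q'$, $Q' \equiv Q$ the congruence lemma gives $\sttoltp{P}{} \equiv \sttoltp{P'}{}$, the induction hypothesis gives $\sttoltp{P'}{} \lts{\tau} \sttoltp{R'}{}$ with $R' \equiv Q'$, and the standard fact that $\equiv$ is a strong bisimulation for the \pilvl LTS yields $\sttoltp{P}{} \lts{\tau} \sttoltp{R'}{}$ with $R' \equiv Q$.

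I expect the main obstacle to be the precise bookkeeping of the fresh continuation names created by the continuation-passing translation of sessions. In \vasco a single endpoint is reused along a whole sequence of actions, whereas $\sttoltp{\cdot}{}$ allocates a fresh name $z'$ for each continuation and threads it via an internal substitution; consequently the \pilvl residual is never syntactically equal to $\sttoltp{Q}{}$ but only structurally congruent to the translation of some $R$ --- which is exactly why the statement asks for $R \equiv Q$ rather than $R = Q$. Making the substitution lemma interact correctly with the bound-output/scope-extrusion rule \lvltype{Open}, and checking that the continuation names introduced on the output and input sides are identified in the right way after \lvltype{Tau}, is where the real care is required.
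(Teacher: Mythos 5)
Your overall architecture (induction on the derivation of $P \vasred Q$, a substitution lemma, explicit communication base cases via \lvltype{Out}/\lvltype{Open}/\lvltype{In}/\lvltype{Tau}/\lvltype{Res}, and \Cref{thm:lvltypepres} for membership of $R$ in \lvllang) is reasonable, but there is a genuine gap exactly at the point where you discharge the conclusion. The theorem requires $\sttoltp{P}{} \lts{\tau} \sttoltp{R}{}$ \emph{on the nose}: the target of the transition must literally be the translation of some \vasco process $R$, and the congruence slack $R \equiv Q$ lives entirely on the \vasco side. Your \Did{R-LinCom} case instead shows that the LTS residual is ``$\equiv$'' in \pilvl to $\sttoltp{Q}{}$ and then takes $R := Q$. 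This proves a different, weaker statement, and it cannot be repaired as stated: the \pilvl LTS of \Cref{f:lvllts} contains no structural rule --- indeed the paper defines no structural congruence on \pilvl at all --- so ``$\equiv$ is a strong bisimulation for the \pilvl LTS'' is a claim about an undefined relation, and the transition $\sttoltp{P}{} \lts{\tau} \sttoltp{Q}{}$ is in general simply false. Concretely, for a linear redex $\vasres{xy}{(\vasout{x}{v}{P_1} \pp (\vasin{\lin}{y}{z}{Q_1} \pp R_1))}$ the unique available $\tau$-step yields the residual $\res{w}\res{z'}(\sttoltp{P_1\substj{z'}{x}}{} \pp (\sttoltp{Q_1\substj{v}{z}\substj{z'}{y}}{} \pp \sttoltp{R_1}{}))$, which differs syntactically from $\sttoltp{Q}{}$ (a single restriction) by a now-vacuous restriction on the consumed session name $w$; for \Did{R-UnCom}, rule \lvltype{Rep} places the spawned copy to the \emph{right} of the server while the \vasco reduct places it to the left, so again the residual is not $\sttoltp{Q}{}$; and in the \Did{R-Str} case a bisimulation-up-to argument only produces a target up to an (undefined) \pilvl congruence, not a target in the image of $\sttoltp{\cdot}{}$.

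The existential over $R$ in the statement is there precisely to absorb these mismatches, and the proof must construct $R$ explicitly so that $\sttoltp{R}{}$ \emph{equals} (up to $\alpha$) the residual. For \Did{R-UnCom}, take $R$ to be the reduct with the server and the spawned copy transposed; then $R \equiv Q$ by commutativity/associativity. For \Did{R-LinCom}, take $R = \vasres{ab}{Q}$ with $a,b$ fresh, since $\sttoltp{\vasres{ab}{Q}}{} = \res{w}\sttoltp{Q}{}$ matches the residual; then $R \equiv Q$ must itself be derived from the axioms of \Cref{f:vascosymantics}, e.g.\ $\vasres{ab}{Q} \equiv \vasres{ab}{(\nil \pp Q)} \equiv (\vasres{ab}{\nil}) \pp Q \equiv \nil \pp Q \equiv Q$, using scope extrusion and $\vasres{ab}{\nil} \equiv \nil$ --- a step your write-up never reaches because no witness is ever constructed. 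For \Did{R-Str}, the congruence lemma must be strengthened to a transfer property that stays inside the translation image: if $P \equiv P'$ and $\sttoltp{P'}{} \lts{\tau} \sttoltp{R'}{}$ then $\sttoltp{P}{} \lts{\tau} \sttoltp{R}{}$ for some $R \equiv R'$, proved per axiom of $\equiv$. Relatedly, your substitution lemma needs to hold as $\alpha$-equality, $\sttoltp{Q\substj{v}{z}}{} = \sttoltp{Q}{}\substj{v}{z}$, not merely up to $\equiv$, otherwise the same exact-image problem re-enters through the \lvltype{In} premise of \lvltype{Tau}. Your argument that $R \in \lvllang$ via \Cref{thm:lvltypepres} and \vasco subject reduction/congruence is fine, but only once the transition genuinely lands on $\sttoltp{R}{}$.
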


\begin{theorem}[Operational Soundness]\label{thm:op_soundness}
	Let $P \in \lvllang$ with  $\sttoltt{\Gamma}{l}{} \lt \sttoltp{P}{}$, for some level function $l$. If  $\sttoltp{P}{}{} \lts{ \tau } U$ Then there exists $R,Q \in \lvllang$ such that $P \vasred Q \land R \equiv Q \land U = \sttoltp{R}{}{} $. %$\sttoltp{P}{}{} \lts{ \tau } \sttoltp{R}{}{} \implies P \vasred Q $ and $ R \equiv Q $.
\end{theorem}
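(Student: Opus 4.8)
The plan is to prove operational soundness by induction on the structure of $P$, using inversion on the LTS derivation of $\sttoltp{P}{} \lts{\tau} U$ to locate the synchronizing prefixes and reflect them back to a reduction in \vasco. The cases $P = \nil$ and $P$ a single prefix ($\vasout xv{P'}$ or $\vasin qxy{P'}$) are vacuous, since the translation of a prefix exhibits only input/output transitions (or a bound output, in the linear-output case), never a $\tau$ on its own. For $P = \vaspara{P_1}{P_2}$ we have $\sttoltp{P}{} = \sttoltp{P_1}{} \pp \sttoltp{P_2}{}$ and split on the last rule: if the $\tau$ is derived by \lvltype{Par}, one component performs the $\tau$ alone, the induction hypothesis gives $P_i \vasred Q_i$ with $U_i = \sttoltp{R_i}{}$ and $R_i \equiv Q_i$, and we close with \Did{R-Par}; the alternative, a cross-synchronization via \lvltype{Tau} on a \emph{free} name shared by $P_1$ and $P_2$, is excluded by typability and well-formedness of $P$, since two parallel prefixes on the same free name would have to be an output and an input on a single endpoint (linear) or carry incompatible client/server types (unrestricted).

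The heart of the argument is the restriction case $P = \vasres{xy}{P_0}$, where $\sttoltp{P}{} = \res{z}\sttoltp{P_0\substj zx\substj zy}{}$ and the only rule deriving a $\tau$ under $\res z$ is \lvltype{Res}, yielding $\sttoltp{P_0\substj zx\substj zy}{} \lts{\tau} U_0$ with $U = \res z U_0$. Here the merged name $z$ (replacing the co-variables $x,y$) is the channel of the genuine synchronization. Inverting the inner transition, the output capability comes from a translated output prefix and the input capability from a translated (linear or unrestricted) input prefix on $z$; since the translation is compositional on $\vaspara{\cdot}{\cdot}$, $\res{\cdot}{\cdot}$ and $\nil$, these prefixes arise from an output on $x$ (say) and an input on $y$ in $P_0$. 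Using structural congruence we therefore write $P \equiv \vasres{xy}{(\vaspara{\vasout xv{P_1}}{\vaspara{\vasin qyw{P_2}}{R'}})}$, matching the left-hand side of \Did{R-LinCom} (when $q = \lin$) or \Did{R-UnCom} (when $q = \un$), so that $P \vasred Q$ for the corresponding reduct $Q$. Synchronizations on a name bound strictly inside $P_0$ fall to the induction hypothesis and are lifted by \Did{R-Res}.

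It remains to show $U = \sttoltp{R}{}$ for some $R \equiv Q$, which is where the type-directedness of the translation must be tracked, together with a \emph{substitution lemma} stating $\sttoltp{P\substj vy}{} \equiv \sttoltp{P}{}\substj vy$ (modulo the fresh continuation names). In the linear case the translated output is a bound output $\res{z'}\lvlout xy{z'}{\cdots}$, so the \lvltype{Tau} step proceeds through \lvltype{Open}, extruding $z'$; the residual $\res{z'}(\sttoltp{P_1\substj{z'}x}{} \pp \sttoltp{P_2\substj{z'}w}{})$ must be reconciled, up to $\equiv$, with the canonical translation of $Q = \vasres{xy}{(\vaspara{P_1}{\vaspara{P_2\substj vw}{R'}})}$, the extruded name $z'$ coinciding up to $\alpha$-renaming with the restriction introduced by translating the continuation session on $x$. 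In the unrestricted case the server prefix persists by rule \lvltype{Rep}, mirroring the persistence of $\vasin\un yw{Q'}$ in \Did{R-UnCom}, and the fresh continuation is typed $\lvlnilT$ with no session to reconcile. Membership $R,Q \in \lvllang$ then follows from subject reduction for \vasco (stated above), the preservation of typability under $\equiv$, and type preservation in \pilvl (\Cref{thm:lvltypepres}) applied to the typable $\sttoltp{P}{}$.

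I expect the main obstacle to be precisely this final reconciliation: matching the residual $U$ produced by the LTS with a canonical translation $\sttoltp{R}{}$ up to $\equiv$, because the scope extrusion performed by \lvltype{Open}/\lvltype{Tau} relocates the restriction on the extruded continuation name to the top of the residual, whereas the structural translation of the reduct places it around the continuation of the former output. Establishing that these agree up to $\equiv$, and that the fresh names on both sides can be $\alpha$-identified, requires the substitution lemma and a careful case analysis on whether the communicating prefixes carry linear or unrestricted types, which also dictates the pairing of rules (\Did{R-LinCom} with \lvltype{Open} versus \Did{R-UnCom} with \lvltype{Rep}) in play.
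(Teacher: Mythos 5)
Your overall route---structural induction on $P$, inversion of the LTS derivation, excluding free-name synchronizations between parallel components by typability, mapping a synchronization on the merged covariable name to \Did{R-LinCom}/\Did{R-UnCom}, and reconciling the residual with a canonical translation via a substitution lemma, $\alpha$-renaming, and the slack $R \equiv Q$---is the natural one for this theorem, and your diagnosis of the reconciliation obstacle (the restriction relocated by \lvltype{Open}/\lvltype{Tau}, plus the dead restriction on the old session name, versus the single top-level restriction in $\sttoltp{Q}{}$) is exactly why the statement is phrased with $U = \sttoltp{R}{}$ for some $R \equiv Q$ rather than $U = \sttoltp{Q}{}$.

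There is, however, a concrete gap in the restriction case. You dispatch ``synchronizations on a name bound strictly inside $P_0$'' to the induction hypothesis, but the process performing the $\tau$ is $\sttoltp{P_0\substj{z}{x}\substj{z}{y}}{}$, and $P_0\substj{z}{x}\substj{z}{y}$ is \emph{not} in \lvllang---it is not even typable in \vasco, since the merged name $z$ would have to carry both $T$ and $\dual{T}$ (in \pilvl it is typed with a pairing type $\dualjoin{V}{\dual{V}}$, which has no counterpart in \vasco contexts). So the theorem, taken as IH over members of \lvllang, does not apply to it. Applying the IH to $P_0$ instead does not immediately work either: first, one must extract a \pilvl typing of $\sttoltp{P_0}{}$ from the inverted typing of the merged process (which mentions the pairing type); second, and more seriously, one must relate transitions of $\sttoltp{P_0}{}\substj{z}{x}\substj{z}{y}$ to transitions of $\sttoltp{P_0}{}$, and the merging substitution is \emph{not injective}, so transitions are not simply transported---merging $x$ and $y$ into $z$ is precisely what creates new $\tau$'s. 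To close this case you need either (i) a strengthened induction (equivalently, induction on the transition derivation) whose statement ranges over \pilvl-typable processes that are translations of \vasco processes under a partial merging of restricted covariable pairs, or (ii) an auxiliary lemma stating that every $\tau$ of $\sttoltp{P_0}{}\substj{z}{x}\substj{z}{y}$ is either a synchronization on $z$ or equals $U_0'\substj{z}{x}\substj{z}{y}$ for some $\tau$-transition $\sttoltp{P_0}{} \lts{\tau} U_0'$. As written, the inductive step fails. A minor further slip: in the linear reconciliation the residual of the input side should be $\sttoltp{P_2\substj{v}{w}\substj{z'}{y}}{}$---the received value replaces the input-bound variable and the extruded continuation name replaces the session endpoint---not $\sttoltp{P_2\substj{z'}{w}}{}$.
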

An immediate corollary of \Cref{thm:op_completeness} is that our translation preserves (non-)terminating behaviour, i.e.,  does not map non-terminating processes in $\vaslang$ into terminating processes in $\pilvl$.

\begin{corollary}
$\sttoltp{\cdot}{}$ preserves (non-)terminating behaviour.
\end{corollary}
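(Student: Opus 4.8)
The plan is to read ``preserves (non-)terminating behaviour'' as the statement that $P$ admits an infinite sequence of reductions if and only if $\sttoltp{P}{}$ admits an infinite sequence of $\tau$-transitions, and to obtain each implication by transporting such an infinite sequence across the translation one step at a time. Since a process is non-terminating exactly when it has an infinite reduction (resp. $\tau$-transition) sequence, the corollary reduces to lifting infinite sequences in both directions, using the step-wise operational correspondence already established.

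First I would establish the implication that is immediate from \Cref{thm:op_completeness}: if $P \in \lvllang$ is non-terminating, then so is $\sttoltp{P}{}$. Assume an infinite chain $P = P_0 \vasred P_1 \vasred P_2 \vasred \cdots$, and proceed by building the image chain inductively. Suppose we have reached a process $R_i \in \lvllang$ with $R_i \equiv P_{i+1}$ (taking $R_{-1} = P_0 \equiv P_0$ to start). Because reduction is closed under structural congruence (rule \Did{R-Str}), from $R_i \equiv P_{i+1} \vasred P_{i+2}$ we obtain $R_i \vasred P_{i+2}$, so \Cref{thm:op_completeness} applies to $R_i$ and yields $R_{i+1} \in \lvllang$ with $\sttoltp{R_i}{} \lts{\tau} \sttoltp{R_{i+1}}{}$ and $R_{i+1} \equiv P_{i+2}$. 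Crucially, completeness itself returns the reduct inside \lvllang (underpinned by \Cref{thm:lvltypepres}), so the hypothesis needed for the next application is automatically met and the induction continues. Concatenating the steps gives an infinite $\tau$-sequence $\sttoltp{P}{} \lts{\tau} \sttoltp{R_0}{} \lts{\tau} \sttoltp{R_1}{} \lts{\tau} \cdots$, witnessing non-termination of $\sttoltp{P}{}$. Contrapositively, termination of $\sttoltp{P}{}$ forces termination of $P$, which is exactly the ``no non-terminating process is mapped to a terminating one'' reading stated in the text.

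For the converse direction I would run the symmetric argument with \Cref{thm:op_soundness}. Given an infinite $\tau$-sequence $\sttoltp{P}{} \lts{\tau} U_1 \lts{\tau} U_2 \lts{\tau} \cdots$, soundness applied to the first step produces $Q_1, R_1 \in \lvllang$ with $P \vasred Q_1$, $R_1 \equiv Q_1$, and $U_1 = \sttoltp{R_1}{}$; thus $U_1$ is again the image under $\sttoltp{\cdot}{}$ of a process in \lvllang, and I iterate, obtaining at each stage a reduction $R_i \vasred Q_{i+1}$ (again moving across $\equiv$ via \Did{R-Str}) together with $U_{i+1} = \sttoltp{R_{i+1}}{}$ and $R_{i+1} \equiv Q_{i+1}$. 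The resulting infinite chain of reductions witnesses non-termination of $P$, giving the remaining implication and hence the full equivalence.

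The main obstacle will be the bookkeeping of structural congruence during the iteration: both correspondence theorems only return a reduct up to $\equiv$, so at each step I must realign the next application of completeness/soundness by invoking closure of reduction and of $\tau$-transitions under $\equiv$, and by relying on the fact that the delivered reduct stays in \lvllang. Making this precise is a clean inductive (or coinductive) construction of the infinite sequence rather than any fresh computation; in particular, once the single-step correspondence is in place, no weight measure or K\"onig-style finiteness argument is needed, since the infinite sequence is transported directly from one calculus to the other.
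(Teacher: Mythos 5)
Your proposal is correct, and its first half is exactly the paper's argument: the paper states this corollary as an immediate consequence of \Cref{thm:op_completeness} alone, reading ``preserves (non-)terminating behaviour'' as the single implication that no non-terminating process in $\vaslang$ is mapped to a terminating process in \pilvl. Your inductive lifting of an infinite reduction sequence $P_0 \vasred P_1 \vasred \cdots$ to an infinite $\tau$-sequence, with the $\equiv$-realignment via \Did{R-Str} and the observation that completeness returns its reduct inside \lvllang (so the typability hypothesis self-propagates), is precisely the bookkeeping the paper leaves implicit in the word ``immediate.'' Where you go beyond the paper is the converse direction: you read the statement as a bi-implication and use \Cref{thm:op_soundness} to transport an infinite $\tau$-sequence of $\sttoltp{P}{}$ back to an infinite reduction sequence of $P$, iterating on the fact that soundness delivers $U_i$ exactly as the image $\sttoltp{R_i}{}$ of some $R_i \in \lvllang$. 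This is a genuine strengthening — it shows the translation neither creates nor destroys divergence — and it is sound given the shape of the soundness theorem (which returns the target state on the nose, not merely up to some equivalence, so the iteration does not leak outside the image of the translation). The paper's weaker reading suffices for its purposes, since the corollary is only used to justify that \lvllang characterizes terminating processes of $\vaslang$; your version additionally rules out the translation introducing spurious divergence, at the cost of invoking soundness, which the paper's one-line proof does not need.
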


The following result corroborates our informal intuitions about $\vaslang$ and $\lvllang$. It also precisely characterizes a class of terminating processes based on our correct translations  $\sttoltp{\cdot}{}$ and $\sttoltt{\cdot}{l}{}$.
\begin{theorem}
\label{t:wsubsets}
$\lvllang \subset \vaslang$.
\end{theorem}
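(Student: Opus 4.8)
The statement $\lvllang \subset \vaslang$ asserts a \emph{strict} inclusion, so the plan is to split the proof into two parts: the set inclusion $\lvllang \subseteq \vaslang$, and the separation $\lvllang \neq \vaslang$ witnessed by an explicit process.

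The inclusion $\lvllang \subseteq \vaslang$ is immediate from the definitions. By \Cref{d:lvllang}, every $P \in \lvllang$ satisfies $\Gamma \st P$ for some context $\Gamma$ (together with the additional typability condition on its translation). The existence of such a $\Gamma$ is precisely the membership condition for $\vaslang$ in \Cref{d:vaslang}. Hence no real work is required here beyond unfolding the two definitions.

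For strictness, the plan is to exhibit a process that is session-typable but whose translation fails to be typable in the weight-based system. The natural witness is the non-terminating process $P_{\ref{ex:infinite}} = \vasres{xy}{( \vaspara{\vasout{y}{w}{\nil}}{\vasin{\un}{x}{z}{\vasout{y}{w}{\nil}}} )}$ from \Cref{ex:infinite}, which is already shown there to satisfy $w:\nilT \st P_{\ref{ex:infinite}}$, so $P_{\ref{ex:infinite}} \in \vaslang$. I would then argue $P_{\ref{ex:infinite}} \notin \lvllang$ by contradiction: suppose there exist $\Gamma, l$ with $\sttoltt{\Gamma}{l}{} \lt \sttoltp{P_{\ref{ex:infinite}}}{}$. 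By the Termination theorem (\Cref{t:lvlsn}), the translated process $\sttoltp{P_{\ref{ex:infinite}}}{}$ must then terminate, i.e.\ admit no infinite sequence of $\tau$-transitions. However, $P_{\ref{ex:infinite}}$ reduces to itself (modulo $\equiv$) in one step, via \Did{R-UnCom}, generating an infinite reduction sequence $P_{\ref{ex:infinite}} \vasred P_{\ref{ex:infinite}} \vasred \cdots$. By Operational Completeness (\Cref{thm:op_completeness}), each such reduction step $P_{\ref{ex:infinite}} \vasred Q$ is matched by a transition $\sttoltp{P_{\ref{ex:infinite}}}{} \lts{\tau} \sttoltp{R}{}$ with $R \equiv Q$, so $\sttoltp{P_{\ref{ex:infinite}}}{}$ inherits an infinite $\tau$-sequence and is non-terminating. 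This contradicts \Cref{t:lvlsn}, so no such $\Gamma, l$ can exist, giving $P_{\ref{ex:infinite}} \in \vaslang \setminus \lvllang$ and hence the strict inclusion.

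The routine inclusion is a one-line unfolding, and the counterexample is already in hand, so the main obstacle is the precise transfer of non-termination across the translation. Concretely, I must ensure that the infinite reduction sequence of $P_{\ref{ex:infinite}}$ is faithfully reflected as an infinite $\tau$-transition sequence of $\sttoltp{P_{\ref{ex:infinite}}}{}$. This requires invoking Operational Completeness iteratively along the infinite run and checking that the $\equiv$-closure in its statement does not collapse the unbounded behaviour; since termination is preserved under $\equiv$, this is the one place where care is needed, and it is exactly the content of the corollary stating that $\sttoltp{\cdot}{}$ preserves (non-)terminating behaviour, which I would cite to close the argument cleanly.
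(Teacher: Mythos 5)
Your proposal is correct and follows essentially the same route as the paper's own proof sketch: the inclusion $\lvllang \subseteq \vaslang$ is immediate from \Cref{d:vaslang,d:lvllang}, and strictness is witnessed by the very same counterexample $P_{\ref{ex:infinite}}$ from \Cref{ex:infinite}. Your contradiction argument via \Cref{t:lvlsn} and \Cref{thm:op_completeness} (equivalently, the corollary on preservation of non-termination) simply makes explicit the justification that the paper's sketch leaves implicit when it asserts that $P_{\ref{ex:infinite}}$ is not typable in \pilvl.
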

\begin{proof}[Proof (Sketch)]
The inclusion $\lvllang \subseteq \vaslang$ is immediate by definition. 
To prove that the inclusion is strict, we consider a counterexample, i.e., a process $P$ typable in $\vasco$ but not typable in $\pilvl$.
Process $P_{\ref{ex:infinite}}$ from \Cref{ex:infinite} suffices for this purpose.
\end{proof}

\section{Propositions as Sessions}
\label{s:pas}
We now introduce $\pidill$, the process model induced by the Curry-Howard correspondence between linear types and session types  (\emph{propositions-as-sessions})~\cite{CairesP10}.
\pidill is a synchronous $\pi$-calculus extended with (binary) guarded choice and forwarding.

\begin{definition}[Processes and Types]
	Processes  in $\pidill$ are given by the grammar in \Cref{f:dillsyntandtype} (top). 	
	Types coincide with linear logic propositions, as given in the grammar in \Cref{f:dillsyntandtype} (bottom). 
\end{definition}

%	Process consist of input (linear and unrestricted), output, inaction, restriction and parallel composition. Processes may also be port forwarders $\dillforward{x}{y}$, provide choices of processes $P$ or $Q$ along a channel $x$ via $\dillchoice{x}{P}{Q}$ and finally either select the left $\dillselel{x}{P}$ or right $\dillseler{x}{P}$ choice along the channel $x$ and then continue as $P$.

\begin{definition}[Reduction in $\pidill$]
The reduction semantics of \pidill is defined in \Cref{f:dillcongred} (bottom), relying on structural congruence, the least congruence relation defined in \Cref{f:dillcongred} (top).  
\end{definition}

\begin{figure}[t!]
	\begin{mdframed}
	\[
	\begin{aligned}
		P,Q ::= &								&& \mbox{(Processes)}\\
				&  \dillout{x}{y}{P}		&& \mbox{(output)}
				&&  \dillin{x}{y}{P}					&& \mbox{(linear input)}\\
				&  \dillserv{x}{y}{P}			&& \mbox{(server)}
				&& P \pp Q  	    		  		&& \mbox{(composition)}\\
				& \res {x}P						&& \mbox{(restriction)}
				&& \nil	     					&& \mbox{(inaction)}\\
				& \dillchoice{x}{P}{Q}			&& \mbox{(branching)}
				&& \dillseler{x}{P}				&& \mbox{(select right)}\\
				& \dillselel{x}{P}				&& \mbox{(select left)}
				&& \dillforward{x}{y}			&& \mbox{(forwarding)}
	\end{aligned}
	\]
	\hrule
	\smallskip
	\[
	\begin{aligned}
		A,B ::= &								&& \mbox{(Types)}
		\\
				& \dillnilT						&& \mbox{(Termination)}  
				&& \dillunt{A}					&& \mbox{(Shared)}
				\\
				& \dillint{A}{B}				&& \mbox{(Receive)} 
				&& \dilloutt{A}{B}				&& \mbox{(Send)} 
				\\
				& \dillchoicet{A}{B}			&& \mbox{(Selection)} 
				&& \dillselet{A}{B}				&& \mbox{(Branching)} 
	\end{aligned}
	\]
	\end{mdframed}
	\jspace
	\caption{Processes and types of the session $\pi$-calculus $\pidill$}
	\label{f:dillsyntandtype}
	\jspace
\end{figure}

\begin{figure}[t!]
	\begin{mdframed}
		\[
		\begin{aligned}
			&\begin{aligned}
				P \pp \nil &\equiv P 
				&
				P \equiv_\alpha Q &\implies P \equiv Q
				&
				P \pp Q &\equiv Q \pp P
				\\
				\res{x} \nil &\equiv \nil
				&
				 P \pp (Q \pp R) &\equiv (P \pp Q) \pp R
				 &\!\!\res{x} \res{y} P &\equiv \res{y} \res{x} P
			\end{aligned}\\
			&\begin{aligned}
				x \not \in \fn{P} & \implies P \pp \res{x}Q \equiv \res{x} (P \pp Q)
				&
			\end{aligned}
		\end{aligned}
		\]
		\hrule
		\smallskip
		\[
			\begin{array}{rll}  
			\Did{R$\leftrightarrow$} & 
				\res{x} (\dillforward{x}{y}	\pp P)
				\dillred 
				 P\substj{y}{x} \quad \text{if } x \not = y
				\\[1mm] 
			\Did{RC} & 
				\dillout{x}{y}{P} \pp \dillin{x}{z}{Q}	
				\dillred 
				P \pp Q\substj{y}{z}
				\\[1mm] 
			\Did{R!} & 
				\dillin{x}{y}{P} \pp \dillserv{x}{z}{Q}
				\dillred
				P \pp \vaspara{Q\substj{y}{z}}  \dillserv{x}{z}{Q}
				\\[1mm]  
			\Did{RL}&
				\dillselel{x}{P} \pp  \dillchoice{x}{Q}{R}
				\dillred
				P \pp Q
			\\[1mm]
			\Did{RR}&
				\dillseler{x}{P}  \pp  \dillchoice{x}{Q}{R}\
				\dillred
				P \pp R
			\\[1mm]
			\Did{R$\pp$}&
				Q \dillred R 
				\implies 
				P \pp Q \dillred P \pp R
			\\[1mm]
			\Did{R$\nu$}&
				P \dillred Q 
				\implies 
				\res{x} P \dillred \res{x} Q
			\\[1mm]
			\Did{R$\equiv$}&
			P \equiv P' \land P' \dillred Q' \land Q' \equiv Q 
			\implies
			P \dillred Q
			\\[1mm]
			\end{array}
		\]
	\end{mdframed}
	\jspace
	\caption{Structural congruence and reductions for $\pidill$}
	\label{f:dillcongred}
	\jspace
\end{figure}

\begin{notation}[Process Abbreviations]
	We adopt the following abbreviations for bound outputs and replicated forwarders:
	\begin{align*}
	\dillbout{x}{z}{P} &=  \res{z}\dillout{x}{z}{P}
	\\
		\dillfwdbang{x}{y}  & = \dillserv{y}{z}{
		 \dillbout{x}{k}{	\dillforward{k}{z}}}
	\end{align*}

%	notation of $\dillbout{x}{z}{P}$ for expressing bound output $\res{z}\dillout{x}{z}{P}$. Also, we give an abbreviation the process that acts as a replicated forwarder as $ \dillfwdbang{x}{y}$ where 
%	$$\dillfwdbang{x}{y}  = \dillserv{y}{z}{
%		\res{k} \dillout{x}{k}{	\dillforward{k}{z}}
%	}$$
%$$\dillfwdbang{x}{y}  = \dillserv{y}{z}{
%		 \dillbout{x}{k}{	\dillforward{k}{z}}
%	}$$
%	Similarly we give an abbreviated type derivation as:
%	\begin{mathpar}
%		\inferrule* [ left = \dilltype{fwd^!} ]{ }
%		{\Gamma , x:A ; \cdot \dill \dillfwdbang{x}{y}:: y : \dillunt{A}}
%	\end{mathpar}
%	This abbreviates the derivation:
%	\begin{mathpar}
%		\inferrule* [ left = \dilltype{\dillunt{} R} ]{
%			\inferrule* [ left = \dilltype{copy} ]{
%				\inferrule* [ left = \dilltype{fwd} ]{
%				}{
%					\Gamma , x:A ; k:A \dill 
%					\dillforward{k}{z} :: y : A
%				}
%			}{
%				\Gamma , x:A ; \cdot \dill 
%				 \dillbout{x}{k}{	\dillforward{k}{z}} :: y : A
%			}
%		}{
%			\Gamma , x:A ; \cdot \dill \dillserv{y}{z}{
%				 \dillbout{x}{k}{	\dillforward{k}{z}}} :: y : \dillunt{A}
%		}
%	\end{mathpar}
\end{notation}

As usual, a type environment is a collection of type assignments $x : A $ where $x$ is a name and $A$ a type, the names being pairwise disjoint.  
The empty environment is denoted `$\cdot$'.
We consider  {\em unrestricted} environments (denoted $\Gamma,\Gamma'$) and  {\em linear} environments  (denoted as $\Delta,\Delta'$); while the former satisfy weakening and contraction, the latter do not.  

%, where type assignments in $\Gamma$ are propagated to all the premises whereas the type assignments in $\Delta$ are handled linearly, depending on the rule applied. 

We denote by $dom(\Gamma)$, the {\em domain of } $\Gamma$, the set of names whose type assignments are in $\Gamma$, i.e., $dom(\Gamma)=\{x \mid x:A \in \Gamma\}$.  Also,  $\Gamma(x)$ denotes the type of the name $x\in dom(\Gamma)$, i.e., $\Gamma(x)=A$, if $x:A\in \Gamma$. The domain of $\Delta$  and $\Delta(x)$ are similarly defined.

Typing judgments for $\pidill$ are of the form $\Gamma ; \Delta    \dill P \ :: x:A$.
Such a judgment is intuitively read as: ``$P$ provides protocol $A$ along $x$ by using the protocols described in the assignments in $\Gamma$ and $\Delta$''. The domains of  $\Gamma$, $\Delta$ and $x:A$ are pairwise disjoint.
	The corresponding type
	rules  are given  in \Cref{f:dilltyperule}.
	Each logical operator is represented by right and left rules: the former explains how to \emph{offer} a behavior (according to the operator's interpretation, cf. \Cref{f:dillsyntandtype} (bottom)); the latter explains how to \emph{make use} of a behavior typed with the operator.
In particular, the behavior of clients and servers is governed by four typing rules:
\dilltype{cut^!},
\dilltype{copy},
 \dilltype{\dillunt{} L}, and
 \dilltype{\dillunt{} R}.

The Curry-Howard correspondence connects the logical principle of cut elimination with process synchronization. As a result, we have  the fundamental property ensured by typing:

\begin{theorem}[Type Preservation]
If $\Gamma ; \Delta    \dill P \ :: x:A$
and $P \dillred Q$
then
$\Gamma ; \Delta    \dill Q \ :: x:A$.
\end{theorem}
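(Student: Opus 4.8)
The plan is to prove subject reduction by induction on the derivation of $P \dillred Q$, following the standard recipe for logic-based session calculi: each \emph{principal} reduction corresponds to a principal cut-reduction in (dual intuitionistic) linear logic and is type-preserving by construction, while the \emph{closure} rules \Did{R$\pp$}, \Did{R$\nu$}, \Did{R$\equiv$} are handled by the induction hypothesis together with a preservation lemma for structural congruence. The type system is essentially syntax-directed (each process constructor is matched by exactly one right rule, and the shape of a composition determines whether $\cut$ or \dilltype{cut^!} was used), so the required inversion lemmas are routine to state.

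Before the induction I would establish two auxiliary facts. First, a renaming/substitution lemma: if $\Gamma ; \Delta \dill P :: x:A$ and $w$ is fresh, then $\Gamma ; \Delta \dill P\substj{w}{z} :: x:A$ up to the evident renaming of the typed name $z$; this is needed because the rules \Did{RC}, \Did{R!} and \Did{R$\leftrightarrow$} all carry substitutions. Second --- and this is where real care is needed --- a lemma stating that $P \equiv Q$ implies $\Gamma ; \Delta \dill P :: x:A$ iff $\Gamma ; \Delta \dill Q :: x:A$. The axioms for commutativity, associativity, the unit $P \pp \nil \equiv P$, and $\alpha$-renaming reduce to inversions on $\cut$, but the scope-extrusion axiom $x \not\in \fn{P} \Rightarrow P \pp \res{x}Q \equiv \res{x}(P \pp Q)$ forces commuting a cut past a neighbouring composition and hence a careful re-splitting of the linear context $\Delta$; this is the usual commuting-conversion bookkeeping and is where most of the labour lives.

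For the principal cases, the communicating name is always bound (a linear name must be introduced by $\cut$, a shared one by \dilltype{cut^!}), so inversion on the typing of the redex yields the typings of the two interacting subprocesses with matching types on that name. For \Did{RC} the redex is a send meeting a receive (the multiplicative $\tensor$/$\lolli$ principal cut): inversion gives the component derivations along $y$ and the continuations, and recombining them with two cuts yields a derivation of $P \pp Q\substj{y}{z}$ with the same end sequent. The cases \Did{RL} and \Did{RR} are the additive $\oplus$/$\&$ reductions, where we simply keep the premise of the branch matching the selected side. The case \Did{R$\leftrightarrow$} is a cut against the identity axiom $\dillforward{x}{y}$, eliminated by collapsing the cut into the renaming $P\substj{y}{x}$. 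The case \Did{R!} is the client/server interaction governed by the exponential rules \dilltype{cut^!}, \dilltype{copy}, \dilltype{\dillunt{} L} and \dilltype{\dillunt{} R}: here the unrestricted assumption persists by contraction, so we rebuild a \dilltype{cut^!} for the residual server $\dillserv{x}{z}{Q}$ alongside an ordinary $\cut$ feeding the freshly spawned copy $Q\substj{y}{z}$, taking care that the shared context $\Gamma$ is reproduced rather than consumed.

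The closure cases are comparatively direct: for \Did{R$\pp$} and \Did{R$\nu$} I invert the last rule ($\cut$/\dilltype{cut^!} for composition, the restriction-introducing rule for $\res{x}$), apply the induction hypothesis to the reducing component, and reapply the same rule, discharging the name side-conditions by Barendregt's convention; for \Did{R$\equiv$} the statement follows immediately from the structural-congruence preservation lemma and the induction hypothesis. I expect the two genuine obstacles to be the scope-extrusion direction of that congruence lemma and the exponential-context bookkeeping in the \Did{R!} case; everything else is inversion followed by reassembly of the derivation.
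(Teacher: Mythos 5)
Before anything else: the paper itself contains no proof of this theorem. Both $\pidill$ and its type preservation property are recalled from Caires and Pfenning \cite{CairesP10}, with synchronization-as-cut-elimination cited as the reason the property holds; so your proposal has to be measured against that standard proof. Your skeleton (principal reductions as principal cut reductions, closure rules by induction) is the right family of argument, but it rests on a lemma that is false in $\pidill$, and it fails precisely at the point you dismiss as routine bookkeeping.

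The problem is your subject congruence lemma, ``$P \equiv Q$ implies $\Gamma;\Delta \dill P :: x{:}A$ iff $\Gamma;\Delta \dill Q :: x{:}A$''. The type system has no rule for bare parallel composition (there is no mix): $P \pp Q$ is typable only underneath a restriction, via \dilltype{cut} or \dilltype{cut^!}, or inside a bound output, via \dilltype{\dilloutt{}{} R}. Consequently $P \pp \nil \equiv P$ already refutes the iff ($P$ may be typable while $P \pp \nil$ is not), and scope extrusion relates the typable $\res{x}(P \pp Q)$ to the untypable $P \pp \res{x}Q$. Nor is this a corner case you can route around: free output is not typable in $\pidill$ (only bound output is), so the axioms \Did{RC}, \Did{R!}, \Did{RL}, \Did{RR} all have untypable left-hand sides --- there is no ``typing of the redex'' to invert, so those base cases are vacuous --- and even \Did{R$\leftrightarrow$} as written is untypable, since \dilltype{cut} needs its left component to offer the restricted name while rule \dilltype{fwd} makes $\dillforward{x}{y}$ offer $y$ using $x$. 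Hence every reduction of a typable process is derived through \Did{R$\nu$} and \Did{R$\equiv$}, passing through untypable intermediate processes, and your plan for \Did{R$\equiv$} (transport typing across $\equiv$, apply the induction hypothesis, transport back) collapses at its first step. The repair, and the way \cite{CairesP10} actually argues, is to organize the case analysis around the typing derivation instead: invert \dilltype{cut}/\dilltype{cut^!}, establish principal cut-reduction lemmas (typability of the contractum when the two cut premises end in dual right/left rules for the cut formula), and analyze directly how a structural rearrangement of the cut's body exposes such a redex, without ever asserting typability of the $\equiv$-intermediates; alternatively, weaken the conclusion to typability of some $Q' \equiv Q$.
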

		
The type system enforces also progress and termination. The latter property can be proven using logical relations~\cite{DBLP:journals/iandc/PerezCPT14}.

\begin{figure*}[t!]
	\begin{mdframed}
	\begin{mathpar}
		\inferrule* [ left = \dilltype{\dillnilT L} ]{  \Gamma ; \Delta \dill P :: T }
		{ \Gamma ; \Delta , x : \dillnilT  \dill P :: T }
		\and
		\inferrule* [ left = \dilltype{\dillnilT R} ]{   }
		{ \Gamma ; \cdot  \dill \nil ::   x : \dillnilT }
		\and
		\inferrule* [ left = \dilltype{fwd} ]{   }
		{ \Gamma ; x:A  \dill \dillforward{x}{y} ::   y : A }
		\and
		\inferrule* [ left = \dilltype{\dilloutt{}{} L} ]{ \Gamma ; \Delta , y:A , x:B \dill P :: T  }
		{ \Gamma ; \Delta , x :\dilloutt{A}{B} \dill \dillin{x}{y}{P} :: T  }
		\and
		\inferrule* [ left = \dilltype{\dilloutt{}{} R} ]{ \Gamma ; \Delta_1 \dill P :: y : A \\ \Gamma ; \Delta_2 \dill Q :: x : B  }
		{ \Gamma ; \Delta_1, \Delta_2  \dill \dillbout{x}{y}{(P \pp Q)} ::   x : \dilloutt{A}{B} }
		\and
		%
%		\inferrule* [ left = \dilltype{\dillint{}{} L} ]{  \Gamma ; \Delta_1 \dill P :: y:A \\ \Gamma ; \Delta_2 , x:B \dill Q :: T }
%		{ \Gamma ; \Delta_1 , \Delta_2  , x : \dillint{A}{B} \dill  \dillbout{x}{y}{(P \pp Q)} :: T }
%		%
%		\and
%		%
%		\inferrule* [ left = \dilltype{\dillint{}{} R} ]{  \Gamma ; \Delta , y:A \dill P :: x:B }
%		{ \Gamma ; \Delta  \dill \dillin{x}{y}{P} ::   x : \dillint{A}{B} }
%		%
%		\and 
		%
		\inferrule* [ left = \dilltype{cut} ]{ \Gamma ; \Delta_1 \dill P :: x: A \\ \Gamma ; \Delta_2 , x:A \dill Q ::T }
		{ \Gamma ; \Delta_1 ,  \Delta_2  \dill \res{x}(P \pp Q) ::  T }
		\and
		\inferrule* [ left = \dilltype{cut^!} ]{  \Gamma ; \cdot \dill P :: y: A \\ \Gamma , u:A ; \Delta  \dill Q ::T  }
		{ \Gamma ;   \Delta  \dill \res{u}(\dillserv{u}{y}{P} \pp Q) ::  T }
		\and
		\inferrule* [ left = \dilltype{copy} ]{ \Gamma , u:A ; \Delta , y:A \dill P  :: T  }
		{ \Gamma , u:A ; \Delta  \dill  \dillbout{u}{y}{P} ::  T }
		\and
		\inferrule* [ left = \dilltype{\dillunt{} L} ]{ \Gamma , u:A ; \Delta \dill P \substj{u}{x} :: T  }
		{ \Gamma ; \Delta , x: \dillunt{A} \dill P :: T }
		\and
		\inferrule* [ left = \dilltype{\dillunt{} R} ]{ \Gamma ; \cdot \dill Q :: y:A }
		{ \Gamma ; \cdot  \dill \dillserv{x}{y}{Q} ::   x : \dillunt{A} }
		\and
		\inferrule*[left=\dilltype{\dillselet{}{} L_1} ]{ \Gamma ; \Delta , x:A \dill 
		P :: T  }
		{ \Gamma ; \Delta , x:\dillselet{A}{B}  \dill \    \dillselel{x}{P} :: T }
		\and
		\inferrule* [ left = \dilltype{\dillchoicet{}{} L} ]{ \Gamma ; \Delta , x:A \dill P :: T \quad \Gamma ; \Delta , x:B \dill P :: T }
		{ \Gamma ; \Delta , x:\dillchoicet{A}{B} \dill   \dillchoice{x}{P}{Q} :: T }
		\and
		%
%		\inferrule* [ left = \dilltype{\dillselet{}{} R} ]{ \Gamma ; \Delta  \dill P ::x:A  \\ \Gamma ; \Delta  \dill P :: x:B }
%		{ \Gamma ; \Delta  \dill   \dillchoice{x}{P}{Q} :: x:\dillselet{A}{B} }
%		%
%		\and
		%
		\inferrule*[left=\dilltype{\dillchoicet{}{} R_2} ]{ \Gamma ; \Delta \dill 
		P ::  x:B  }
		{ \Gamma ; \Delta    \dill \    \dillseler{x}{P} :: x:\dillchoicet{A}{B} }
	\end{mathpar}
	\end{mdframed}
	\jspace
	\caption{Type rules for $\pidill$ (selection)}
	\label{f:dilltyperule}
	\jspace
\end{figure*}

\section{\dilllang: A Class of Terminating Processes}\label{s:dilllang}
We now study \dilllang, another class of terminating \vasco processes.
This class is induced by the Curry-Howard system given in \Cref{s:pas}, which leverages  
translations on processes and types/contexts, denoted $\sttodillp{\cdot}{}$ and $\sttodillt{\cdot}$, respectively.
Roughly, \dilllang is defined as follows:
	$$ \dilllang = \{ P \in \vasco \mid  \ \Gamma \vaslinunsplit \Delta \st P ~\land~ 
	 	 {\sttodillt{\Gamma}};   \sttodillt{\Delta}  \dill \sttodillp{{P}}  :: u: \sttodillt{\dual{S}} 	\} 
	$$
\Cref{d:dilllang} will give a formal definition.
In the following we define the translations $\sttodillp{\cdot}{}$ and $\sttodillt{\cdot}$, and establish their properties. 
Our main result is that 
$\dilllang \subset \lvllang$
but 
$\lvllang \not \subset \dilllang$
(\Cref{thm:wnotinl,thm:main_result}): there are terminating processes detected as such by the weight-based approach but not by the Curry-Howard correspondence.

%\daniele{(DN: I don't understand this paragraph.)}\joe{rewritten}
We require some auxiliary definitions.
%As $\pidill$ separates contexts into linear and unrestricted components, when encoding unrestricted types we wish to place then within the unrestricted context rather then the linear context. The type rule $ \dilltype{\dillunt{} L}$ allows for the moving of unrestricted types from the linear context to the unrestricted by removing the $\bang$ guarding it. We need a notation to perform this on encoded unrestricted types.
The following predicates say whether a session type contains client or server behaviors.

\begin{definition}\label{d:predsts}
Given a session type $T$, we define predicates $\server{{T}}$ and $\client{{T}}$ as follows:
	\[
	\begin{array}{rl@{\hspace{2cm}}rl}
		\server{\recur{\wn T}} &  = \ttrue 
		&
		\client{\recur{\oc T}} &  = \ttrue 
		\\
		\server{\nilT} &  = \ffalse 
		&
		\client{\nilT} &  = \ffalse 
		\\
		\server{q \ \vassend{S}{T}} &  = \server{T}
		&
		\client{q \ \vassend{S}{T}} &  = \client{T}
		\\
		\server{q \ \vasreci{S}{T}} &  = \server{T} 
		&
		\client{q \ \vasreci{S}{T}} &  = \client{T}
		\\
		\server{\recur{\oc T}} &  =  \server{T} 
		&
		\client{\recur{\wn T}} &  = \client{T}
	\end{array}
\]
These predicates extend to contexts $\Gamma$ as expected. This way, e.g., $\server{\Gamma}$ stands for $\bigwedge_{x\in dom(\Gamma)} \Gamma(x)$.
%is the conjunction of each names type in $\Gamma$. \daniele{(This sentence needs to be rewritten, the definition of $\server{\Gamma}$ is not clear. Perhaps we can give the formal definition $}
Also, we write $\server{\Gamma;P}$ to stand for $ \bigwedge_{x \in (\fn{P} \cap dom(\Gamma) )} \server{\Gamma(x)}$, returning $\ttrue$ when $(\fn{P} \cap dom(\Gamma)) = \emptyset$.
Analogous definitions for $\client{\cdot}$ , $\notserver{\cdot} $, and $ \notclient{\cdot} $ arise similarly. 
\end{definition}

%\begin{definition}[Core Context (in $\vaslang$)]
%	We define the core context of a process $P$ in $\vaslang$ typed with context $\Gamma$ as $\Gamma \st P$ then the core contest $\core{\Gamma}$ is the context such that $\core{\Gamma} \st P$ and $\forall x \in \dom{\core{\Gamma}}$ then $x \in \fn{P}$
%\end{definition}

This way, intuitively:
\begin{itemize}
	\item $\notserver{T}\land \notclient{T}$ means that $T$ is an always-linear behavior, i.e., it does not contain server and client actions.
	\item $\server{T} \land \notclient{T}$ means that $T$ contains some server behavior and that it does not contain client behaviors.
	\item $\notserver{T}\land \client{T}$ means that $T$ will at some point exhibit client behaviors and that it does not contain server behaviors.
\end{itemize}
Also, $\server{T} \land \client{T}$ means that $T$ contains both server and client actions; this combination, however, is excluded by typing. 
 
\begin{example}%[Examples of $\server{\cdot}$ and $\client{\cdot}$]
	We further illustrate \Cref{d:predsts} by example:
	\begin{center}
		\begin{tabular}{ cc|c|c } 
		% \hline
		 &$T$ & $\server{T}$ & $\client{T}$  \\
		 \hline 
		 1& $\recur{ \oc ( \lin \  \vassend{(\recur{\oc S})}{(\lin \ \vasreci{(\recur{\wn R})}{(\recur{\wn T_0})})})}$ & \ttrue & \ttrue   \\
		 %\hline
		 2& $\lin \ \vassend{(\recur{\oc S})}{(\lin \ \vasreci{(\recur{\wn R})}{(\recur{\wn T_0})})}$ & \ttrue  & \ffalse  \\
		 %\hline
		 3& $\recur{\oc (\lin \ \vassend{(\recur{\oc S})}{(\lin \ \vasreci{(\recur{\wn R})}{\nilT})})}$& \ffalse & \ttrue    \\
		 %\hline
		 4& $\lin \ \vassend{(\recur{\oc S})}{(\lin \ \vasreci{(\recur{\wn R})}{\nilT})}$ & \ffalse & \ffalse \\
		 %\hline
		\end{tabular}
	\end{center} 
	Both $(1)$ and $(2)$ return \ttrue  for $\server{T}$ because of their final behavior (i.e., `$\recur{\wn T_0}$'), whereas $(3)$ and $(4)$ return \ffalse, because their final  behavior is $\nilT$. 
	Both $(1)$ and $(3)$ return \ttrue for $\client{T}$ as their initial type behavior (i.e., `$\recur{\oc T'}$') is that of a client, whereas $(2)$ and~$(4)$ return \ffalse as they do not contain any client behavior.
\end{example}

\subsection{The Typed Translation}

\begin{definition}[Translating Processes]
The translation $\sttodillp{\cdot}: \vasco \to \pidill$ is given in~\Cref{f:encproc}. %,  is defined for well-typed processes in $\vaslang$, and relies on channel types and  on its behaviour as server or client.
\end{definition}

\begin{figure*}[!t]
	\begin{mdframed}
	\[
	\begin{aligned}
		\sttodillp{\vasout{x}{y}{P}} & = 
			\begin{cases}
				 \dillbout{x}{z}{( \dillforward{y}{z}  \pp  \sttodillp{{P}}   )} 
					& \text{If $ x: \lin \oc (T).S \land \neg \contun{T}\land \notserver{T}$.}
					\\
				 \dillbout{x}{z}{( \dillfwdbang{y}{z}  \pp  \sttodillp{{P}}   )}
					& \text{If $ x: \lin \oc (T).S \land \contun{T} \land \notserver{T}$.}
					\\
				%\res{z} \dillout{x}{z}{ \dillselel{z}{
				%	\res{w} \dillout{z}{w}{
				%		( \dillfwdbang{v}{w}  \pp  \sttodillp{{P}}   )
				%	}
				%}
				%} \\
				\dillbout{x}{z}{ 
					\dillbout{z}{w}{
						( \dillforward{y}{w}  \pp  \sttodillp{{P}}   )
					}
				} 
					&
				  \text{If $ x: \recur{\oc T} \land \neg \contun{T} \land  \notserver{T} \land \client{T}$}
					\\
				\dillbout{x}{z}{ 
					\dillbout{z}{w}{
						( \dillfwdbang{y}{w}  \pp  \sttodillp{{P}}   )
					}
				}
					&
			 \text{If $ x: \recur{\oc T} \land \contun{T} \land \notserver{T} \land \client{T}$}
			 \\
			 				\dillbout{x}{z}{ \dillselel{z}{
					\dillbout{z}{w}{
						( \dillfwdbang{y}{w}  \pp  \sttodillp{{P}}   )
					}
				}
				}
					&
					  \text{If $ x: \recur{\oc T} \land \contun{T}\land \notserver{T} \land \notclient{T} $}
			\\
			 				\dillbout{x}{z}{ \dillselel{z}{
					\dillbout{z}{w}{
						( \dillforward{y}{w}  \pp  \sttodillp{{P}}   )
					}
				}
				}
					&
					  \text{If $ x: \recur{\oc T} \land \neg \contun{T}\land \notserver{T} \land \notclient{T} $}
			\end{cases}\\
		\sttodillp{\res{xy}\vasout{z}{y}{P}  } & = 
			\begin{cases}
				\dillbout{z}{x}{\dillseler{x}{\sttodillp{P}}}  & \text{If $z:\recur{\oc T} \land \notserver{T} \land \notclient{T} $.}\\
				\dillbout{z}{x}{\sttodillp{P}} & \text{If $z:\recur{\oc T} \land \server{T} \land \notclient{T} $.}
			\end{cases}
		\\
		\sttodillp{\res{xy}\vasout{z}{x}{(P \pp Q)}} & = 
			\dillbout{z}{y}{(\sttodillp{P} \pp \sttodillp{Q})}
			\quad \text{If $z:  \lin \vassend{T}{S} \wedge z \not \in \fn{P} \land y \not \in \fn{Q}$}
		\\
		\sttodillp{\vasin{\lin}{x}{y}{P}} & = 
				\dillin{x}{y}{\sttodillp{P}} \quad \text{If $x: \lin \vasreci{T}{S}$}\\
		\sttodillp{\vasin{\un}{x}{{y}}{P}} & = 
			\begin{cases}
				\dillserv{x}{z}{\sttodillp{P\substj{z}{y}}} 
				& \text{If $ x: \recur{\wn T} \land\server{T}\land \notclient{T} $.} \\
				\dillserv{x}{z}{ \dillin{z}{y}{ \sttodillp{P} }
				} 
				& \text{If $ x: \recur{\wn T} \land\notserver{T}\land \client{T} $.} \\
				\dillserv{x}{z}{  
				\dillchoice{z}{	\dillin{z}{y}{ \sttodillp{P} }	}{	\sttodillp{P\substj{z}{y}} }
				} 
				& \text{If $ x: \recur{\wn T} \land\notserver{T}\land \notclient{T} $.}
			\end{cases}\\ 
			\sttodillp{\res{x y}(P \pp Q)} & = 
					\res{x}( \sttodillp{P} \pp \sttodillp{Q}\substj{x}{y}) \quad \text{If $ y \not \in \fn{P} \land x \not \in \fn{Q}$}\\
		\sttodillp{P \pp Q} & = \res{w} ( \sttodillp{P} \pp \sttodillp{Q}   ) \quad \text{With $ w $ fresh}
		\\
				\sttodillp{\nil} & = \nil 
	\end{aligned}
	\]
		\end{mdframed}
		\jspace
\caption{Translating processes in \vasco into $\pidill$ \label{f:encproc}}
\jspace
\end{figure*}

The translation of processes relies on type information; in particular, the translation of outputs and unrestricted inputs depends on whether the overall behavior of channels exhibits server or client behaviors (cf. \Cref{d:predsts}). 
In translating  outputs, we check whether the output is free or bound.  
The translation of free outputs is further influenced by whether the sender is associated with a linear connection or acts as a client connected to a server. There are 5 cases to consider, and the translated processes are designed to preserve typability.
Similar conditions apply to the translation of bound outputs.

%\daniele{TO POLISH:}
\begin{remark}
To ensure typability of the translated process, we explain some of the choices in~\Cref{f:encproc}:
\begin{enumerate}
\item In a free output $\vasout{x}{z}{P}$ the value $z$   cannot have a server behavior. In~\Cref{f:encproc}, this is ensured using the predicate $\notserver{T}$.
\item In an unrestricted bound output $\res{xy}\vasout{z}{y}{P} $,  the value $y$ cannot have a client behavior. In~\Cref{f:encproc}, this is ensured using the predicate $\notclient{T}$. %  (cf.~\Cref{f:encproc}).
\end{enumerate}

We illustrate what we mean by ``client behavior'' above. Consider the process $P=\res{xy}( \res{wv}\vasout{x}{v}{\vasin{\un}{w}{a}{\nil}}\pp \vasin{\un}{y}{c}{\vasout{c}{b}{\nil}})$.
In $P$, the output action on $x$ is an unrestricted bound output, whose object $v$ has a client behavior: after one reduction, an output on $v$ will be ready to invoke the server on $w$. 
 Notice that $P \in \vaslang$, as $P$ is typable with $b:\nilT\st P$ and $x:\recur{\oc (\recur{\oc \nilT})}, y:\recur{\wn (\recur{\oc \nilT})}, w:\recur{\wn \nilT}$ and $ v:\recur{\oc \nilT}$. % (see also~\Cref{ex:derivation}).

%\daniele{DN: The next paragraphs rely on  tricky arguments because we are using *our* translation to explain *why our* translation cannot have client sent by unrestricted bound outputs. I tried to add some extra comments about the fact that our translation is reasonable in the sense that it 'respects behaviour' from source to target language. But this is quite an subjective concept.}

We want a typable  translation of the judgement $\sttodillj{\Gamma \st P}_u$. Consider the partial translation of $P$, i.e., $\sttodillp{P}=\res{x}(\sttodillp{P_1}\pp \sttodillp{Q_1}\substj{x}{y})$, 
%(\daniele{DN: translating restriction in $\vasco$ to a restriction in $\pidill$ is self-explainable, and also the fact that the translation acts homomorphically inside the process}),
 where we use the abbreviations
\begin{itemize}
\item  $P_1= \res{wv}\vasout{x}{v}{\vasin{\un}{w}{a}{\nil}}$, and 
\item  $Q_1= \vasin{\un}{y}{c}{\vasout{c}{b}{\nil}}$.
\end{itemize}
 Suppose we can apply \dilltype{cut}, then there are derivations $\Pi_1$ and $\Pi_2$ such that 
\begin{mathpar} 
\small
\inferrule
{
\inferrule{\Pi_1}{b:\dillnilT; \cdot \dill  \sttodillp{Q_1}\substj{x}{y}:: y: !( \dillint{\dillnilT}{\dillnilT})}\quad
\inferrule{\Pi_2}{b:\dillnilT; x:!( \dillint{\dillnilT}{\dillnilT})\dill \sttodillp{P_1} :: u:T}
}
{b:\dillnilT;\cdot \dill \res{x}( \sttodillp{P_1}\pp \sttodillp{Q_1}\substj{x}{y})::u:T}
\end{mathpar}

Consider the partial translation $\sttodillp{\res{wv}\vasout{x}{v}{\vasin{\un}{w}{a}{\nil}}}= \dillbout{x}{w}{(\dillserv{w}{a'}{P_1'})}$ for some $P_1'$ that we will leave opaque for now.
% (\daniele{DN: Note that bound output in \vasco is translated to bound output in \pidill; similarly to unrestricted input.}). 
Notice, however, that the following derivation is not possible: to type $\dillserv{w}{a'}{P_1'}$ we would need $u=w$ to apply $\dilltype{\dillunt{}R}$ (above the application of $\dilltype{copy}$), but $w$ already occurs in the context and this contradicts the domain restriction of $\dill$ judgements.
\begin{mathpar}
\inferrule*[left=\dilltype{\dillunt{}L}]
{
\inferrule*[left=\dilltype{copy}]
{b:\dillnilT , x:( \dillint{\dillnilT}{\dillnilT});w:( \dillint{\dillnilT}{\dillnilT}) \not \dill !w(a).0::u:T}
{b:\dillnilT , x:( \dillint{\dillnilT}{\dillnilT});\cdot  \dill \dillserv{w}{a'}{P_1'}::u:T}}
{b:\dillnilT ; x:!( \dillint{\dillnilT}{\dillnilT})\dill \dillbout{x}{w}{(\dillserv{w}{a'}{P_1'})}::u:T}
\end{mathpar}
A similar argument and example can be used to justify the first item of this remark.
\end{remark}

While the translation of linear inputs is straightforward, in translating unrestricted inputs we check whether the synchronization concerns a bound or free output. 
When the unrestricted input cannot discern the client or server behavior from the type, it offers both behaviors using a branching construct; the synchronizing party (i.e. the translation of output, free or bound) then determines the desired behavior using a corresponding selection construct.
%Notice how the encoding of parallel composition is always defined, this is as typing is what disallows this encoding.

\begin{example} \label{ex:trans_proc} Consider $P=\res{xy}(  \vasin{\un}{ x}{z}{\nil}\pp  \vasout{y}{w}{\nil})$, a $\vasco$ process that implements a simple  server-client communication.  As in \Cref{ex:infinite}, one can verify that  
$x:\recur{\wn \nilT}, y:\recur{\oc \nilT}$, $w:\nilT$ and $z:\nilT$, which  entail $w:\nilT \st P$. Since $y\notin \fn{\vasin{\un}{x}{z}{\nil}}$ and $x\notin \fn{\vasout{y}{w}{\nil}}$, the translation of $P$ is as:
\[
\begin{aligned}
\sttodillp{P}= \res{x}(\sttodillp{\vasin{\un}{x}{z}{\nil}}\pp \sttodillp{\vasout{x}{w}{\nil}})
\end{aligned}
\]
Note that $\notclient{\nilT}\wedge \notserver{\nilT}\wedge \contun{\nilT}$ holds (cf.~\Cref{def:pred_cont} and \Cref{d:predsts}). Thus,
\[
\begin{aligned}
\sttodillp{\vasin{\un}{x}{z}{\nil}} &=\dillserv{x}{v}{  
				\dillchoice{v}{	\dillin{v}{z}{ \nil }	}{\nil }
				} \\
\sttodillp{\vasout{x}{w}{\nil}} &= \dillbout{x}{z}{ \dillselel{z}{
					\dillbout{z}{v}{
						( \dillfwdbang{w}{v}  \pp  \nil   )
					}
				}}
\end{aligned}
\]
\end{example}

\begin{definition}\label{d:nobang}
Given a session type/linear logic proposition $A$, we write $\dillnotunsingular{{A}}$ to denote $A$ without 
 top-level occurrences of `\,$\bang$\,', i.e., $\dillnotunsingular{\dillunt{A}} =  A$ and is  the identity function otherwise.
%\[
%	\begin{aligned}
%		\dillnotunsingular{\dillunt{T}} &  =  T &
%		\dillnotunsingular{\dillnilT} &  = \dillnilT &
%		\dillnotunsingular{\dilloutt{S}{T}} &  =\dilloutt{S}{T}  \\
%		\dillnotunsingular{\dillint{S}{T}} &  = \dillint{S}{T} &
%		\dillnotunsingular{\dillchoicet{S}{T}} &  = \dillchoicet{S}{T} &
%		\dillnotunsingular{\dillselet{S}{T}} &  = \dillselet{S}{T} 
%	\end{aligned}
%\]	
\end{definition}

\begin{definition}[Translating Types/Contexts]
The translation $\sttodillt{\cdot }$ from session types in $\vasco$ to logic propositions in $\pidill$ is given in~\Cref{f:enctype}. % for all defined cases, otherwise undefined.
The translation of types extends to contexts as expected;
we shall write $\dillnotun{\sttodillt{\Gamma}}$ to stand for  $\dillnotunsingular{\sttodillt{\Gamma}}$.
\end{definition}

\begin{figure}[!t]
\begin{mdframed}
\[
	\begin{aligned}
		\sttodillt{\nilT} &  = \dillunt{\dillnilT} \\
		\sttodillt{\lin \vassend{S}{T}} &  =  \dillint{\sttodillt{S}}{\sttodillt{T}}\\
		\sttodillt{\lin \vasreci{S}{T}} &  = \dilloutt{\sttodillt{S}}{\sttodillt{T}} \\
		\sttodillt{\recur{\wn T}} &  = 
			\begin{cases}
				\dillunt{\sttodillt{T}} & \text{If $\server{T}\land \notclient{T} $.}\\
				\dillunt{(\dilloutt{\sttodillt{T}}{\dillnilT})} & \text{If $\notserver{T}\land \client{T} $.}\\
				\dillunt{(\dillchoicet{(\dilloutt{\sttodillt{T}}{\dillnilT})}{\sttodillt{T}})} & \text{If $\notserver{T}\land \notclient{T} $.}  
				%\\
				%\text{Undefined} & \text{If $\server{T}\land \client{T} $.}
			\end{cases}\\
		\sttodillt{\recur{\oc T}} &  =  
			\begin{cases}
				\dillunt{\sttodillt{\dual{T}}} & \text{If $\server{T}\land \notclient{T} $.} \\
				\dillunt{({\dillint{\sttodillt{T}}{\dillnilT}})} & \text{If $\notserver{T}\land \client{T} $.}\\
				\dillunt{(\dillselet{(\dillint{\sttodillt{T}}{\dillnilT})}{\sttodillt{\dual{T}}})} & \text{If $\notserver{T}\land \notclient{T} $.}
				%\\
				%\text{Undefined}  & \text{If $\server{T}\land \client{T} $.}
			\end{cases}\\
	\end{aligned}
\]
\end{mdframed}
\jspace
\caption{Translating session types into logical propositions \label{f:enctype}}
\jspace
\end{figure}

The translation of $\nilT$ and linear input/output types is standard.
As for client and servers, the translation of types follows the translation of processes. 
When the type of the client or server exhibits a server behavior, the type is encoded into an unrestricted type.
Notice that a client type $\recur{\oc T}$ is translated into its dual behavior $\dillunt{\sttodillt{\dual{T}}} $, but a server is not. 
This has to do with the left/right interpretation of judgments in \pidill: servers always occur on the right-hand side; to provide a dual behavior, the client should itself be dual.

\begin{example}[Cont.~\Cref{ex:trans_proc}]\label{ex:transl_type}
Consider the type assignments 
$x:\recur{\wn \nilT}, y:\recur{\oc \nilT}$, $w:\nilT$ and $z:\nilT$. Since  $\notclient{\nilT}\wedge \notserver{\nilT}$, the translation in~\Cref{f:enctype} gives: 
\begin{itemize} 
\item $x:\sttodillt{\recur{\wn\nilT}}=\dillunt{(\dillchoicet{(\dilloutt{\sttodillt{\nilT}}{\dillnilT})}{\sttodillt{
\nilT}})}=\dillunt{(\dillchoicet{(\dilloutt{\dillunt{\dillnilT}}{\dillnilT})}{\dillnilT})}$;
\item $y:\sttodillt{\recur{\oc \nilT}}=\dillunt{(\dillselet{(\dillint{\sttodillt{\nilT}}{\dillnilT})}{\sttodillt{\dual{\nilT}}})}=\dillunt{(\dillselet{(\dillint{\dillunt{\dillnilT}}{\dillnilT})}{~\dillnilT})}$
\end{itemize}
The translations to $z:\dillnilT$ and $w:\dillnilT$ are trivial.
\end{example}

Armed with the translations of processes and types given in~\Cref{f:encproc} and  \Cref{f:enctype}, 
we are now ready to translate a judgment $\Gamma, \Delta \st P$ into $\dillnotun{\sttodillt{\Gamma}{}{}} ;  \sttodillt{\Delta}{}{} \dill \sttodillp{P}{} :u:: A$, for some name $u$.
This translation requires that $\contun{\Gamma}$ and $\neg \contun{\Delta}$, i.e., $\Gamma$ is unrestricted and $\Delta$ is `not' unrestricted; this is the abbreviation $\Gamma \vaslinunsplit \Delta$ (\Cref{note:sep}).

The following auxiliary notion relates contexts that may differ in exactly one assignment:
	\begin{definition}%[\joe{CONTEXT THING NEED TO NAME}] 
		Given contexts  $\Gamma , \Gamma'$, we write  $\Gamma \dillcontrel{z}{T} \Gamma' $ if $(\Gamma = \Gamma' \land z \not \in \dom{\Gamma}) \lor (\Gamma = \Gamma' , z:T) $ for some type $T$. 
		%The relation  $\dillcontrel{z}{T}$ is defined for linear contexts $\Delta,\Delta'$ in a similar way.
	\end{definition}

\begin{definition}[Translating Judgements]
\label{d:transjudgdill}
Given a judgment $\Gamma \vaslinunsplit \Delta \st P$ 
and a name $u$, 	its translation   $\sttodillj{\Gamma \vaslinunsplit \Delta \st P}_{u}{}$ is defined as 
	$$ \dillnotun{\sttodillt{\Gamma'}{}{}} ;  \sttodillt{\Delta'}{}{} \dill \sttodillp{P}{} :: u: A$$ 
	where $\Gamma'$, $\Delta'$, and $A$ are subject to one of the following conditions:
	\begin{itemize}
		\item  $A= \sttodillt{\dual{T}}{}{}$ when $\{ u: T \} \subset \Gamma , \Delta$ with $ (\Gamma \dillcontrel{u}{T} \Gamma') \land (\Delta \dillcontrel{u}{T} \Delta')$;~or
		\item  $A= \dillnilT $ when $ u \not\in \dom{\Gamma , \Delta}$ with $ (\Gamma = \Gamma') \land (\Delta = \Delta')$.
	\end{itemize}
	
\Cref{d:secondtablep1} defines the translation by induction on $P$, assuming that the contexts satisfy the appropriate requirements, i.e., $\contun{\Gamma,\Gamma'} \land \neg \contun{\Delta , \Delta_1,\Delta' }$ and $A,\Gamma'$ and $\Delta'$ are as one of the cases above.  
	
%	The table uses the following abbreviations:
%			\begin{align*}
%				\abconditionO{u}{\Gamma}{\Delta}{\Gamma'}{\Delta'}{A}  
%				& = 
%				u \not \in \dom{\Gamma,\Delta} \land A = \dillnilT \land \Gamma = \Gamma' \land \Delta = \Delta'
%				\\
%				\abconditionT{u}{R}{\Gamma}{\Delta}{\Gamma'}{\Delta'}{A} 
%				& = \{u : R\} \subset \Gamma , \Delta \land A =  \sttodillt{\dual{R}} \land \Gamma \dillcontrel{u}{R} \Gamma' \land \Delta \dillcontrel{u}{R} \Delta'
%				%\\
%				%\abconditionTstar{u}{R}{\Gamma}{\Delta}{\Gamma'}{\Delta'}{A}{P}{z} 
%				%& = \abconditionT{u}{R}{\Gamma}{\Delta}{\Gamma'}{\Delta'}{A} \land u \not \in \fn{P}
%			\end{align*}
	\end{definition}

\begin{table*}[t!]
\small
	\begin{tabular}[c]{| p{.2cm} | c | l |}
		\hline
%		\multicolumn{3}{|c|}{
%			$
%			\begin{aligned}
%				\abconditionO{u}{\Gamma}{\Delta}{\Gamma'}{\Delta'}{A}  %= 				\conditionO{u}{\Gamma}{\Delta}{\Gamma'}{\Delta'}{A} 
%				& = 
%				u \not \in \dom{\Gamma,\Delta} \land A = \dillnilT \land \Gamma = \Gamma' \land \Delta = \Delta'
%				\\
%				\abconditionT{u}{R}{\Gamma}{\Delta}{\Gamma'}{\Delta'}{A} %=				\conditionT{u}{R}{\Gamma}{\Delta}{\Gamma'}{\Delta'}{A} 
%				& = \{u : R\} \subset \Gamma , \Delta \land A =  \sttodillt{\dual{R}} \land \Gamma \dillcontrel{u}{R} \Gamma' \land \Delta \dillcontrel{u}{R} \Delta'
%				\\
%				\abconditionTstar{u}{R}{\Gamma}{\Delta}{\Gamma'}{\Delta'}{A}{P}{z} %=				\conditionTstar{u}{R}{\Gamma}{\Delta}{\Gamma'}{\Delta'}{A}{P}{z} 
%				& = \abconditionT{u}{R}{\Gamma}{\Delta}{\Gamma'}{\Delta'}{A} \land u \not \in \fn{P}
%				\\
%			\end{aligned}
%			$
%		}
%		\\
%		\hline \hline
		& $\Gamma \vaslinunsplit \Delta \st P$ 
		& $\dillnotun{\sttodillt{\Gamma}} ;  \sttodillt{\Delta} \dill \sttodillp{P} :: u : A$ 
		\\
		\hline \hline
		 	1&
			$
				{ \Gamma  \vaslinunsplit \cdot  \st \nil  }
			$
			& 
			$
				{ \dillnotun{\sttodillt{\Gamma}} ;  \cdot \dill \nil :: u : \dillnilT }
			$
			\\
		\hline
		%&& \\
			2&
			$
				\Gamma \vaslinunsplit \Delta_1 , \Delta \st P \pp Q 
			$
			& 
			$
			\begin{array}{ll}
				%\begin{aligned}
					\dillnotun{\sttodillt{\Gamma'}} ; \sttodillt{\Delta_1} , \sttodillt{\Delta'}   \dill \res{w} ( \sttodillp{P} \pp \sttodillp{Q}   )  ::  u: A
				%\end{aligned}
				&
				\text{if } w \not \in \dom{\Gamma, \Delta_1, \Delta}  \land u \notin \fn{P}
				%\land 
%				(
%					\abconditionO{u}{\Gamma}{\Delta}{\Gamma'}{\Delta'}{A}
%					\lor \abconditionT{u}{R}{\Gamma}{\Delta}{\Gamma'}{\Delta'}{A}  
%				)
			\end{array}
			$
			\\
		%&& \\
		\hline
		%&& \\
			3&
			$
			\begin{array}{l}
				\Gamma \vaslinunsplit  \Delta_1 ,  \Delta \st 
				  \res {zv:V} (P \pp Q)
			\end{array}
			$
			& 
			$
			\begin{array}{l}
				\text{If } 
				(u\notin \fn{P})
				\land
				( (\neg \contun{V}) \lor 
				( \contun{V} \land v \not \in \fn{P} \land z \not \in \fn{Q} )) \
%				\land (
%					\abconditionO{u}{\Gamma}{\Delta}{\Gamma'}{\Delta'}{A}
%					\lor 
%					\abconditionT{u}{R}{\Gamma}{\Delta}{\Gamma'}{\Delta'}{A}
%				) 
				\\

				%\begin{aligned}
					\dillnotun{\sttodillt{\Gamma'}} ; \sttodillt{\Delta_1} , \sttodillt{\Delta'}  \dill \res{z}( \sttodillp{P} \pp \sttodillp{Q}\substj{z}{v} ) ::  u: A 
				%\end{aligned}
				%&
				%\text{if } \neg \contun{V} \land
				%(
				%	\abconditionO{u}{\Gamma}{\Delta}{\Gamma'}{\Delta'}{A}
				%	\lor 
				%	\abconditionTstar{u}{R}{\Gamma}{\Delta}{\Gamma'}{\Delta'}{A}{P}{u} 
				%)
			\end{array}
			$
			\\[2mm]
		%&& \\
		\hline
		%&& \\
		%	4&
		%	$
		%	\begin{array}{l}
		%		 \Gamma  \vaslinunsplit \Delta_1 , \Delta \st \\
		%		 \quad  \res {zv:V} (P \pp Q)
		%	\end{array}
		%	$
		%	& 
		%	$
		%	\begin{array}{ll}
				%\begin{aligned}
		%			\dillnotun{\sttodillt{\Gamma'}} ; \sttodillt{\Delta_1} , \sttodillt{\Delta'}  \dill \res{z}( \sttodillp{P} \pp \sttodillp{Q}\substj{z}{v} ) ::  u: A
				%\end{aligned}
		%		&
		%		\begin{array}{l}
		%			\text{if } \contun{V} \land v \not \in \fn{P} \land
		%			\\ z \not \in \fn{Q} \land
		%			(
		%				\abconditionO{u}{\Gamma}{\Delta}{\Gamma'}{\Delta'}{A}
		%				\lor 
		%				\abconditionTstar{u}{T}{\Gamma}{\Delta}{\Gamma'}{\Delta'}{A}{P}{u} 
		%			)
		%		\end{array}
		%	\end{array}
		%	$
		%	\\
		%&& \\
		%\hline
		%&& \\
			4&
			$
			\begin{array}{ll}
				\Gamma , x: \recur{\wn T} \vaslinunsplit \cdot \st  
				  \vasin{\un}{x}{y}{P}
			\end{array}
			$
			&
			$
			\begin{array}{l}
				\text{If } u = x \land x \not \in \fn{P} \text{ and one of the following holds:}
				\\
				\begin{tabular}{ll}
					%\multicolumn{2}{l}{
						$
						%\begin{array}{l}
							%\text{if } \notserver{T}\land \notclient{T} 
							%\\
							\dillnotun{\sttodillt{\Gamma}} ; \cdot  \dill   \dillserv{x}{w}{  
							\dillchoice{w}{	\dillin{w}{y}{ \sttodillp{P} }	}{	\sttodillp{P\substj{w}{y}} }} 
							 :: u: \dillunt{(\dillselet{(\dillint{\sttodillt{T}}{\dillnilT})}{\sttodillt{\dual{T}}})} $
							 &
							 $\text{if } \notserver{T}\land \notclient{T}
						%\end{array}
						$
					%}
					\smallskip
					\\
					\hdashline 
					\noalign{\smallskip}
					$\dillnotun{\sttodillt{\Gamma}} ; \cdot  \dill   \dillserv{x}{w}{  
						{ \sttodillp{P\substj{w}{y}} }} ::  u: \dillunt{\sttodillt{\dual{T}}} $ 
					& 
					\text{if } $\server{T} \land \notclient{T} $
					\smallskip
					\\
					\hdashline 
					\noalign{\smallskip}
					$\dillnotun{\sttodillt{\Gamma}} ; \cdot  \dill   \dillserv{x}{w}{  
						\dillin{w}{y}{ \sttodillp{P} }	} ::  u: \dillunt{(\dillint{\sttodillt{T}}{\dillnilT})}$  
					& 
					\text{if } $\notserver{T} \land \client{T}$
					\smallskip
				\end{tabular}
			\end{array}
			$
			\\
		\hline
		%&& \\
			5&
			$
			\begin{array}{l}
				\Gamma \vaslinunsplit  x: \lin \vasreci{T}{S} , \Delta \st 
				   \vasin{\lin}{x}{y}{P}
			\end{array}	
			$
			& 
			\begin{tabular}{ll} 
			$\dillnotun{\sttodillt{\Gamma}} ; \sttodillt{\Delta} \dill \dillin{x}{y}{\sttodillp{P}}  :: u:\dillint{\sttodillt{T}}{\sttodillt{\dual{S}}}$ & \text{if $u = x$}
			\smallskip
			\\
			\hdashline 
			\noalign{\smallskip}
			$\dillnotun{\sttodillt{\Gamma'}} ; \sttodillt{\Delta'} , x:\dilloutt{\sttodillt{T}}{\sttodillt{S}}  \dill \dillin{x}{y}{\sttodillp{P}} :: u: A$ & 
			\text{otherwise} %\text{with $\abconditionO{u}{\Gamma}{\Delta}{\Gamma'}{\Delta'}{A} \lor \abconditionT{u}{R}{\Gamma}{\Delta}{\Gamma'}{\Delta'}{A} $}
			\smallskip
			\end{tabular}
%			\begin{minipage}{9cm}
%				\vskip 4pt
%				\begin{enumerate}
%				\item If $u = x$
%
%					$
%					\begin{aligned}
%						& \dillnotun{\sttodillt{\Gamma}} ; \sttodillt{\Delta} \dill \dillin{x}{y}{\sttodillp{P}}  :: x:\dillint{\sttodillt{T}}{\sttodillt{\dual{S}}}
%					\end{aligned}
%					$	
%
%				\item $\conditionO{u}{\Gamma}{\Delta}{\Gamma'}{\Delta'}{A} \lor \conditionT{u}{R}{\Gamma}{\Delta}{\Gamma'}{\Delta'}{A} $
%
%				$
%				\begin{aligned}
%					& \dillnotun{\sttodillt{\Gamma'}} ; \sttodillt{\Delta'} , x:\dilloutt{\sttodillt{T}}{\sttodillt{S}}  \dill \dillin{x}{y}{\sttodillp{P}} :: u: A
%				\end{aligned}
%				$
%				
%				\end{enumerate}
%				\vskip 4pt
%			\end{minipage}
			\\
		\hline
		%&& \\
			6&
			$
			\begin{aligned}
				& \Gamma , z: \recur{\oc T} \vaslinunsplit \Delta  \st
				  \res{xy}\vasout{z}{y}{P}
			\end{aligned}	
			$
			&
			$
			\begin{array}{l}
				\text{If } (\neg \contun{T}) \lor (y \not \in \fn{P} \land \contun{T}) \text{ and:}
				\\
				\begin{tabular}{ll}
					%\multicolumn{2}{l}{
							$\dillnotun{\sttodillt{\Gamma'}} , z:\dillselet{\dillint{\sttodillt{T}}{\sttodillt{\dillnilT}}}{\sttodillt{\dual{T}}}  ; \sttodillt{\Delta'}  \dill 
							\dillbout{z}{x}{
							\dillseler{x}{\sttodillp{P}}} ::  u: A  
						%\end{array}
						$
						&
						$
						%\begin{array}{l}
							\text{if } \notserver{T} \land \notclient{T}  
							%\land (\abconditionO{u}{\Gamma}{\Delta}{\Gamma'}{\Delta'}{A} \lor \abconditionT{u}{R}{\Gamma}{\Delta}{\Gamma'}{\Delta'}{A} ) 
							$
					%}
					\smallskip
					\\
					\hdashline 
					\noalign{\smallskip}
					$ \dillnotun{\sttodillt{\Gamma'}} , z:\sttodillt{\dual{T}}  ; \sttodillt{\Delta'}  \dill  \dillbout{z}{x}{
						\sttodillp{P}} :: u: A $
					& 
					\text{if } $\server{T} \land \notclient{T} 
%					\land
%					(\abconditionO{u}{\Gamma}{\Delta}{\Gamma'}{\Delta'}{A} \lor \abconditionT{u}{R}{\Gamma}{\Delta}{\Gamma'}{\Delta'}{A} ) 
					$
					\smallskip
				\end{tabular}
			\end{array}
			$
			\\
		\hline
		%&& \\
			7&
			$
			\begin{array}{l}
				\Gamma \vaslinunsplit z: \lin \vassend{T}{S} , \Delta_1, \Delta \st \\
				\quad   \res{xy}\vasout{z}{x}{(P \pp Q)}
			\end{array}
			$
			&
			\begin{tabular}{ll} 
				\multicolumn{2}{l}{
						$\text{If } (\neg \contun{T}) \lor (x \not \in \fn{P} \cup \fn{Q} \land \contun{T}) \text{ and:}$
				}\\
				\multicolumn{2}{l}{
					$
					\begin{array}{ll}
						%\text{If }z = u \land u,z \not \in \fn{P} \land y \not \in \fn{Q}
						%\\
						\dillnotun{\sttodillt{\Gamma}} ; \sttodillt{\Delta_1} , \sttodillt{\Delta} \dill
						\dillbout{z}{y}{(\sttodillp{P} \pp \sttodillp{Q})} ::   z : \dilloutt{\sttodillt{T}}{\sttodillt{\dual{S}}} & \qquad \quad  \text{if }z = u \land u,z \not \in \fn{P} \land y \not \in \fn{Q}
					\end{array}
					$
				}
				\smallskip
				\\
				\hdashline 
				\noalign{\smallskip}
				\multicolumn{2}{l}{
					$ 
					\begin{array}{ll}
						%\text{If }{z \not = u} \land {u,z \not \in \fn{P}}\land y \not \in \fn{Q} 
						%\land ( u \notin \fn{P}) 
%						\land
%						(
%							\abconditionO{u}{\Gamma}{\Delta}{\Gamma'}{\Delta'}{A}
%							\lor 
%							\abconditionT{u}{R}{\Gamma}{\Delta}{\Gamma'}{\Delta'}{A}
%						)
						%\\
						\dillnotun{\sttodillt{\Gamma'}} ; \sttodillt{\Delta_1} , \sttodillt{\Delta'}  , z : \dillint{\sttodillt{T}}{\sttodillt{S}} \dill 
						\dillbout{z}{y}{(\sttodillp{P} \pp \sttodillp{Q})} :: u: A  & \text{if }{z \not = u} \land {u,z \not \in \fn{P}}\land y \not \in \fn{Q} 
					\end{array}
					$
				}
				\smallskip
			\end{tabular}
			\\
		\hline
					8&
			$
			\begin{array}{l}
			\smallskip
				\Gamma  \vaslinunsplit v:T, x: \lin \oc (T).S , \Delta 
				 \st \ov x\out v.P
			\end{array}
			$
			&
			\begin{tabular}{ll}
				%If $u = x \land \neg \contun{T} \land \notserver{T}$
				%\\
				$ \dillnotun{\sttodillt{\Gamma}} ;  v: \sttodillt{T}, \sttodillt{\Delta}   \dill \dillbout{x}{y}{(\dillforward{v}{y}  \pp  \sttodillp{P})} ::   u : \dilloutt{\sttodillt{T}}{\sttodillt{\dual{S}}} $ & if $u = x \land \neg \contun{T} \land \notserver{T}$
				\smallskip
				\\
				\hdashline 
				\noalign{\smallskip}
%				If  $\neg \contun{T} \land \notserver{T} $
%				\land
%				(
%					\abconditionO{u}{\Gamma}{\Delta}{\Gamma'}{\Delta'}{A}
%					\lor 
%					\abconditionT{u}{R}{\Gamma}{\Delta}{\Gamma'}{\Delta'}{A} 
%				)  
				%\\
				$ \dillnotun{\sttodillt{\Gamma'}} ;  v: \sttodillt{T},  \sttodillt{\Delta'}  , x : \dillint{\sttodillt{T}}{\sttodillt{S}}  \dill \dillbout{x}{y}{(\dillforward{v}{y} \pp P)} ::  u: \sttodillt{\dual{R}}  $ &	if  $\neg \contun{T} \land \notserver{T} $
				\smallskip
			\end{tabular}
			\\
		\hline
		%&& \\
			9&
			$
			\begin{array}{l}
				\Gamma,  v:T \vaslinunsplit x: \lin \oc (T).S , \Delta 
				 \st \ov x\out v.P 
			\end{array}
			$
			&
			\begin{tabular}{ll}
%				If $u = x \land \contun{T} \land \notserver{T}$
%				\\
				$  \dillnotun{\sttodillt{\Gamma}} , v: \sttodillt{T} ;  \sttodillt{\Delta} \dill \dillbout{x}{y}{(\dillfwdbang{v}{y}  \pp  \sttodillp{P})} ::   x : \dilloutt{\sttodillt{T}}{\sttodillt{\dual{S}}}$ & if $u = x \land \contun{T} \land \notserver{T}$

				\smallskip
				\\
				\hdashline 
				\noalign{\smallskip}
%				If  $\contun{T} \land \notserver{T}
%				 \land
%				(
%					\abconditionO{u}{\Gamma}{\Delta}{\Gamma'}{\Delta'}{A}
%					\lor 
%					\abconditionT{u}{R}{\Gamma}{\Delta}{\Gamma'}{\Delta'}{A} 
%				)  
%				$
%				\\
				$ \dillnotun{\sttodillt{\Gamma'}},  v:T  ; \sttodillt{\Delta'}  , x : \dillint{\sttodillt{T}}{\sttodillt{S}}  \dill \dillbout{x}{y}{(\dillfwdbang{v}{y} \pp P)} ::  u: A $ & if  $\contun{T} \land \notserver{T}$
				\smallskip
			\end{tabular}
			\\
		\hline
		%&& \\
			10&
			$
			\begin{aligned}
				& \Gamma , x: \recur{\oc T} \vaslinunsplit v:T , \Delta 
				 \st \ov x\out v.P
			\end{aligned}
			$
			&
			\begin{tabular}{l} 
				If $\notserver{T} \land \notclient{T}  \land \neg \contun{T} \land z \not \in \fn{P}$  $
%				\land
%				(
%					\abconditionO{u}{\Gamma}{\Delta}{\Gamma'}{\Delta'}{A}
%					\lor 
%					\abconditionT{u}{R}{\Gamma}{\Delta}{\Gamma'}{\Delta'}{A} 
%				) 
				$
				\\
				$ \dillnotun{\sttodillt{\Gamma'}} , x:\dillselet{\dillint{\sttodillt{T}}{\dillnilT}}{\sttodillt{\dual{T}}} ;  v:\sttodillt{T} , \sttodillt{\Delta'} \dill \dillbout{x}{z}{ \dillselel{z}{ \dillbout{z}{w}{( \dillforward{v}{w}  \pp  \sttodillp{{P}} )}}} :: u: A$
				\smallskip
				\\
				\hdashline 
				\noalign{\smallskip}
				If $\notserver{T} \land \client{T}  \land \neg \contun{T} \land z \not \in \fn{P}$ $
%				\land
%				(
%					\abconditionO{u}{\Gamma}{\Delta}{\Gamma'}{\Delta'}{A}
%					\lor 
%					\abconditionT{u}{R}{\Gamma}{\Delta}{\Gamma'}{\Delta'}{A} 
%				) 
				$
				\\
				$ \dillnotun{\sttodillt{\Gamma'}} , x:\dillint{\sttodillt{T}}{\dillnilT} ;  v:\sttodillt{T} , \sttodillt{\Delta'}   \dill \dillbout{x}{z}{ \dillbout{z}{w}{ ( \dillforward{v}{w}  \pp  \sttodillp{{P}} )} } ::  u: A $
				\smallskip
			\end{tabular}
			\\
		\hline
		%&& \\
			11&
			$\begin{aligned}
				& \Gamma , x: \recur{\oc T}, v:T\vaslinunsplit \Delta 
				 \st \ov x\out v.P
			\end{aligned}
			$
			& 
			$
			\begin{array}{l}
			\text{If } \notserver{T} \land \notclient{T} \land \contun{T}\land z \notin \fn{P}\\ 
			\dillnotun{\sttodillt{\Gamma'}} , x:\dillselet{\dillint{\sttodillt{T}}{\dillnilT}}{\sttodillt{\dual{T}}} ; v:\sttodillt{T} ; \sttodillt{\Delta'} \dill \dillbout{x}{z}{ \dillselel{z}{ \dillbout{z}{w}{( \dillfwdbang{v}{w}  \pp  \sttodillp{{P}} )}}} :: u: A		\smallskip
			\\
			\hdashline
				\text{If  } \notserver{T} \land \client{T} \land \contun{T} \land z \not \in \fn{P}
%				 \land 
%				(
%					\abconditionO{u}{\Gamma}{\Delta}{\Gamma'}{\Delta'}{A}
%					\lor
%					\abconditionT{u}{R}{\Gamma}{\Delta}{\Gamma'}{\Delta'}{A} 
%				)
				\\
				\begin{aligned}
					& \dillnotun{\sttodillt{\Gamma'}} , x:\dillint{\sttodillt{T}}{\dillnilT} , v:\sttodillt{T};   \sttodillt{\Delta'}  \dill \dillbout{x}{z}{ 
					\dillbout{z}{w}{
						( \dillfwdbang{v}{w}  \pp  \sttodillp{{P}}   )
						}
					} :: u: A
				\end{aligned}
			\end{array}
			$
			\\
	%	&& \\
		\hline
	\end{tabular}
	\caption{From judgments in \vasco to judgments in \pidill (\Cref{d:transjudgdill}).  \label{d:secondtablep1}}
	\jspace
	\end{table*}

Using this translation of judgments, a translation of derivations can be defined exactly as in \Cref{def:firsttransl_judgments}.

We discuss some entries in \Cref{d:secondtablep1} with the following example.

\begin{example}[Cont.~\Cref{ex:trans_proc}] 
%\daniele{DN: replace $\dillnilT$ with $\dillunt{\dillnilT}$ in some places below. Check carefully.}

The translation of judgments defined in~\Cref{d:secondtablep1} relies on the typability in the source language (\vaslang) to determine the exact conditions to the typability of the translated process in~$\pidill$. First, the type derivation of $w:\nilT\st \res{xy}(  \vasin{\un}{ x}{z}{\nil}\pp  \vasout{y}{w}{\nil})$ is essential to build a type derivation for the translated judgment (if one exists):
\begin{mathpar}
\small
\inferrule
{
\inferrule{
\inferrule{\Gamma\st x:\recur{\wn \nilT} \\
 \Gamma, z:\nilT \st \nil}{\Gamma\st \vasin{\un}{x}{z}{\nil}}
 \\
\inferrule{\Gamma\st y:\recur{\oc \nilT}\\\\
\Gamma \st w:\nilT\\
\Gamma\st \nil}
{\Gamma\st \vasout{y}{w}{\nil}}
 }
 {\Gamma \st \vasin{\un}{ x}{z}{\nil}\pp  \vasout{y}{w}{\nil}}
 }
{w:\nilT \st \res{xy} (\vasin{\un}{ x}{z}{\nil}\pp  \vasout{y}{w}{\nil})}
\end{mathpar}
where $\Gamma= x:\recur{\wn \nilT}, y:\recur{\oc \nilT}, w:\nilT$

Second, the translation  $\sttodillj{w:\nilT\st \res{xy}(  \vasin{\un}{ x}{z}{\nil}\pp  \vasout{y}{w}{\nil})}_u$
 corresponds to entry 3 of \Cref{d:secondtablep1}, 
\begin{mathpar}
\inferrule
{}
{w:{\dillnilT};\cdot \dill \res{x}(\sttodillp{\vasin{\un}{ x}{z}{\nil}}\pp \sttodillp{\vasout{x}{w}{\nil}})::u:A}
\end{mathpar}
and we have previously observed that the side conditions hold. Since $u$ is not in the type context, we have $A=\dillnilT$. Now we proceed to build a type derivation for the translated judgment by applying rule $\dilltype{cut}$:
\begin{mathpar}
\small
\inferrule%*[left=\dilltype{cut}]
{
\inferrule{\Pi_1}
{w:{\dillnilT}; \cdot\dill \sttodillp{\vasin{\un}{ x}{z}{\nil}}::x:  \dillunt{B}
} \quad
\inferrule{\Pi_2
\\\\
w:{\dillnilT},y :B; \cdot \dill  \sttodillp{\vasout{y}{w}{\nil}})::u:\dillnilT}{w:{\dillnilT}; x :\dillunt{B} \dill  \sttodillp{\vasout{x}{w}{\nil}})::u:\dillnilT}
}
{w:{\dillnilT};\cdot \dill \res{x}(\sttodillp{\vasin{\un}{ x}{z}{\nil}}\pp \sttodillp{\vasout{x}{w}{\nil}})::u:\dillnilT}
\end{mathpar}
and we will show that there exist derivations $\Pi_1$ and $\Pi_2$  such that the derivation holds.  We recall the translations $\sttodillp{\vasin{\un}{x}{z}{0}}$ and $\sttodillp{\vasout{x}{w}{
\nil}}$ in \Cref{ex:trans_proc}.  
%Next we proceed by analyzing the translation of the  premises. 

Third, the left premise is the translation $\sttodillj{\Gamma\st \vasin{\un}{x}{z}{\nil}}_x$  and corresponds to entry (4) of the~\Cref{d:secondtablep1}. We  recall \Cref{ex:transl_type} for $\dillnotun{\sttodillt{\Gamma}}=w:\dillnilT, x:  (\dillchoicet{(\dilloutt{\dillunt{\dillnilT}}{\dillnilT})}{\dillnilT}) , y: (\dillselet{(\dillint{\dillunt{\dillnilT}}{\dillnilT})}{\dillnilT})$ and use the abbreviation $B=(\dillselet{(\dillint{\dillunt{\dillnilT}}{\dillnilT})}{\dillnilT})$ and strengthening of $\dillnotun{\sttodillt{\Gamma}}$. The derivation $\Pi_1$ is as follows:
\begin{mathpar}
\small
\inferrule*[left=\dilltype{\dillunt{R}}]{
\inferrule*[left=\dilltype{\dillselet{}{}R}]{
\inferrule{
\inferrule{
 w:{\dillnilT}; \cdot  \dill \nil :: v: \dillnilT
}
{ w:{\dillnilT}; z:\dillnilT \dill \nil :: v: \dillunt{\dillnilT}}
}
{ w:{\dillnilT}; \cdot\dill \dillin{v}{z}{\nil} ::v: \dillint{\dillunt{\dillnilT}}{\dillnilT}} \\
 w:{\dillnilT}; \cdot\dill \nil :: v: \dillnilT
}
{ w:{\dillnilT}; \cdot \dill   \dillchoice{v}{	\dillin{v}{z}{ \nil }	}{\nil }
							 :: v: \dillselet{\dillint{(\dillunt{\dillnilT}}{\dillnilT})}{\dillnilT} }
}
{ w:{\dillnilT}; \cdot\dill \dillserv{x}{v}{  \dillchoice{v}{	\dillin{v}{z}{ \nil }}{\nil }} ::x: \dillunt{(\dillselet{(\dillint{\dillunt{\dillnilT}}{\dillnilT})}{\dillnilT})}
}
\end{mathpar}
%.

Fourth, the right premise is the translation $\sttodillj{\Gamma\st \vasout{y}{w}{0}\substj{x}{y}}_u$ and  corresponds to the entry (11a) of the~\Cref{d:secondtablep1}. The derivation $\Pi_2$ is as follows:

\begin{mathpar}
\small
%\inferrule{
\inferrule*[left=\dilltype{copy}]{
\inferrule{
\inferrule{
\dillnotun{\sttodillt{\Gamma'}} ; \cdot  \dill  \dillfwdbang{w}{v} :: v :\dillunt{\dillnilT}\quad 
\dillnotun{\sttodillt{\Gamma'}} ; z:\dillnilT    \dill   \nil  :: u :\dillnilT}
{\dillnotun{\sttodillt{\Gamma'}} ; z:(\dillint{\dillunt{\dillnilT}}{\dillnilT})   \dill \dillbout{z}{v}{( \dillfwdbang{w}{v}  \pp  \nil )} :: u :\dillnilT
}}{
\dillnotun{\sttodillt{\Gamma'}} ; z:\dillselet{(\dillint{\dillunt{\dillnilT}}{\dillnilT})}{\dillnilT}   \dill \dillselel{z}{ \dillbout{z}{v}{( \dillfwdbang{w}{v}  \pp  \nil )}} :: u: \dillnilT
}}
{\dillnotun{\sttodillt{\Gamma'}}  ; \cdot  \dill \dillbout{y}{z}{ \dillselel{z}{ \dillbout{z}{v}{( \dillfwdbang{w}{v}  \pp  \nil )}}} :: u: \dillnilT}
%{\dillnotun{\sttodillt{\Gamma'}}; x :\dillunt{\dillselet{(\dillint{\dillnilT}{\dillnilT})}{\dillnilT}} \dill \dillbout{x}{z}{ \dillselel{z}{ \dillbout{z}{v}{( \dillfwdbang{w}{v}  \pp  \nil )}}}::u:\dillnilT}
\end{mathpar}
where we strengthen $\dillnotun{\sttodillt{\Gamma}}$ to  $\dillnotun{\sttodillt{\Gamma'}}= y: (\dillselet{(\dillint{\dillunt{\dillnilT}}{\dillnilT})}{\dillnilT}), w:{\dillnilT}$.
\end{example}

We have the following property, which holds by definition of the entries of \Cref{d:secondtablep1}:

\begin{theorem}[Type preservation]\label{thm:second_type_pres}
	If $\Gamma \vaslinunsplit \Delta \st P$ then $ \dillnotun{\sttodillt{\Gamma'}{}{}} ;  \sttodillt{\Delta'}{}{} \dill \sttodillp{P}{} :: u: A$ is well-typed in \pidill, with  $A, \Gamma'$ and $\Delta'$ as in \Cref{d:transjudgdill}.
\end{theorem}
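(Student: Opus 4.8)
The plan is to argue by induction on the derivation of $\Gamma \vaslinunsplit \Delta \st P$ in \vasco, using the refined input/output rules of \Cref{f:vasrules} together with \vastype{Par}, \vastype{Res} and \vastype{Nil}. Because the translation of judgments (\Cref{d:transjudgdill}) is itself defined by induction on the shape of $P$, and \Cref{f:encproc} translates each process constructor by pattern-matching on both its syntactic shape and the session type of its subject, every inductive case lines up with exactly one row (or one sub-line) of \Cref{d:secondtablep1}. For each such case the task is to exhibit a concrete \pidill derivation whose conclusion is precisely the judgment prescribed by that row, assembled from the \pidill derivations supplied by the induction hypothesis on the immediate subprocesses of $P$. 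The predicate side-conditions ($\server{\cdot}$, $\client{\cdot}$, $\contun{\cdot}$ and their negations) and the free-name conditions attached to each row select which row is active, and hence which rule of \Cref{f:dilltyperule} closes the case.

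I would first dispatch the structural cases. For $P = \nil$ (row 1) a single use of \dilltype{\dillnilT R} suffices, the unrestricted context $\dillnotun{\sttodillt{\Gamma}}$ being admissible by weakening. Parallel composition and session restriction (rows 2--3) are realised by \dilltype{cut} (and \dilltype{cut^!} when the shared channel carries a server type): the induction hypothesis yields \pidill-typings of $\sttodillp{P}$ and $\sttodillp{Q}$, and the free-name side-conditions guarantee that the linear context $\sttodillt{\Delta}$ splits as the cut rule demands, the cut formula being the translation of the type of the restricted covariable. For a linear input (row 5) one applies the right rule for $\dillint$ (receiving along $u$) when $u = x$, reading off the conclusion $\dillint{\sttodillt{T}}{\sttodillt{\dual{S}}}$ directly from \Cref{f:enctype}, and \dilltype{\dilloutt{}{} L} when $x$ is merely used, so that $x$ carries the dual type $\dilloutt{\sttodillt{T}}{\sttodillt{S}}$ on the left.

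The delicate work lies in the client/server rows (4, 6--11), where the translation branches on whether the subject's type is purely linear ($\notserver{T}\land\notclient{T}$), server-like ($\server{T}\land\notclient{T}$), or client-like ($\notserver{T}\land\client{T}$), emitting bound outputs, replicated forwarders $\dillfwdbang{v}{y}$, and explicit selections or branchings. In each branch one must verify that the emitted \pidill term is typable with exactly the proposition assigned by $\sttodillt{\recur{\wn T}}$ or $\sttodillt{\recur{\oc T}}$. For instance, an unrestricted input whose type is neither client nor server (row 4, first line) becomes a server offering a branch, and its type $\dillunt{(\dillchoicet{(\dilloutt{\sttodillt{T}}{\dillnilT})}{\sttodillt{\dual{T}}})}$ is produced by \dilltype{\dillunt{} R} over \dilltype{\dillchoicet{}{} L}, whose two premises are exactly the translations of $\sttodillp{P}$ and $\sttodillp{P\substj{w}{y}}$ returned by the induction hypothesis. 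The output rows (8--11) require placing a forwarder $\dillforward{v}{y}$ (or its replicated variant) in parallel with $\sttodillp{P}$ beneath nested bound outputs, closing with \dilltype{copy} whenever the subject is a client channel drawn from the unrestricted context.

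The main obstacle I anticipate is maintaining the invariant imposed on the \emph{conclusion} by \Cref{d:transjudgdill}: that $A = \sttodillt{\dual{T}}$ exactly when $u{:}T$ occurs in the source context (with $\Gamma \dillcontrel{u}{T} \Gamma'$ and $\Delta \dillcontrel{u}{T} \Delta'$) and $A = \dillnilT$ otherwise. Since \pidill judgments single out one right-hand object $u{:}A$ whereas \vasco judgments do not, each case must check that the distinguished $u$ is chosen consistently across premises and conclusion and that the split of $\sttodillt{\Delta}$ respects the pairwise-disjointness discipline of $\dill$. This is exactly why several rows (5, 7, 8, 9) split into a $u = x$ sub-case --- where the translated process \emph{offers} the session along $u$, via a right rule, and $A$ is the translated dual type --- and a $u \neq x$ sub-case --- where it \emph{uses} the session, via a left rule, and $A$ is carried unchanged from the corresponding premise. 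Establishing this uniform invariant, together with the auxiliary fact that the predicate conditions are mutually exclusive and exhaustive (using that $\server{T}\land\client{T}$ is ruled out by typing, so that exactly one row applies), is the crux; once it is secured, each remaining obligation collapses to a routine check that the displayed inference is an instance of the corresponding rule of \Cref{f:dilltyperule}.
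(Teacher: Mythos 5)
Your proposal takes essentially the same route as the paper: the paper's proof is precisely the remark that the theorem ``holds by definition of the entries of \Cref{d:secondtablep1}'', i.e., the translation of judgements is constructed row by row so that each prescribed \pidill judgment is derivable, and your induction on the \vasco typing derivation---one case per row, each closed by the corresponding rule of \Cref{f:dilltyperule} under the $u = x$ versus $u \neq x$ and dualization discipline of \Cref{d:transjudgdill}---is exactly that argument made explicit.

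One detail in your row-4 illustration is wrong, and it touches the very invariant you yourself identify as the crux. Since $u = x$ in that row, the translated server \emph{offers} along $u$, so its type is the translation of the \emph{dual} session type, namely $\sttodillt{\recur{\oc T}} = \dillunt{(\dillselet{(\dillint{\sttodillt{T}}{\dillnilT})}{\sttodillt{\dual{T}}})}$, and not $\dillunt{(\dillchoicet{(\dilloutt{\sttodillt{T}}{\dillnilT})}{\sttodillt{\dual{T}}})}$ as you wrote; accordingly the case construct is closed by the \emph{right} rule for branching (the rule the paper invokes as $\dilltype{\dillselet{}{}R}$ in its derivation $\Pi_1$), beneath $\dilltype{\dillunt{} R}$, and not by $\dilltype{\dillchoicet{}{} L}$, which is a left rule and cannot apply to the channel being offered. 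Your closing paragraph states the correct discipline---right rules and dual types when offering along $u$, left rules otherwise---so this is a local slip to repair when the case is written out, not a flaw in the approach.
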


Notice that the translations of typable \vasco processes are not necessarily typable in \pidill.  
We shall concentrate on  processes in $\vaslang$ that are typable in $\pidill$:
\begin{notation}
	We write $\dillnotun{\sttodillt{\Gamma'}};   \sttodillt{\Delta'}  \dill \sttodillp{{P}}  :: u: \sttodillt{\dual{S}} $ whenever $\sttodillj{\Gamma ,\Delta  \st P}_u$  holds, with $\Gamma \dillcontrel{u}{S} \Gamma'$ and $\Delta \dillcontrel{u}{S} \Delta'$. 
\end{notation}

We can finally define $\dilllang$:

\begin{definition}[$\dilllang$]
\label{d:dilllang}
Let $u$ be a name.	We define:
	\begin{align*}
	 \dilllang = \{ P \in \vasco \mid &  \ \Gamma \vaslinunsplit \Delta \st P \land 	\Gamma \dillcontrel{u}{S} \Gamma' 
	 \land  \Delta \dillcontrel{u}{S} \Delta'  \\	& \land \dillnotun{\sttodillt{\Gamma'}};   \sttodillt{\Delta'}  \dill \sttodillp{{P}}  :: u: \sttodillt{\dual{S}} 	\} 
	\end{align*}
%		$$ \dilllang = \{ P \in \vasco \mid  \exists \Gamma , \Delta, \Gamma', \Delta', u, S \ s.t. \ \Gamma \vaslinunsplit \Delta \st P \land 
%	\Gamma \dillcontrel{u}{S} \Gamma' \land  \Delta \dillcontrel{u}{S} \Delta'
%\land 
%	% \sttodillj{\Gamma , \Delta \st P}_{u} =
%	\dillnotun{\sttodillt{\Gamma'}};   \sttodillt{\Delta'}  \dill \sttodillp{{P}}  :: u: \sttodillt{\dual{S}} 	\} 
%	$$
where contexts and types mentioned are existentially quantified.
\end{definition}

\subsection{Results}
%We state our main results:
\begin{theorem}[$\dilllang \subset \lvllang$]\label{thm:main_result}
	Let $P\in \vaslang$ such that $\Gamma \st P$, for some context $\Gamma$. If there exists $  u$  such that  $\sttodillj{\Gamma \st P}_u $ holds, then there exists $l$ such that $ \sttoltj{\Gamma \st P}{l}{}$ holds.
	%if $p \in \dilllang$ with $\sttodillj{\Gamma \st P}$ then $p \in \lvllang$ 
	%and $\exists \ l \ s.t. \ \sttoltj{\Gamma , \Delta \st p}{l}{}$
\end{theorem}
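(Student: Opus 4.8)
The plan is to read off a \emph{well-founded ordering} on the shared (server) channels of $P$ from the \pidill typing derivation supplied by $\sttodillj{\Gamma \st P}_u$, and to convert this ordering into a level function $l$ witnessing $\sttoltj{\Gamma \st P}{l}{}$. The key observation is that the only genuinely non-trivial side condition of the weight-based system (rules \lvltype{Un-In_1} and \lvltype{Un-In_2}) requires that every active output $b$ in the body of a server on $x$ satisfy $\levelof{b} < \levelof{x}$; the remaining side conditions are the level equalities $\levelof{x} = \levelof{y_2}$ tying a session channel to its linear continuation (rules \lvltype{Lin-In_1} and \lvltype{Lin-Out}). Thus it is enough to construct an $l$ that (i)~assigns one common level to each session endpoint and all of its sequential continuations, and (ii)~strictly decreases along server invocations. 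The whole point is that \pidill typability guarantees exactly the acyclicity that (ii) demands.

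First I would isolate the acyclicity built into \pidill. Inspecting rules \dilltype{\dillunt{} R} and \dilltype{cut^!}, a server $\dillserv{u}{y}{Q}$ offering $\dillunt{A}$ is typed by $\Gamma ; \cdot \dill Q :: y:A$ under the \emph{same} shared environment $\Gamma$ used to cut it in, i.e.\ \emph{without} the assumption $u:A$. Hence the shared environment grows monotonically from the root of the derivation toward the leaves, and a shared channel may be invoked (by \dilltype{copy}) only inside servers introduced strictly later along the spine of \dilltype{cut^!} applications. Writing $v \prec u$ whenever $v$ is consumed by a \dilltype{copy} within the body of the server named $u$, I would show that $\prec$ is contained in this ``introduced-before'' order, hence irreflexive and well-founded on the finitely many shared channels of $\sttodillp{P}$. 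In particular no server invokes itself, directly or transitively; this is exactly what excludes processes such as $P_{\ref{ex:infinite}}$.

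Next I would transport $\prec$ back to the names of $P$. Through the translations of \Cref{f:encproc} and \Cref{f:enctype}, every \vasco session with server or client behaviour (types $\recur{\wn T}$, $\recur{\oc T}$) maps to a shared \pidill channel, and a \vasco output on a client channel $b$ is realised as a \dilltype{copy} on the corresponding shared channel. I would then fix $l$ so that (a)~all endpoints of one session together with the fresh continuations introduced by the continuation-passing discipline receive the same natural number, and (b)~the shared channels are topologically sorted so that $v \prec u$ implies $l(v) < l(u)$; such an $l$ exists because $\prec$ is well-founded. With this $l$, the final step is an induction on the \vasco derivation of $\Gamma \st P$ (equivalently, an analysis following the entries of \Cref{d:secondtablep1}) showing that the matching entry of the weight-based translation is a valid \pilvl derivation: the structural rules transfer verbatim, the continuation equalities hold by~(a), and the server condition $\forall b \in \outsubj{\sttoltp{Q}{}},\ \levelof{b} < \levelof{x}$ holds by~(b) together with the acyclicity lemma.

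I expect the main obstacle to be the simultaneous satisfaction, in step three, of the two kinds of constraints: linear continuations force level \emph{equalities} along a session, while server invocations force strict \emph{decreases} across sessions, and one must show these never conflict. The delicate point is to check that following the continuation-passing chain of a session can never circle back to a \dilltype{copy} on a server of equal or greater level --- in particular for the ambiguous cases of \Cref{f:encproc} and \Cref{f:enctype}, where a client/server whose type reveals neither behaviour is encoded with branching and selection. Once the acyclicity lemma is stated so as to account for these encoded invocations, the remaining bookkeeping in the induction is routine, and termination of the resulting \pilvl process follows from \Cref{t:lvlsn}.
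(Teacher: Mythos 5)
Your core intuition---that \pidill typability hides a well-founded stratification of server invocations which can be converted into a level function---is indeed the reason the theorem is true, but your constraint analysis is incomplete, and the construction as stated produces level functions that \pilvl rejects. Besides the session-internal equalities $\levelof{x}=\levelof{y_2}$ and the server inequalities, \pilvl imposes a third family of constraints that your steps (a) and (b) never account for: \emph{rigid level equalities across different sessions}, coming from type matching in communication. In rule \lvltype{Un-Out_1} the value sent on a client channel of type $\lvlunct{n}{V}$ must have exactly type $V$, and in rules \lvltype{Un-In_1}/\lvltype{Un-In_2} the server's parameter has exactly type $V$; since levels are baked into types, every channel passed as an argument to a server must carry the \emph{same} level as that server's parameter, and co-variables share a level through the pairing $\dualjoin{V}{\dual{V}}$ used by \lvltype{Res}. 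Concretely, consider $P = \vasres{xx'}{\vasres{ww'}{(\vasin{\un}{x}{y}{\vasout{y}{v}{\nil}} \pp \vasin{\un}{w}{z}{\nil} \pp \vasout{x'}{w'}{\nil})}}$, a terminating client/server process. Your relation contains only $y \prec x$, so a topological sort may assign $\levelof{y}=1$, $\levelof{x}=2$, $\levelof{w}=5$; but \pilvl forces $\levelof{w'}=\levelof{y}$ (argument vs.\ parameter) and $\levelof{w'}=\levelof{w}$ (co-variables), hence $\levelof{w}<\levelof{x}$, and your step-three verification fails at the typing of $\vasout{x'}{w'}{\nil}$. The object you must sort is the quotient of the full constraint graph (strict edges from server bodies, equality edges from continuations, co-variables, \emph{and} argument/parameter matching), and proving that quotient acyclic is strictly stronger than proving $\prec$ acyclic: a cycle can close through equality edges alone, which is exactly how \pilvl rules out indirect ping-pong variants of $P_{\ref{ex:infinite}}$ in which a server invokes another server that invokes it back through a passed channel. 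Treating this as ``routine bookkeeping'' is the gap.

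A secondary but related flaw: your justification that $\prec$ is contained in the ``introduced-before'' order along the \dilltype{cut^!} spine does not hold as stated. Channels also enter the unrestricted context via \dilltype{\dillunt{} L} applied to \emph{received} names (and via the replicated forwarders $\dillfwdbang{\cdot}{\cdot}$ that the translation of \Cref{f:encproc} inserts); for a server whose body invokes a channel it has just received, the copied channel is introduced \emph{inside} the invoker's body, not before it. Irreflexivity of $\prec$ survives (such parameters have no static server body), but these parameter-copies are precisely the ones that generate the cross-session equalities above, so the well-foundedness lemma has to be formulated over the quotient graph and proved by tracking how \pidill mediates argument passing through fresh names and forwarders---this is where the real work of the proof lies. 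Note also that the paper proceeds differently: by induction on the structure of $P$, building the \pilvl derivation directly and resolving the equality/inequality tension locally via invariants such as ``levels of types without server behavior can be decreased at will'' and structural properties of translated \pidill judgments (no servers in left-hand contexts; the right-hand name not guarded by servers). Your global-sort strategy could be repaired, but only after enlarging the constraint system and establishing well-foundedness of its quotient from the \pidill derivation.
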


The proof of \Cref{thm:main_result} is 
	by induction on the structure of $P$. 
	We exploit a number of invariant properties for the type systems for \pidill and \pilvl, including:
	\begin{itemize}
		\item In $\pidill$, judgments for typed processes never exhibit servers on the left-hand side.
		\item In \pilvl, levels for types that do not exhibit server behavior can be decreased at will.
		\item  The type system of \pidill ensures that the name on the right-hand side is not guarded by servers.
	\end{itemize}
%\end{proof}

\begin{theorem}[$\lvllang \not \subset \dilllang$]
\label{thm:wnotinl}
	 $\exists P \in \lvllang$  with $\Gamma \st P$ and $\sttoltj{\Gamma \st P}{l}{} $ for some $l$  such that $ \nexists  \ z \ s.t. \ \sttodillj{\Gamma \st P}_z $.
\end{theorem}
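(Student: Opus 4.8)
The plan is to prove the statement by exhibiting a concrete witness and checking the three requirements implied by \Cref{d:lvllang} and \Cref{d:dilllang}: that $P$ is typable in \vasco, that its \pilvl-translation is typable (so $P\in\lvllang$), and that its \pidill-translation is \emph{not} typable for any right-hand name (so $P\notin\dilllang$). A natural witness is the process discussed in the remark of \Cref{s:dilllang}, namely
\[
P=\res{xy}(\, \res{wv}\vasout{x}{v}{\vasin{\un}{w}{a}{\nil}}\pp \vasin{\un}{y}{c}{\vasout{c}{b}{\nil}}\,),
\]
which is typable in \vasco with $b:\nilT\st P$ under $x:\recur{\oc(\recur{\oc\nilT})}$, $y:\recur{\wn(\recur{\oc\nilT})}$, $w:\recur{\wn\nilT}$, and $v:\recur{\oc\nilT}$. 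This fixes some $\Gamma$ with $\Gamma\st P$ and settles the first requirement.

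For $\lvllang$-membership I would exhibit an explicit level function $l$ witnessing $\sttoltj{\Gamma\st P}{l}{}$. Set $\levelof{x}=\levelof{y}=2$ and $\levelof{w}=\levelof{v}=\levelof{c}=1$ (the names $a,b$ carry type $\nilT$, so their level is immaterial). All channels here are unrestricted, so the derivation-translation of \Cref{def:firsttransl_judgments} only appeals to the server rules \lvltype{Un-In_1}/\lvltype{Un-In_2}; no linear level-equalities arise. The server obtained from $\vasin{\un}{w}{a}{\nil}$ has an empty body, so its side-condition on active outputs is vacuous. The server obtained from $\vasin{\un}{y}{c}{\vasout{c}{b}{\nil}}$ has a single active output on $c$, and $\levelof{c}=1<2=\levelof{y}$ satisfies $\forall d\in\outsubj{\cdot},\ \levelof{d}<\levelof{y}$. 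The output on $x$ sits at top level, under no server, and imposes no constraint. Hence $\sttoltj{\Gamma\st P}{l}{}$ holds, so $P\in\lvllang$ and, by \Cref{t:lvlsn}, $P$ terminates.

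The crux is showing $P\notin\dilllang$: that \emph{no} name $z$ yields a valid $\sttodillj{\Gamma\st P}_z$. The argument pinpoints the subterm $P_1=\res{wv}\vasout{x}{v}{\vasin{\un}{w}{a}{\nil}}$, an unrestricted bound output on $x$ whose transmitted object $v$ has type $T=\recur{\oc\nilT}$. By \Cref{d:predsts} this type satisfies $\client{T}=\ttrue$ while $\server{T}=\ffalse$. Now every entry of \Cref{d:secondtablep1} matching a bound unrestricted output $\res{xy}\vasout{z}{y}{P}$ (entry~6) carries the side-condition $\notclient{T}$; symmetrically, the process-level clause for $\res{xy}\vasout{z}{y}{P}$ in \Cref{f:encproc} is defined only when $\notclient{T}$ holds. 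Since $\client{T}$ is true, no case applies, so the translation of the judgment for $P_1$ is undefined. As the translation of $P$ proceeds compositionally (via entry~3 and then entry~6), $\sttodillj{\Gamma\st P}_z$ is undefined for every choice of $z$.

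To make this a genuine impossibility rather than a defect of one definition, I would invoke the structural reason from the remark: a would-be translation of $P_1$ must type an inner server $\dillserv{w}{a'}{\cdot}$, which by rule \dilltype{\dillunt{} R} requires $w$ to be the name provided on the right; but the enclosing bound output on $x$ already places $w$ in the context (via \dilltype{copy}), violating the domain-disjointness of \pidill judgments. The key point is that the client type of $v$ is \emph{forced}: $v$ is the co-variable of the server channel $w$, so duality pins $v:\recur{\oc\nilT}$ in every \vasco typing of $P$; thus $\client{T}$ holds uniformly and the obstruction cannot be dodged by choosing a different typing or a different $z$. Establishing precisely this intrinsic, derivation-independent failure is the main obstacle, and the duality argument is what resolves it.
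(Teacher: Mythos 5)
Your proof is correct and follows the same template as the paper's own proof of \Cref{thm:wnotinl} --- exhibit one concrete witness and check \vasco-typability, \pilvl-typability of its translation, and failure on the \pidill side --- but your witness is genuinely different, and so is the nature of the failure. The paper's witness is
\[
\vasres{xy}{(\vaspara{\vasin{\lin}{x}{z}{\vasin{\un}{z}{w}{\nil}}}{\vasres{st}{\vasout{y}{s}{(\vaspara{\vasres{uv}{(\vasout{t}{u}{\nil})}}{\nil})}}})}
\]
whose key subterm $\vasin{\lin}{x}{z}{\vasin{\un}{z}{w}{\nil}}$ \emph{receives} a name and then serves on it. For that process the \pidill-translation of \Cref{f:encproc} is \emph{defined}, and exclusion from \dilllang comes from a genuine typing failure: \pidill has no rule that types a replicated input whose subject lies in the context (servers can only be offered on the right), which is exactly the first invariant listed after \Cref{thm:main_result}. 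Your witness, lifted from the paper's own remark, is dual in spirit: it \emph{sends} the client endpoint and keeps the server endpoint. Here the translation is not even defined, since both clauses for unrestricted bound output in \Cref{f:encproc} (and both sub-cases of entry 6 in \Cref{d:secondtablep1}) require $\notclient{T}$, while $\client{\recur{\oc \nilT}}=\ttrue$. By \Cref{d:dilllang}, undefinedness already yields $P\notin\dilllang$, so your argument is formally sound; but the exclusion holds ``by partiality'' of the translation rather than by a failed derivation, which is weaker evidence that \pidill itself cannot capture the process. Your appeal to the remark's structural argument --- typing $\dillserv{w}{a'}{P_1'}$ under the bound output of $w$ would force $w$ onto the right of the judgment while $w$ already occurs in the context --- is precisely the supplement needed to make the witness meaningful rather than a definitional artifact, and it is the same root cause that kills the paper's witness.

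One caution: the paper claims only that the remark's process is in \vaslang; membership in \lvllang is \emph{your} claim and carries the real proof burden here. It does appear to hold: the only weight-theoretic constraints are those of $\lvltype{Un-In_1}$/$\lvltype{Un-In_2}$, and your level function satisfies them (the server at level $2$ has a single active output on $c$ at level $1$; the server on $w$ has an empty body). This is, in fact, more detail than the paper supplies for its own witness, whose \pilvl-typability is likewise asserted without an explicit level function or derivation.
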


To prove \Cref{thm:wnotinl}, it suffices to consider the \vasco process
	\[
		P = 	
		\vasres{xy}{(
			\vaspara{ 
				\vasin{\lin}{x}{z}{\vasin{\un}{z}{w}{\nil}} 
			}
			{
				\vasres{st}{
					\vasout{y}{s}{(
						\vaspara{
							\vasres{uv}{(\vasout{t}{u}{\nil})}
						}
						{
							\nil
						}
					)}
				}
			}
		)}
	\]
Clearly, $P$ is terminating:
	\[
		\begin{aligned}
			P &\redd 
					\vasres{st}{(
						\vaspara{ 
							\vasin{\un}{s}{w}{\nil}
						}
						{
							\vasres{uv}{(\vasout{t}{u}{\nil})}
						}
					)}
				 \redd 
					\vasres{st}{
						\vasin{\un}{s}{w}{\nil}
					}
		\end{aligned}
	\]
Process $P$ can be typed so as to establish $P \in  \vaslang$. 
Also, there is a level function that makes its translation into \pilvl typable. Hence, $P \in \lvllang$. 
However, its translation into \pidill is not typable, so $P \not\in \dilllang$.

%\begin{remark}
%	There are many interesting counter examples of $\lvllang \not \subset \dilllang$
%
%	\[
%	\begin{aligned}
%		1 & \vasin{\un}{x}{y}{\vasin{\un}{x}{z}{P}} &
%		2 & \vasin{\un}{x}{y}{\vasin{\un}{z}{w}{P}} \\
%		3 & (\vaspara{\vasin{\un}{x}{y}{P}}{\vasin{\un}{x}{z}{Q}}) &
%		4 & (\vaspara{\vasin{\un}{x}{y}{P}}{\vasin{\un}{z}{w}{Q}}) \\
%		5 & \vasin{\lin}{x}{y}{\vasin{\un}{y}{z}{P}} &
%		6 & \vasres{yz}{(\vasout{x}{{y}}{(
%			\vaspara{\vasres{wv}{\vasout{z}{w}{P}}}{Q_x}
%			)})}  \\
%		7 & \vasres{yz}{\vasout{x}{y}{\vasin{\un}{z}{w}{P}}} &
%		8 & \vasin{\un}{x}{y}{ \vasres{wv}{\vasout{y}{w}{P}}} \\
%	\end{aligned}
%	\]
%	The proof \Cref{thm:wnotinl} can be seen as $5$ and $6$ cut together as these illistarte more interesting case of examples of the differences between the two type systems.
%\end{remark}

\section{Closing Remarks}
\label{s:close}
We presented a comparative study of type systems for concurrent processes in the $\pi$-calculus, from the unifying perspective of termination and session types. 
To our knowledge, this is the first study of its kind.
Even by focusing on only three different type systems, we were confronted with technical challenges connected with the intrinsic differences between them.
The typed process model \vasco~\cite{V12}, focused on session-based concurrency, admits a rather broad class of processes, exploiting a clear distinction between linear and unrestricted resources, implemented via context splitting.  
The typed process model \pilvl combines features from type systems that target the termination property~\cite{DS06} and type systems for sessions. 
Finally, the typed process model \pidill~\cite{CairesP10} rests upon a firm logical foundation, and its control of clients and servers is directly inherited from the logical principles of the exponential $!A$.
Notice that \pidill is unique among type systems for the $\pi$-calculus in that it ensures  protocol fidelity, deadlock-freedom, confluence, and strong normalization/termination for typed processes.

The main take-away message is that the Curry-Howard correspondence is strictly weaker than weight-based approaches for enforcing the termination property. 
Hence, the  control of server/client interactions that is elegantly enabled by the copying semantics of  $!A$ turns out to be rather implicit when contrasted to weight-based techniques.
Interestingly,  Dardha and P\'{e}rez arrived to a similar conclusion in their comparative study of type systems focused on the deadlock-freedom property~\cite{DardhaP15,DBLP:journals/jlap/DardhaP22}: type systems based on the Curry-Howard correspondence can detect strictly less deadlock-free processes than other, more sophisticated type systems. 
Notice that the study in~\cite{DardhaP15,DBLP:journals/jlap/DardhaP22} considers only finite processes, without input-guarded replication (so all process are terminating).

Immediate items for current and future work include incorporating other type systems into our formal comparisons. 
The type systems by Sangiorgi~\cite{DBLP:journals/mscs/Sangiorgi06} and by Yoshida et al.~\cite{DBLP:conf/lics/YoshidaBH01} are very appealing candidates. %; we conjecture that it induces a class of processes closely related to \dilllang. 
Also, Deng and Sangiorgi proposed several type systems for termination.  
Here we considered only the simplest variant, which induces the class \lvllang and is already different from \dilllang; it would be interesting to consider the other variants.

\section*{Acknowledgments}
%This work has been supported by 
We  acknowledge the support of 
the Dutch Research Council (NWO) under project No. 016.Vidi.189.046 (Unifying Correctness for Communicating Software).
Daniele Nantes-Sobrinho has been supported by the EPSRC Fellowship `VeTSpec: Verified Trustworthy Software Specification' (EP/R034567/1) and Edital DPI/DPG n.~03/2020.

We are grateful to 
Davide Sangiorgi,
Nobuko Yoshida,  and 
the anonymous reviewers for useful suggestions and remarks.

%%
%% The acknowledgments section is defined using the "acks" environment
%% (and NOT an unnumbered section). This ensures the proper
%% identification of the section in the article metadata, and the
%% consistent spelling of the heading.
%\begin{acks}
%To Robert, for the bagels and explaining CMYK and color spaces.
%\end{acks}

%%
%% The next two lines define the bibliography style to be used, and
%% the bibliography file.
%\bibliographystyle{ACM-Reference-Format}
\bibliographystyle{alpha}
\bibliography{my_biblio.bib}

\ifarxiv
%%
%% If your work has an appendix, this is the place to put it.
\appendix

\newpage 
\onecolumn
\input{appendix}
\else 
\fi

\end{document}
\endinput
%%
%% End of file `sample-sigconf.tex'.